\newif\ifOneCol
\newif\ifIEEE
\theoremstyle{plain}
\newtheorem{theorem}{Theorem}
\newtheorem{lemma}{Lemma}
\newtheorem{proposition}{Proposition}
\newtheorem{conjecture}{Conjecture}
\newtheorem{claim}{Claim}
\theoremstyle{definition}
\newtheorem{definition}{Definition}
\newtheorem{remark}{Remark}
\newtheorem{experiment}{Experiment}
\newtheorem{example}{Example}
\newtheorem{openprob}{Open Problem}
\newtheorem{question}{Question}
\theoremstyle{plain}
\theoremstyle{definition}
\newtheorem{remark}{Remark}
\newtheorem{openprob}{Open Problem}
\newtheorem{claim}{Claim}
\newif\ifCommentsOn
\definecolor{fxwarning}{rgb}{0.8000,0.0000,0.0000} 
\renewcommand{\comment}[1]{ \textcolor{red}{\bf{[[}}\fxwarning{#1}\textcolor{red}{\bf{]]}} }
\renewcommand{\comment}[1]{}
\newcommand{\hide}[1]{}
\renewcommand{\bar}[1]{\overline{#1}}
\renewcommand{\tilde}[1]{\widetilde{#1}}
\newcommand{\argmin}[1]{\underset{#1}{\operatorname{argmin}}}
\newcommand{\R}{\mathbb{R}}
\newcommand{\Ltwo}[1]{\big\lVert #1 \big\rVert}
\newcommand{\actionset}{\mathcal{M}}
\newcommand{\maxactset}{\mathcal{M}}
\newcommand{\divconst}{{C}}  
\newcommand{\dir}{\xi	}
\newcommand{\conv}{\mathrm{Conv}}
\newcommand{\maxpert}{\theta}
\newcommand{\Kmax}{K_{\max{}}}
\newcommand{\region}{R}
\newcommand{\regionset}{\mathcal{R}}
\newcommand{\x}{x}
\newcommand{\ptraj}{{\tilde{x}}}
\newcommand{\trend}{w}
\newcommand{\oneoverdel}{\gamma}
\newcommand{\CP}{\mathcal{C}}
\newcommand{\saferad}{\eta}
\newcommand{\basin}{\rho}
\newcommand{\ball}{\mathcal{B}}
\newcommand{\basinmin}{\basin_{\min{}}}
\newcommand{\ctresh}{\maxpert^*}
\newcommand{\imax}{{i^*}}
\newcommand{\Span}{\mathrm{span}}
\newcommand{\pert}{u}    
\newcommand{\Prt}{U}     
\newcommand{\ff}{\zeta}  
\newcommand{\measurable}{{\cal{L}}_{n}^1}
\newcommand{\ldset}{\mathcal{U}}	 
\newcommand{\Rone}{r}
\newcommand{\Rtwo}{\rho}
\def\xt{{\tilde x}}
\long\def\del#1{{\color{red}#1}}
\long\def\del#1{{\color{blue}\sout{[#1]}}}
\long\def\green#1{{\color{green}#1}}
\long\def\pur#1{{\color[rgb]{.8,0,.8}#1}}
\long\def\oli#1{{\color[rgb]{0,.8,.8}[#1]}}
\newcommand\redsout{\bgroup\markoverwith{\textcolor{red}{\rule[0.5ex]{2pt}{.5pt}}}\ULon}
\long\def\dela#1{{\color{blue}\redsout{[#1]}}}
\long\def\comm#1{{\color{green}[ #1 ]}}
\title{Sensitivity to Cumulative Perturbations\\ for a Class of Piecewise Constant Hybrid Systems}
\author{ 
Arsalan Sharifnassab$^\dagger{} ^*$,
\thanks{$\dagger$ A. Sharifnassab and S. J. Golestani are with the Department of Electrical Engineering, Sharif University of Technology, Tehran, Iran; email: asharif@mit.edu, golestani@sharif.edu}
John N. Tsitsiklis$^\ddagger$, \emph{Fellow, IEEE},
\thanks{$\ddagger$ J. N. Tsitsiklis is with the Laboratory for Information and Decision Systems, Electrical Engineering and Computer Science Department, MIT, Cambridge MA, 02139, USA; email: jnt@mit.edu}\\
 S. Jamaloddin Golestani$^\dagger$, \emph{Fellow, IEEE}
\thanks{$*$ This work was done while A.\ Sharifnassab was a visiting student at the Laboratory for Information and Decision Systems, MIT, Cambridge MA, 02139, USA. This work  was partially supported by the Iran National Science Foundation.}
}
\begin{document}
\maketitle

\begin{abstract}
We consider a class of continuous-time hybrid dynamical systems 
that correspond to subgradient flows of a piecewise linear and convex potential function with finitely many pieces, and which include the fluid-level dynamics of the Max-Weight scheduling policy as a special case.
We study the effect of an external  disturbance/perturbation on the state trajectory, and establish that the magnitude of this effect can be bounded by a constant multiple of the \emph{integral} of the perturbation. 
\end{abstract}

\section{\bf Introduction}\label{s:intro}
We consider a class of continuous-time, non-expansive, hybrid systems that are subject to an external disturbance/perturbation, and develop a bound on the effect of the perturbation on the state trajectory, in terms of the {\bf integral} of the perturbation.

In order to appreciate the issues that arise, and the usefulness of such a result, let us consider a discrete-time system of the form $x(t+1)=f\big(x(t)\big)$, $t=0,1,\ldots,$ 
and its perturbed counterpart 
\begin{equation} \label{eq:dt3}
\xt(t+1)=f\big(\xt(t)\big)+u(t),\qquad  t=0,1,\ldots
\end{equation}
Here, $x(t)$ and $u(t)$ take values in $\R^n$ and we assume that the mapping $f:\R^n\to\R^n$ is non-expansive, in the sense that 
$$\|f(x)-f(y)\| \leq \|x-y\|,\qquad \forall \ x,y\in\R^n,$$
for a given norm $\|\cdot\|$. A straightforward induction yields a bound on the distance of the perturbed trajectory from the original one: if $\xt (0)= x(0)$, then 
\begin{equation} \label{eq:dt1}
\big\| \xt (t)- x(t)\big\| \leq 
\sum_{\tau=0}^{t-1} \big\| u(t)\big\|.
\end{equation}
However, our goal is to derive stronger bounds, of the form
\begin{equation} \label{eq:dt2}
\big\| \xt (t)- x(t)\big\| \leq 
C\, \max_{k<t} \Big\|\sum_{\tau=0}^{k} u(\tau)\Big\|,
\end{equation} 
for some constant $C>0$ independent of $u(\cdot)$.
Note that a bound which is \emph{linear} in the cumulative perturbation is clearly the tightest possible, even for the trivial system $f(x)=x$.

A bound of the form \eqref{eq:dt2} is not valid in general, even for non-expansive systems, or gradient fields of convex functions \cite{AlTG18p3}.
Nevertheless, we show that such a bound is  valid for continuous-time hybrid systems driven by   a piecewise constant drift, determined by  the subdifferential 
of a piecewise linear and convex function with finitely many pieces. 
Within this class of systems, the dynamics are automatically non-expansive with respect to the Euclidean norm. 
Furthermore, this class is fairly broad, in the sense that it actually contains a seemingly larger class of non-expansive 
\emph{finite-partition hybrid systems}\footnote{In a finite-partition hybrid system, the domain is partitioned   into a finite number of regions, 
and system trajectories have a constant drift in the interior of each region.}
 \cite{AlTG18p4}.
Finite-partition systems often arise in the context of  systems that are controlled through the selection of a particular action at each time among the elements of a finite set.  
They have attracted broad interest, due to numerous applications to communication networks \cite{TassE92,Neel10}, processing systems \cite{RossBM15}, manufacturing systems, and inventory management \cite{PerkS98,Meyn08}, etc.


A prominent example to which our results apply are the fluid-level dynamics of the celebrated Max-Weight policy for real-time job scheduling \cite{TassE92}.
\hide{\footnote{In fact, our initial objective was to derive sensitivity results just for the case of Max-Weight, but 
it soon became apparent that this was a special case of a broader mathematical structure and result.
Then came to realize that it extends to a broader class of hybrid systems.}.}
This policy is used for scheduling in queueing systems: 
at each time, it chooses a service vector (from a finite set) that maximizes a weighted sum of the current queue lengths
(see Fig.~\ref{fig:application finite part is subdiff} for a simple example).
This policy and its properties, 
e.g., \emph{stability} \cite{MaguHS14,Hala14,Dai95,Tass95} 
and \emph{state space collapse} \cite{Stol04,ShahW12},
have been studied extensively over the last three 
 decades.  

When the Max-Weight policy is applied to a discrete-time stochastic setting, the perturbation  $u(\cdot)$ in \eqref{eq:dt3}
 is the sample path of a stochastic process, and captures the fluctuations in job arrivals. 
 Under  usual 
 probabilistic assumptions, $\sum_{\tau=0}^{t-1}
\|u(\tau)\|$ grows at the rate of $t$, whereas $\max_{k<t}\sum_{\tau=0}^{k}
u(\tau)$ only grows as (roughly) $\sqrt{t}$, 
with high probability.
This fact, in combination with the main result of this paper, leads to tighter  than earlier available probabilistic bounds on the fluctuations of the Max-Weight trajectories from their deterministic (fluid) counterparts, and opens  the way for new results \cite{AlTG18p2}, such as strengthening the state-space collapse results in \cite{ShahW12}.
More specifically, in \cite{AlTG18p2}, we 
study in detail the discrete-time Max-Weight dynamics: 
we
use the results of this paper to prove a bound similar to \eqref{eq:dt2}, and also address a state-space collapse conjecture  posed in \cite{ShahW12}. 
Furthermore, our approach also enables us to settle  
another open problem that was posed in  \cite{MarkMT18},  on   delay-stability in the presence of heavy-tailed traffic, as will be reported in a forthcoming paper.

\hide{
\comm{[Deleting this paragraph because it goes into a tangential topic; its relevance is not quite apparent; and for the final conclusion "beneficial to the synthesis of control methods" we do not have any examples that would back up the claim.]}
\del{Fluid models **footnote{In the simplest case, for 
a discrete time system $x^+=x+f(x)$ defined over the entire $\R^n$,  the continuous time system $\dot{x}=f(x)$ is referred to  as a corresponding fluid model.}
	have been a major tool not only for the analysis of system properties such as stability \cite{Dai95,DaiM95,Gama00}, but also for  synthesis of control algorithms. 
Optimal control decisions in stochastic networks are given by dynamic programming, which is typically computationally intractable \cite{Stid85,MaCM10}.
A popular approach around this problem is to approximate the stochastic system with its fluid model \cite{ChenDM04,BertNP15},
solve for the optimal solutions of the fluid model \cite{Meyn97,Meyn05},
and translate these solutions via discrete review methods to use them for controlling the stochastic system \cite{Magl99,Magl00,FleiS05,Meyn08}.
Here, the fluid model is a continuous time dynamical system, containing the solutions of the stochastic system as its perturbed trajectories.
Therefore, the quality of fluid model approximation,  captured in the input sensitivity bounds, is also beneficial to the synthesis of control methods.
We also wish to point out that our problem in this paper is not related to the perturbation analysis of stochastic fluid models used in optimizing the threshold parameters for optimal control of certain manufacturing systems \cite{CassWM02,YuC04,HuVY94}. 
 }
 }

As is apparent from our discussion of the Max-Weight policy, 
one may be ultimately interested in a discrete-time system, as opposed to the continuous-time systems considered in this paper. However, 
\hide{\oli{Recall that we added this paragraph because of a previous comment by reviewer 3. Now she says in her last review that this paragraph is not needed! I am shortening to save some space.}
\dela{
the two settings (discrete or continuous) are closely related. For example, in many applications involving communication networks or scheduling systems, the underlying system may evolve in discrete time, but 
much of the analysis is often carried out in terms of related continuous-time models that are easier to analyze.  This is because discrete-time models often involve uninteresting ``edge effects'' that  can result in  tedious technical details and can  obscure the essence of the underlying mathematical structure. For such reasons, }}
we found it  more natural to start with the development of the core concepts and results within the more elegant continuous-time framework in this paper, and then translate them back to the discrete-time framework. 
For instance,
\cite{AlTG18p3} shows that if a continuous-time system admits a bound of the form \eqref{eq:dt2}, then  its discrete-time counterpart obeys a similar bound. 

Regarding related literature, we are not aware of any work that resembles the main result of this paper. Some 
seemingly related research threads deal with  
\emph{input-to-state stability}\footnote{A
 discrete time dynamical system $x(t+1) = f\big(x(t),u(t)\big)$ with state $x(\cdot)$ and external disturbance (or control) $u(\cdot)$ is said to be input-to-state stable if 
there exists a continuous and strictly increasing function $\gamma:\R_+\to\R_+$ with $\gamma(0)=0$ and a function $\beta:\R^n\times\R_+\to\R_+$, strictly increasing in the first argument and decreasing  in the second argument with $\beta(0,\cdot)=0$ and $\lim_{t\to \infty}\beta(s,t)=0$, for all $s\ge 0$; such that
$\|x(t)\| \le \beta\big(\|x(0)\|  , t\big)  + \max_{k\le t}\gamma\big(\| u(k)\|\big)$, 
 for all trajectories $x(\cdot)$, all disturbances $u(\cdot)$, and all times $t$ \cite{JianW01}.
} 
\cite{JianW01,MarrAC02,Ange04,Sont08,Sont96}, 
integral input-to-state stability \cite{AngeSW00}, incremental input-to-state stability \cite{Ange02}, incrementally integral input-to-state stability \cite{Ange09}, and robust input-to-state stability \cite{CaiT13}. 
However, we note that integral input-to-state stability \cite{AngeSW00} and  incrementally integral input-to-state stability \cite{Ange09} are concerned only with generalizations of the weak bound in \eqref{eq:dt1}. Furthermore, incremental input-to-state stability 
 \cite{Ange02} involves generalizations of a sensitivity bound in terms of $b=\max_{k<t} \big\|u(\tau)\big\|$. 
A bound of the form $Cb$, for some constant $C$ that does not depend on $t$, would 
typically 
be stronger than ours; however, such a bound does not hold in our setting, even for the simplest system where $f(x)=x$. 

It is worth pointing out that for 
systems with additive disturbances, $x(t+1)=f\big(x(t)\big)+u(t)$, 
 input-to-state stability and the bound  \eqref{eq:dt2}
do not imply one another\footnote{For example, the discrete-time system  
$x(t+1)=x(t)$
   satisfies \eqref{eq:dt2},
but is not input-to-state stable.
Conversely,  the two-dimensional and two-region discrete-time system with $f(x,y)=\big(x/2,y/2\big)$ for $x\ge 0$ and $f(x,y)=\big(x/4,y/4\big)$ for $x<0$, is input-to-state stable but \eqref{eq:dt2} fails to hold.
This is because, for a trajectory initialized at $\big(x(0),y(0) \big)= (0,4)$, a small perturbation of the initial condition $\big(\tilde{x}(0),\tilde{y}(0) \big)= (-\epsilon,4)$,  will result in a distance larger than $1$ at the next time step.  
}.

Another key difference is
that input-to-state stability results usually rely on Lyapunov-type arguments
\cite{Lyap92}. 
However, Lyapunov functions seem to be inadequate for our purposes. 
This is because our bounds (as can be seen in the proof given in Section~\ref{sec:proof cont})
rely in a delicate manner on the relative orientation of the two trajectories $x(\cdot)$ and $\tilde{x}(\cdot)$, in conjunction with the local ``landscape'' of the potential function.
Furthermore, as shown in \cite{AlTG18p3}, the desired sensitivity bound \eqref{eq:dt2} fails to carry over if the number of constant-drift regions is not finite. This means that a Lyapunov-based argument would have to make essential use of our finiteness assumption, something for which we are not aware of having any precedents in the literature. 

\hide{
\comm{Is it true that what  referee 2 mentioned involves contraction mappings, in which the effect of past perturbations washes out? If so, we can add a sentence like: "Finally, works such as [REFS] rely on contraction assumptions, which wash out the effect of perturbations in the far past, something that cannot be guaranteed under non-expansiveness assumptions.''}
\oli{To my understanding, the results in the ``contraction metrics'' papers are not comparable to our bounds. See definition 3 in \href{https://www.researchgate.net/profile/Parasara_Duggirala/publication/261427359_Verification_of_annotated_models_from_executions/links/592250590f7e9b99794442ab/Verification-of-annotated-models-from-executions.pdf}{[link]} suggested by Rev 2, and also its main reference \href{https://www.researchgate.net/profile/Parasara_Duggirala/publication/261427359_Verification_of_annotated_models_from_executions/links/592250590f7e9b99794442ab/Verification-of-annotated-models-from-executions.pdf}{this one [link]}. It seems that the latter does not involve ``perturbations'', and the former provides no comparable bounds (it only argues equivalence between definitions). Moreover, the ``wash out'' effect is not evident (to me) from the definition of a contraction metric, as defined in Def3 of the former paper. 
I agree with you that there must be a wash out effect in contractive systems and also in incrementally ISS systems. However, claiming this in the paper might give rise to further controversies from the reviewer's side. Agree?}}

The rest of this paper is organized as follows. In the next section we discuss some preliminaries and our notational conventions. In Section \ref{sec:main} we
state our main theorem. In Section \ref{sec:proof cont}  we provide the core of the proof, while relegating 
 some of the details to the Appendix. Finally, 
in Section \ref{sec:discussion} we discuss possible extensions, open problems and challenges, and directions for future research.

\medskip
\section{\bf Preliminaries}\label{sec:model}
\subsection{Notation}
We denote by $\R_+$   the set of non-negative real numbers. 
For a column vector $v\in\R^n$, we denote its transpose  and  Euclidean norm  by  $v^T$ and $\Ltwo{v}$, respectively.  For  any set  $S\subseteq \R^n$,  $\Span(S)$  stands for the span of the vectors in $S$. Furthermore, if $p$ is a point in $\R^n$, then $p+S$ stands for the set $\big\{p+x\,\big|\,x\in S \big\}$, and  $d\big(p\,,\,S\big)$ for the Euclidean distance between $p$ and $S$, with the convention that  $d\big(p\,,\,S\big)=\infty$ if $S$ is empty. 
Similarly, we let $d(p,\{x\}) = \Ltwo{p-x}$ for $p,x\in \R^n$.
We finally let $A\backslash B\,=\,A\bigcap B^c$, for any two sets $A$ and $B$, where $B^c$ is the complement of $B$.

\subsection{Perturbed Dynamical Systems}
As in \cite{Stew11}, we  identify
a dynamical system with a set-valued function $F:\R^n\to 2^{\R^n}$
and the associated differential inclusion $\dot{x}(t)\in F(x(t))$. We start with a formal definition, which allows for the presence of perturbations.

\begin{definition}[Perturbed Trajectories]\label{def:integral pert traj} 
Consider a dynamical system $F:\R^n\to 2^{\R^n}$, and let $\Prt:\R\to\R^n$ be a right-continuous function, which we refer to as the \emph{perturbation}.
 Suppose that there exist measurable and integrable functions $\ptraj({\cdot})$ and $\ff({\cdot})$ of time that satisfy \begin{equation}\label{eq:def of integral pert}
\begin{split}
\ptraj(t) &= \int_0^t \ff(\tau)\,d\tau \,+\, \Prt(t),\qquad \forall\ t\ge 0,\\
\ff(t)&\in F\big(\ptraj(t)\big),\qquad \forall\ t\ge 0.
\end{split}
\end{equation}
We then call $\Prt$ the \emph{perturbation}.
Any such $\ptraj$ and $\ff$ is called a
\emph{perturbed trajectory}
and a \emph{perturbed drift},  respectively. In the special case where 
$U$ is  identically zero, we also refer to $\ptraj$  as an \emph{unperturbed trajectory}.

\end{definition}

Note that a perturbed trajectory is automatically right-continuous. In the absence of the perturbation $U(\cdot)$, Eq.~\eqref{eq:def of integral pert} becomes the differential inclusion $\dot x\in F(x(t))$ (almost everywhere). When perturbations are present,  $U$ is often absolutely continuous, of the form
$\int_0^T u(\tau)\,d\tau$, for some measurable function $u(\cdot)$. In this case, we are essentially dealing with the differential inclusion 
$\dot \ptraj(t)\in  F(\ptraj(t))+u(t)$. 
However, the integral formulation in 
Definition \ref{def:integral pert traj} is  more useful because it also
applies to cases where $U$ is not absolutely continuous, e.g., if $U$ is a sample path of a Wiener or a jump process.

\subsection{Classes of Systems}
We now introduce some classes of systems of interest. 
A dynamical system $F$ is called \emph{non-expansive} if for any pair of unperturbed trajectories $x({\cdot})$ and $y({\cdot})$, 
and if $0\leq t_1\le t_2$, then
\begin{equation}
\Ltwo{x(t_2)-y(t_{2})}\le \Ltwo{x(t_1)-y(t_1)}.
\end{equation} 

For a convex function $\Phi:\R^n\to\R^n$, we denote its subdifferential by $\partial \Phi(x)$. 
We say that $F$ is a \emph{subgradient dynamical system} if there exists a convex function $\Phi(\cdot)$, such that for any $x\in\R^n$, $F(x)=-\partial\Phi(x)$.
Furthermore, if $\Phi$ is of the form
$$\Phi(x)=\max_{i}\big(-\mu_i^Tx+b_i\big),$$ 
for some $\mu_i\in\R^n$, $b_i\in\R$, and with $i$ ranging over a {\bf finite} set,
we say that $F$ is a \emph{Finitely Piecewise Constant Subgradient} (FPCS, for short) system;
cf.~Fig.~\ref{fig:application finite part is subdiff}.

\begin{figure} 
\begin{center}
\subfigure[]{\includegraphics[width = .3\textwidth]{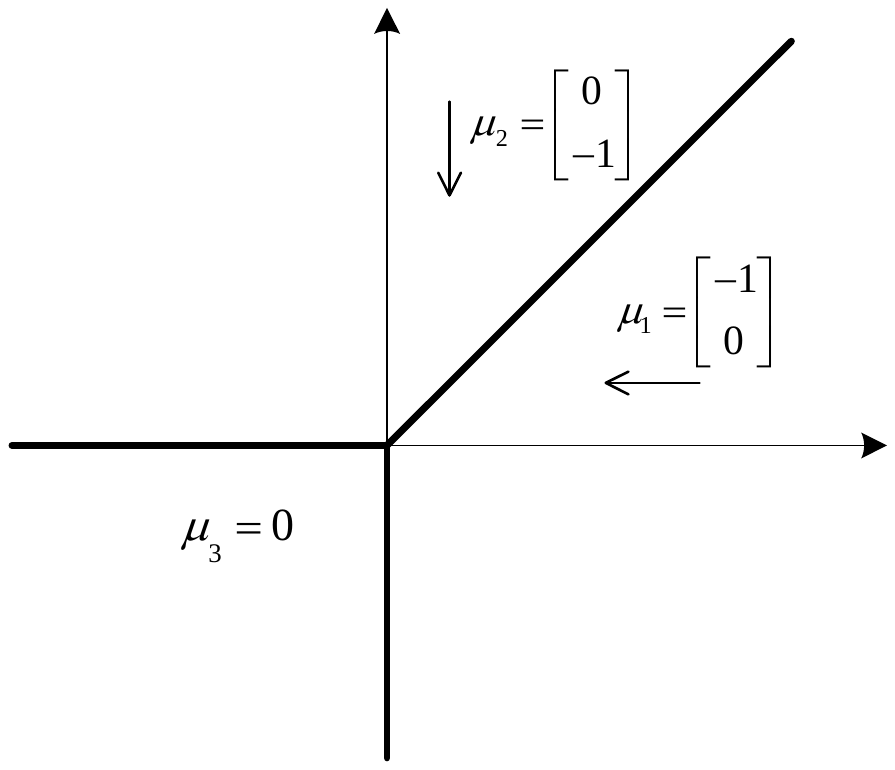}}
\subfigure[]{\includegraphics[width = .3\textwidth]{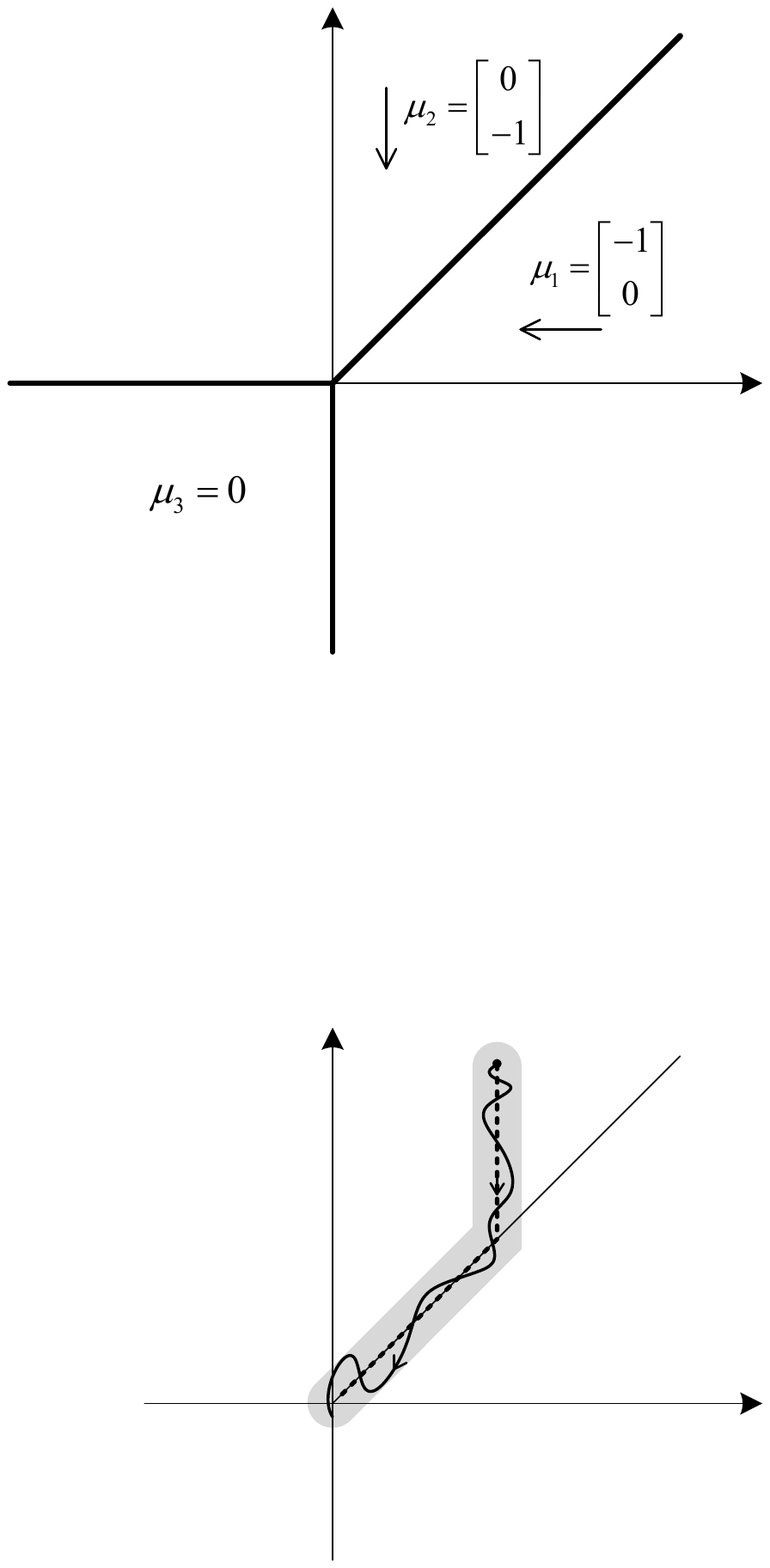}}
\ifOneCol
\vspace{-1cm}
\fi
\end{center}
\caption{
Consider a simple discrete-time network of two parallel queues with no arrivals and a unit-rate server. The Max-Weight policy always serves a longest queue.
Part (a) illustrates the corresponding continuous-time (the so called fluid-level) dynamics of this system. 
The state vector $(x_1,x_2)$ describes the workload at each queue.
(To avoid dealing with differential inclusions involving boundary constraints, we extend the naturally nonnegative state of the system to all of $\R^2$.) We have three regions indicated in the figure. The set
$F(x)$ is a singleton in the interior of each region and it is the convex hull of 
multiple vectors on the boundaries of the regions.
This dynamical system is the subgradient field of the piecewise linear convex function $\Phi(x)=\max(-\mu_1^Tx,\, -\mu_2^Tx,\, 0) =\max\{x_1,x_2,0\}$, and hence is an FPCS  system.
Part (b) depicts an unperturbed (dashed line) and a perturbed (solid line) trajectory. 
Our main result argues that the perturbed trajectory stays within a distance of the unperturbed trajectory bounded by a constant multiple of the size of the integral of the perturbation; cf. \eqref{eq:dt2}. 
}
\label{fig:application finite part is subdiff}
\end{figure}

Subgradient systems are known to have several useful properties: they are automatically non-expansive 
(cf.~Part 5 of Theorem 4.4 in \cite{Stew11}), a fact that we will be using in the sequel. 
Existence and uniqueness results are also available \cite{Stew11}. 

\begin{lemma}[Existence and Uniqueness of Solutions] \label{lem:uniqueness of trajectory of mws}
For any subgradient dynamical system $F$ and any  $x_0\in\R^n$, there exists a unique trajectory of $F$ initialized at $x_0$.
\end{lemma}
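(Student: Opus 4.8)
The plan is to establish uniqueness through the monotonicity of the subdifferential operator, and existence through a regularization argument. \emph{Uniqueness} is immediate: if $x(\cdot)$ and $y(\cdot)$ are two trajectories of $F=-\partial\Phi$ with $x(0)=y(0)=x_0$, write $\dot x(t)=-g_x(t)$ and $\dot y(t)=-g_y(t)$ with $g_x(t)\in\partial\Phi(x(t))$ and $g_y(t)\in\partial\Phi(y(t))$ for almost every $t$. Monotonicity of $\partial\Phi$ gives $\langle x(t)-y(t),\,g_x(t)-g_y(t)\rangle\ge 0$, so
\[
\frac{d}{dt}\,\Ltwo{x(t)-y(t)}^2 \;=\; 2\big\langle x(t)-y(t),\,\dot x(t)-\dot y(t)\big\rangle \;=\; -2\big\langle x(t)-y(t),\,g_x(t)-g_y(t)\big\rangle\;\le\;0
\]
for a.e.\ $t\ge 0$; since $\Ltwo{x(0)-y(0)}=0$, we get $x\equiv y$. (This is essentially the same computation that yields the non-expansiveness property mentioned above.)

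For \emph{existence}, I would use that $\Phi$, being finite-valued and convex on all of $\R^n$, is locally Lipschitz, so $\partial\Phi(x)$ is a nonempty, convex, compact set for every $x$ and the map $x\mapsto\partial\Phi(x)$ is upper semicontinuous and locally bounded. I would then build a solution via the implicit-Euler / proximal-point scheme: for a step $h>0$, set $x^h(0)=x_0$ and $x^h((k{+}1)h)=\operatorname{prox}_{h\Phi}\!\big(x^h(kh)\big)$, where $\operatorname{prox}_{h\Phi}(v)=\argmin{z}\big(\Phi(z)+\tfrac1{2h}\Ltwo{z-v}^2\big)$, interpolating piecewise-linearly in between. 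The resolvent $\operatorname{prox}_{h\Phi}$ is non-expansive, which yields uniform Lipschitz-in-time bounds on the $x^h$ and, via the Crandall--Liggett estimate, shows that $\{x^h\}$ is Cauchy in $C([0,T];\R^n)$ as $h\to 0$. The limit $x(\cdot)$ is Lipschitz, hence absolutely continuous, and one verifies $-\dot x(t)\in\partial\Phi(x(t))$ a.e.\ from the closedness of the graph of $\partial\Phi$ together with the (weak) convergence of the difference quotients of $x^h$. Since $\Phi(x(t))$ is non-increasing along the flow and, in fact, $\Ltwo{\dot x(t)}$ is non-increasing as well, the solution cannot escape in finite time and extends to all of $[0,\infty)$. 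Alternatively --- and this is what the citation to \cite{Stew11} delivers --- one may simply invoke the classical fact that the maximal monotone operator $-\partial\Phi$ generates a contraction semigroup on $\R^n$ (here $\overline{\operatorname{dom}\partial\Phi}=\R^n$), which gives existence and uniqueness together.

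I expect the main obstacle to be the existence half, and specifically the verification that the limiting drift produced by the discretization lies in $-\partial\Phi$ at almost every point of the limit trajectory. This needs a closed-graph / upper-semicontinuity argument for the set-valued map $\partial\Phi$ combined with only weak convergence of the derivatives $\dot x^h$; a naive appeal to classical ODE existence theory does not apply, because $F$ is genuinely set-valued at the non-differentiability points of $\Phi$ --- in the FPCS case, precisely on the region boundaries, where sliding motion along a boundary can occur. Uniqueness, by contrast, is the easy part, being a one-line monotonicity estimate.
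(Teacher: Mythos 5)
Your proposal is correct, and its closing ``alternatively'' sentence is in fact the paper's entire proof: the paper simply cites Lemma~2.30 of \cite{Stew11} (a subgradient system is a maximal monotone map) and Corollary~4.6 of \cite{Stew11} (maximal monotone maps on $\R^n$ generate a unique trajectory from each initial point), with no further argument. What you add is a correct unpacking of what sits behind that citation: the one-line monotonicity estimate $\frac{d}{dt}\Ltwo{x(t)-y(t)}^2\le 0$ for uniqueness (valid a.e.\ since trajectories are by definition absolutely continuous), and the implicit-Euler/proximal-point construction with the Crandall--Liggett estimate for existence, where the only delicate step is indeed the closed-graph/upper-semicontinuity argument needed to pass the differential inclusion to the limit under weak convergence of the difference quotients. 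Your remark that $\Ltwo{\dot x(t)}$ is non-increasing (hence no finite-time blow-up) matches Part~4 of Theorem~4.4 in \cite{Stew11}, which the paper uses later for other purposes. The trade-off is the usual one: the paper's citation is shorter and defers all the analysis to \cite{Stew11}, while your sketch is self-contained and makes visible exactly which structural facts (monotonicity of $\partial\Phi$, non-expansiveness of the resolvent, full domain $\overline{\operatorname{dom}\,\partial\Phi}=\R^n$) the lemma actually rests on; nothing in your outline would fail if carried out in full.
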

\begin{proof}
It follows from Lemma 2.30  of \cite{Stew11} that any subgradient dynamical system  is a 
maximal monotone map\footnote{A set valued function $F:\R^n\to 2^{\R^n}$ is a \emph{monotone map} if for any $x_1,x_2\in\R^n$ and any $v_1\in F(x_1)$ and $v_2\in F(x_2)$, we have $\big(v_1-v_2\big)^T\big(x_1-x_2\big)\le 0$. It is called a \emph{maximal monotone map} if it is monotone, and for any monotone map $\tilde{F}$, that satisfies $F(x)\subseteq \tilde{F}(x)$ for all $x$, we have $\tilde{F}=F$.}. 
The lemma then follows from Corollary 4.6 of \cite{Stew11}.
\end{proof}


\hide{
\green{[***REMOVING THE MATERIAL FROM HERE DOWN. DO NOT WANT TO DISTRACT THE READER TOO MUCH WITH STUFF THAT IS NOT USED.**]}

\del{The following well known  result on general subgradient dynamical systems, applies to FPCS systems, which are a special case.}

\green{NOTE: There are some issues that need fixing in this proposition.\\
1. Does the first part only apply to (\ref{eq:pert traj})?
\\Does it apply for perturbations $U(t)$? \oli{Not clear. We can think of a proof. But any proof would take much space, unless there are references. In the very first draft, before you teach me differential inclusions and give me \cite{Stew11}, there was a messy 2-page proof for existence of perturbed trajectories. I am not sure if we can modify that proof to get a similar result for integral perturbations.}\\
If not how can it be useful later?} \oli{We assume existence of $\ptraj$. Also made a small change in Theorem 1 to remove possible ambiguities.}

\green{Also, solutions to (4) may be unique, but this result allows the possibility that the integral
equation has multiple solutions, unless we strengthen the previous proposition.} \oli{It is indeed unique. See the notes after the previous proposition. However, we make no use of  uniqueness.} 

\green{Similarly, don't we want the second part to apply under Definition 1?} \oli{Seems no need to mention,  since non-expansiveness is a property of unperturbed trajectories.}

\green{
\begin{proposition}[Solvability and Non-Expansiveness of Subgradient Systems]\label{lem:existence and no expansion} 
\begin{enumerate}[label={(\alph*)}, ref={\ref{lem:existence and no expansion}(\alph*)}]
\item \label{lem:uniqueness of trajectory of mws}
For any subgradient dynamical system $F(\cdot)$, for any \dela{differential perturbation function $\pert(t)\in\measurable$} \pur{measurable and integrable function $\pert(\cdot)$}, and for any initial condition $x(0)\in\R^n$, the perturbed differential inclusion in (\ref{eq:pert traj}) has a unique solution.

\item \label{lem:non expansive}
Any subgradient dynamical system is non-expansive.
\end{enumerate}
\end{proposition}

\begin{proof}
According to Lemma 2.30  of \cite{Stew11},  any subgradient dynamical system  is a maximal monotone map\footnote{\del{A set valued function $F:\R^n\to 2^{\R^n}$ is a \emph{monotone} map if for any $x_1,x_2\in\R^n$ and any $v_1\in F(x_1)$ and $v_2\in F(x_2)$, we have $\big(v_1-v_2\big)^T\big(x_1-x_2\big)\le 0$. It is called a \emph{maximal} monotone map if it is monotone, and for any monotone map $\tilde{F}$ that satisfies $F(x)\subseteq \tilde{F}(x)$ for all $x$, we have $\tilde{F}=F$.}}. Then,  Corollary 4.6 of \cite{Stew11} implies that (\ref{eq:pert traj}) has a unique solution. The non-expansiveness is immediate from Part 5 of Theorem 4.4 in \cite{Stew11}.
\end{proof}
}
}

\hide{
\begin{remark}
Trying to say that the integral in (\ref{eq:def of integral pert}) can be any type of integral (e.g., Lebesgue, Ito?, etc.?) [the problem is that for Ito integral for example we should write $\int_0^t d\ff(\tau)$]. To circumvent such complications,  I was thinking of a formulation more general than (\ref{eq:def of integral pert}):

Given a right continuous function $\Prt:\R\to\R^n$, $\ptraj:\R\to\R^n$ is called a perturbed trajectory if
 for every $t_1\le t_2$, there exists a $\ff\in\R^n$ such that
\begin{equation}
\begin{split}
\ptraj(t_2)-\ptraj(t_1) &= \big(t_2-t_1\big)\ff \,+\, \Prt(t_2)- \Prt(t_1),\\
\ff &\in \conv\Bigg( \bigcup_{\tau\in [t_1,t_2]} F\big(\ptraj(\tau)\big)\Bigg),
\end{split}
\end{equation}
where $\conv(\cdot)$ is the convex hull of a set. The proofs also work under this definition. Does it look good?
\end{remark}
}

\medskip
\section{\bf Main Result}\label{sec:main}
We now state the  main result of the paper. 
 Its proof  is given in Section \ref{sec:proof cont}.
\begin{theorem}[Input Sensitivity of FPCS  Systems] \label{th:main cont}
Consider an FPCS   system $F$.
Then, there exists a constant $\divconst$ such that for any unperturbed trajectory $x(\cdot)$, and for any 
perturbed trajectory $\ptraj(\cdot)$ with corresponding  perturbation $\Prt(\cdot)$ and the same initial conditions $\ptraj(0)=x(0)$, 
\begin{equation} \label{eq:bounded pert cont}
\Ltwo{\ptraj(t)- x(t)} \,\le \, 
\divconst\, \sup_{\tau\le t} \Ltwo{ \Prt(\tau)}, \qquad \forall\ t\in \R_+.
\end{equation}
Moreover, for any $\lambda\in\R^n$, the bound \eqref{eq:bounded pert cont} applies to the 
(necessarily FPCS)   system 
$F(\cdot)+\lambda$ 
 with the same constant~$\divconst$.
\end{theorem}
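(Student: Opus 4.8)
The plan is to reduce everything to a single geometric estimate for FPCS systems and then exploit the structure carefully. First I would set up the two trajectories $x(\cdot)$ and $\ptraj(\cdot)$ and consider the evolution of the displacement $\delta(t) = \ptraj(t) - x(t)$. Since an FPCS system is the negative subgradient field of $\Phi(x) = \max_i(-\mu_i^Tx + b_i)$, the drift at any point lies in the convex hull of a subset of $\{\mu_i\}$ — the ``active'' linear pieces at that point. The key monotonicity identity is that for the unperturbed pair the quantity $\Ltwo{\delta(t)}$ is non-increasing (non-expansiveness, already available via Theorem 4.4 of \cite{Stew11}); the whole difficulty is that $\ptraj$ is a \emph{perturbed} trajectory, so this monotonicity is broken by the injection of $\Prt$.

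The core of the argument (which the paper defers to Section~\ref{sec:proof cont}) should be an induction on the number of linear pieces, or a decomposition of time into a bounded number of epochs during which the set of active pieces of $\Phi$ along $x(\cdot)$ is constant. On each such epoch the unperturbed drift of $x$ is a fixed vector $-\bar\mu$ in a fixed affine subspace, and one tracks how far $\ptraj$ can drift away; the finiteness of the partition is what bounds the number of epochs and hence lets the per-epoch losses accumulate into a single multiplicative constant $\divconst$ depending only on $F$ (i.e., on the $\mu_i$'s and the dimension), not on $\Prt$ or $t$. I would expect the main obstacle to be exactly this bookkeeping: showing that the distance gained during transitions between regions, plus the contribution of $\Prt$ within each region, telescopes to a bound of the form $\divconst \sup_{\tau\le t}\Ltwo{\Prt(\tau)}$ rather than something growing with $t$ or with the number of region crossings along $\ptraj$. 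The delicate point the authors flag — that the bound ``relies on the relative orientation of the two trajectories in conjunction with the local landscape of the potential'' — suggests one must compare the projections of $\delta(t)$ onto the active normal cones of $\Phi$ at $x(t)$ and at $\ptraj(t)$, and use convexity of $\Phi$ to control the inner product of the two drifts.

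For the final sentence of the theorem — invariance of the constant under the shift $F(\cdot) \mapsto F(\cdot)+\lambda$ — I would argue as follows. Adding the constant vector $\lambda$ to the drift corresponds to replacing $\Phi(x)$ by $\Phi(x) - \lambda^Tx = \max_i\big(-(\mu_i-\lambda)^Tx + b_i\big)$, which is again piecewise linear and convex with the \emph{same} number of pieces, so $F(\cdot)+\lambda$ is indeed FPCS; that much is immediate. The real content is that one may take the \emph{same} $\divconst$. The clean way to see this is a change of variables: if $x(\cdot)$ and $\ptraj(\cdot)$ solve the dynamics for $F+\lambda$ with perturbation $\Prt$, then $\hat x(t) = x(t) - t\lambda$ and $\hat\ptraj(t) = \ptraj(t) - t\lambda$ satisfy $\dot{\hat x}(t) \in F(\hat x(t) + t\lambda)$... which is \emph{not} autonomous, so a naive shift does not work. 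Instead, the right observation is that the proof of the bound \eqref{eq:bounded pert cont} never uses the absolute value of the drift, only (i) non-expansiveness of the unperturbed flow and (ii) the combinatorial/geometric data $\{\mu_i\}$ and the region partition — and both of these are \emph{translation-equivariant} in the drift in the sense that the partition of $\R^n$ into active-piece regions of $\Phi(x)$ and of $\Phi(x)-\lambda^Tx$ is literally the same partition (the maximizing index $i$ at $x$ is unchanged by subtracting the common linear term $\lambda^Tx$ from every piece only if $\lambda$ is in the affine hull of the $\mu_i$'s — more generally the \emph{boundaries} $\{-\mu_i^Tx = -\mu_j^Tx\}$ are unchanged). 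Hence every quantity entering the definition of $\divconst$ in the Section~\ref{sec:proof cont} proof is unaffected, and the same constant works; I would make this precise by simply pointing out that the proof of the main bound is carried out symbolically in terms of the differences $\mu_i - \mu_j$ and the region geometry, all of which are shift-invariant, so no separate argument is needed beyond the remark that $F+\lambda$ is FPCS with an identical region structure.
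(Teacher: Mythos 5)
Your proposal is a plan rather than a proof, and the plan's central mechanism is the one the paper explicitly identifies as insufficient. You propose to split time into epochs on which the set of active pieces of $\Phi$ along $x(\cdot)$ is constant and to accumulate a per-epoch loss. The difficulty is that the unperturbed trajectory typically travels \emph{along} the intersection of the boundaries of several regions (cf.\ the discussion at the end of Section~\ref{sec:main}); during such an epoch the perturbed trajectory can chatter among all the regions meeting at that intersection, so "the contribution of $\Prt$ within each region" is not controlled by anything like a single application of a constant-drift estimate, and the number of region crossings of $\ptraj(\cdot)$ is unbounded. You flag this bookkeeping as "the main obstacle" but do not supply the idea that resolves it. The paper's resolution has three ingredients you would need: (i) an induction on the \emph{dimension} $n$, not on the number of pieces --- when the drifts active near $x(t)$ are low-dimensional (affine span a proper subspace), the dynamics decompose orthogonally into a constant drift plus an $(n-1)$-dimensional FPCS system, and the induction hypothesis handles the chattering (Claim~\ref{claim:FY} and Lemma~\ref{lem:locally close}); (ii) the notion of \emph{critical points} (where the active drifts affinely span $\R^n$, Definition~\ref{def:cp}), near which the low-dimensional reduction fails; and (iii) the \emph{basin} argument (Definition~\ref{def:basin of a cp}, Proposition~\ref{prop:close in the basin}), which uses the inequality $\dir^T y\ge \lVert\dir\rVert^2$ to force the perturbed trajectory to leave a neighbourhood of a critical point in bounded time, together with Lemma~\ref{lem-cp: no revisit} showing each basin is effectively visited only once. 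None of these appears in your sketch, so the bound $\divconst\sup_{\tau\le t}\lVert\Prt(\tau)\rVert$ is not established.

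On the final claim about $F(\cdot)+\lambda$: your conclusion is correct and your eventual reasoning (the boundaries $\{-\mu_i^Tx+b_i=-\mu_j^Tx+b_j\}$, hence the regions and critical points, are unchanged when a common linear term is subtracted from every piece) matches Lemma~\ref{lem-cp: F' vs F}. Note, though, that your parenthetical hedge is unnecessary: subtracting the \emph{same} term $\lambda^Tx$ from every piece never changes the maximizing index, with no condition on $\lambda$. Also, the assertion that "the proof never uses the absolute value of the drift" cannot be checked without the proof itself; what the paper actually verifies is that each constant entering $\divconst$ ($\gamma$, $\sigma$, $\saferad$, $\basinmin$, $D^{\CP}$, $M$, $m$) is individually invariant under the shift, with $\sigma$ requiring the induction hypothesis. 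So this part of your argument is directionally right but rests on the main proof you have not given.
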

 \medskip

%

Theorem \ref{th:main cont} is limited to FPCS systems:  if any of the assumptions in the definition of FPCS  systems is removed, then a similar result is no longer possible. In a forthcoming document, 
we  discuss several examples of dynamical systems for which 
no constant $\divconst$ satisfies
(\ref{eq:bounded pert cont}); cf.~Section \ref{sec:discussion}. 

We finally note that the vector $\lambda$ in the dynamical system $ F(\cdot)+\lambda$ can be viewed as a constant external field. Thus, the second part of the theorem  asserts that the same bound holds uniformly for all constant external fields.

The proof of Theorem 1, presented in the next section, is fairly involved and so it is useful to provide some perspective on the challenges that are involved. For a constant-drift system, of the form $\dot{x}(t) = \mu+u(t)$, the result is immediate, because the state is fully determined by the integral $\int_0^t u(\tau)\,d\tau$. More generally, the unperturbed system goes through successive constant-drift regions, and one might expect that the result can be obtained by deriving and patching together bounds for each region encountered. There is however a difficulty, because the unperturbed trajectory often lies at the intersection of the boundaries of two or more constant drift regions. When that happens, the perturbed trajectory may chatter between different regions. As a consequence, the number of pieces and bounds that would have to be patched together can become arbitrarily large, and a bound of the desired form does not follow. For this reason, we need a much more refined analysis of the trajectories in the vicinity of the intersection of different regions, as will be seen in the next section.

\medskip
\section{\bf Proof}\label{sec:proof cont}
In this section we present the proof of Theorem \ref{th:main cont}, organized in a sequence of three subsections. In Subsection \ref{subsec:properties of unperturbed} we present some notation, definitions, and lemmas, mostly concerning the geometric properties of FPCS  systems and unperturbed trajectories. In particular, we define critical points (Definition \ref{def:cp}) as the extreme points of constant-drift regions.

In  Subsection \ref{sub:b} we consider 
a time interval during which
 the perturbed trajectory is far from the set of critical points. 
 Such an interval can be divided into subintervals with an important property: the set of drifts encountered is low-dimensional, in a sense to be defined below. 
Within each such subinterval,
we show in Lemma \ref{lem:locally close} that  the local dynamics are equivalent to the dynamics of a lower-dimensional FPCS system, and employ a suitable induction on the system dimension to obtain a certain upper bound. Then, in Proposition \ref{prop:close to fluids at far}, we piece together the bounds for the different subintervals to obtain an upper bound that applies as long as the perturbed trajectory remains far from the set of critical points. 
 
In Subsection \ref{sub:c} we consider the case where the perturbed trajectory comes close to a critical point: 
we show, in
 Proposition \ref{prop:close in the basin},
  that the unperturbed trajectory stays close to the perturbed trajectory, as long as the perturbed trajectory remains sufficiently close to that critical point.  Finally, 
 in Subsection  \ref{sub:d} we combine the two cases and bound the distance of the trajectories at all times.

From now on, 
we assume that $x(0)=\ptraj(0)$ and that
\begin{equation} \label{eq:bounded pert at all times}
\sup_{t} \Ltwo{\Prt(t)} \le \maxpert.
\end{equation}
We will show that for any  $t\ge0$, we have $\Ltwo{\ptraj(t) - x(t)}\le \divconst \maxpert$, for some constant $\divconst$ independent of $\Prt$, $\theta$, and $x(0)$. It is not hard to see that this implies the theorem in its original form.

The proof proceeds by induction on the system dimension $n$.
In particular, we make the following {\bf induction hypothesis}, which we assume to be in effect  throughout the rest of this section.
\ifOneCol
\begin{equation} \label{induc:induction on dimension}
\textrm{Induction hypothesis:} \qquad \begin{array}{l}
\textrm{Theorem \ref{th:main cont} holds for all $(n-1)$-dimensional} \\ \textrm{FPCS  systems.}
\end{array} 
\end{equation}
\else
\begin{equation} \label{induc:induction on dimension}
\begin{array}{l} \textrm{Induction} \\ \textrm{hypothesis} \end{array} \textrm{\bf:} \quad \begin{array}{l}
\textrm{Theorem \ref{th:main cont} holds for all} \\ \textrm{$(n-1)$-dimensional FPCS  systems.}
\end{array} 
\end{equation}
\fi

We then rely on the induction hypothesis to  prove the theorem for $n$-dimensional systems. 
For the basis of the induction we 
consider
the case of zero-dimensional systems. In this case, the state space consists of a single point (the zero vector),  we have $\x(t)=\xt(t)=0$ at all times, and the result in 
Theorem \ref{th:main cont} holds trivially.

\subsection{Properties of Unperturbed Dynamics}\label{subsec:properties of unperturbed}
In this subsection we present some notation and definitions, and prove some properties of unperturbed trajectories.  We then define and study critical points. 
Throughout the proof, we assume that $F$ is an FPCS system on $\R^n$, with $F=-\partial \Phi$, where 
$\Phi(x)=\max_{i=1,\ldots,m} \big(-\mu_i^T x+b_i\big)$. 
 We assume that the vectors $\mu_i$ in the definition of $\Phi$ are distinct. This entails no loss of generality, because if $\mu_i=\mu_j$ and $b_i>b_j$, then $-\mu_j^T x+b_j$ is always dominated by $-\mu_i^T x+b_i$ and has no effect on $\Phi(\cdot)$.  

Each vector $\mu_i$ is called a \emph{drift} and 
we define $\actionset$ to be the set $\big\{  \mu_i \big\}_{i=1}^m$  of all drifts. 
For each drift $\mu\in\actionset$, we use the notation $b_\mu$ to refer to the corresponding constant in the expression for $\Phi$.  With these conventions, we have
\begin{equation}
\Phi(x)=\max_{\mu\in\actionset} \big(-\mu^T x+b_\mu\big).
\end{equation}

For every $x\in\R^n$, we define the set 
of active drifts at $x$ as 
\begin{equation}
\maxactset(x)\triangleq \left\{ \mu\in\actionset \,\big|\,\Phi(x)=-\mu^Tx+ b_\mu  \right\}.
\end{equation}
If at some $x$ the corresponding set $\maxactset(x)$ consists of a single element $\mu$, we have $\dot  x={\mu}$. 
However, the dynamics become more interesting when 
$\maxactset(x)$ contains multiple elements. For that case, 
it follows from the definition of the subdifferential that for any $x\in\R^n$, 
$F(x)$ is the convex hull of $\maxactset(x)$. 

For each $\mu \in \actionset$, we define its \emph{effective region} $\region_\mu$ by
\begin{equation}
\region_\mu \triangleq \left\{x\in\R^n \,\big|\,   \mu\in \maxactset(x)  \right\}.
\end{equation}
Equivalently, 
$$\region_\mu =\left\{ x \,\big|\, -\mu^Tx+ b_{\mu} \geq -\nu^T x+b_{\nu},\ \forall\ \nu\in \actionset\right\},$$
which establishes that each 
 region $\region_\mu$ is a polyhedron and, in particular, closed and convex.  
We will be using $\regionset$ to denote the collection of
all effective regions: $\regionset\triangleq\big\{ \region_\mu\,\big|\, \mu\in\actionset \big\} $.

From now on, and with some abuse of traditional notation, we will use $\dot{x}(t)$ to denote the {\bf right-derivative} of $x(t)$, whenever it exists.
The lemma that follows shows that for unperturbed trajectories this right derivative always exists and has some remarkable properties.

\begin{lemma}[Properties of Unperturbed Trajectories] \label{lem:unperturbed norm and decreasing drift}
Let $x(\cdot)$ be an unperturbed trajectory of an FPCS  system $F$. Then,
\begin{enumerate}[label={(\alph*)}, ref={\ref{lem:unperturbed norm and decreasing drift}(\alph*)}]
\item(Minimum Norm)\quad \label{lem:min norm d+}
For every  $t\ge0$, the right derivative of $x(t)$  exists and is given by
\begin{equation}\label{eq:d+ is unique min of Ltwo}
 \dot\x(t) = \argmin{v\in F(\x(t))} \Ltwo{v},
\end{equation}
with the minimizer being unique. 

\item(Decreasing Drift Size)\quad  \label{lem:decreasing drift size}
If $t>s$, then $\Ltwo{\dot\x(t)}\le\Ltwo{\dot\x(s)}$, and the inequality is strict if ${\dot\x(t)}\ne{\dot\x(s)}$. Furthermore, an unperturbed trajectory traverses a connected sequence of at most $2^m-2$ line segments, possibly followed by a half-line.
\end{enumerate}
\end{lemma}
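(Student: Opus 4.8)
The plan is to exploit the subgradient structure and the finiteness of the drift set. For part (a), fix a time $t$ and write $c = \Phi(x(t))$. Since $\Phi$ is piecewise linear, for all small $h>0$ the point $x(t+h)$ lies in the (closed, convex) set $\bigcap_{\mu\in\maxactset(x(t))}\region_\mu$, because the active set can only shrink as we move along the trajectory (by continuity of $x$ and of each affine piece). On that set $F$ is the constant convex polytope $\conv(\maxactset(x(t)))$. Thus, locally, $x(\cdot)$ is a solution of the differential inclusion $\dot x\in -\partial\Psi(x)$ for the \emph{fixed} convex set $K=\conv(\maxactset(x(t)))$, and the standard theory of maximal monotone flows (the results of \cite{Stew11} already invoked for Lemma \ref{lem:uniqueness of trajectory of mws}) gives that the right derivative exists and equals the minimal-norm element of $-K$, i.e. $\dot x(t)=\argmin_{v\in F(x(t))}\Ltwo{v}$; uniqueness of the minimizer is just uniqueness of the projection of the origin onto the closed convex set $F(x(t))$. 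This is essentially Part 3 or 4 of Theorem 4.4 in \cite{Stew11} specialized to our setting, so I would cite it rather than reprove it.

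For the first assertion of part (b), I would argue that the active set is nonincreasing in time: if $\mu\notin\maxactset(x(s))$ then $-\mu^Tx(s)+b_\mu < \Phi(x(s))$, and I claim this strict inequality persists for all $t>s$. Indeed, consider $g(t) = \Phi(x(t)) - (-\mu^Tx(t)+b_\mu)\ge 0$; its right derivative is $-\|\dot x(t)\|^2 + \mu^T\dot x(t) = \dot x(t)^T(\mu - \dot x(t))$, and since $-\dot x(t)$ is the projection of $0$ onto the convex set $-F(x(t))$ while $-\mu\in -F(x(t))$ when $\mu$ is active, one gets $\dot x(t)^T(\mu-\dot x(t))\ge 0$ for active $\mu$ — but here I need the sign for the function $\|\dot x(t)\|$ itself. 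The cleaner route: along the trajectory $\frac{d^+}{dt}\Phi(x(t)) = -\|\dot x(t)\|^2$, and for each fixed $\mu$, $\frac{d^+}{dt}(-\mu^Tx(t)+b_\mu) = -\mu^T\dot x(t) \ge -\|\mu\|\,\|\dot x(t)\|$. Monotonicity of $\partial\Phi$ plus the minimal-selection property force $\|\dot x(t)\|$ to be nonincreasing: whenever $\maxactset(x(t))\subsetneq \maxactset(x(s))$, the new minimal-norm element is the projection of $0$ onto a smaller polytope, hence has norm no larger, and strictly larger only if the old minimizer was not in the new polytope — which is exactly the case $\dot x(t)\ne\dot x(s)$. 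I would make this a short lemma: the minimal-norm element of $\conv(S')$ has norm $\le$ that of $\conv(S)$ when $S'\subseteq S$, with equality iff the two minimizers coincide.

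For the combinatorial bound, once the active set is shown to be nonincreasing and to take values among the nonempty subsets of $\{\mu_1,\dots,\mu_m\}$, the trajectory is partitioned into maximal time intervals on each of which $\maxactset(x(\cdot))$ is constant; on each such interval the drift $\dot x$ is a fixed vector (the minimal-norm element of the corresponding polytope), so $x(\cdot)$ traces a line segment. Each interval has a distinct active set, and the active sets form a strictly decreasing chain; since the full set and the empty set are not both available — the active set is always nonempty, and it cannot drop below a singleton because a singleton active set means $\dot x$ points strictly into the open cell and the active set can never grow, so once we reach a singleton the trajectory follows a half-line thereafter — the chain has length at most $2^m - 2$ (excluding the empty set, and noting the terminal singleton segment may be a half-line rather than a bounded segment). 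The last step also uses that once $\maxactset$ is a singleton $\{\mu\}$, we have $x(t+h)=x(t)+h\mu$ for all $h\ge 0$, since a singleton active set is stable under small perturbations of $x$.

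The main obstacle I anticipate is the strict-inequality bookkeeping: proving cleanly that the active set is \emph{monotonically} nonincreasing in time (not merely locally), and that each change strictly decreases $\|\dot x\|$, which requires the quantitative fact about projections onto nested polytopes together with a continuity argument ruling out the active set oscillating. I would isolate that into an auxiliary lemma and otherwise lean on \cite{Stew11} for the differential-inclusion regularity.
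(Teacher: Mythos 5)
Your part (a) is fine, since it ultimately defers to \cite{Stew11} exactly as the paper does. But your part (b) rests on a premise that is false: the active set $\maxactset(x(t))$ is \emph{not} monotonically non-increasing along an unperturbed trajectory. The paper's own example in Fig.~1 (with $\Phi(x)=\max\{x_1,x_2,0\}$) is a counterexample: starting at $(2,1)$ the active set is the singleton $\{\mu_1\}$ and the drift is $(-1,0)$; at time $1$ the trajectory reaches $(1,1)$, where $\mu_2$ \emph{becomes} active, the active set grows to $\{\mu_1,\mu_2\}$, and the drift changes to $(-\tfrac12,-\tfrac12)$. This single example defeats your claim that the strict inequality $-\mu^Tx(s)+b_\mu<\Phi(x(s))$ persists for $t>s$, your claim that a singleton active set forces the trajectory onto a half-line forever, and the strictly decreasing chain of active sets on which your $2^m-2$ count relies. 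Separately, your auxiliary lemma is stated backwards: if $S'\subseteq S$ then $\conv(S')\subseteq\conv(S)$, so the minimal-norm element of $\conv(S')$ has norm \emph{at least} that of $\conv(S)$ --- projecting the origin onto a smaller set moves you away from the origin, not toward it. So even granting monotonicity of the active set, your mechanism would make $\Ltwo{\dot\x(t)}$ non-\emph{decreasing}.

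The correct mechanism, which is the one the paper uses, runs in the opposite direction. What closedness of the regions actually gives is upper semicontinuity: for $y$ in a small enough neighbourhood of $x(t)$ one has $\maxactset(y)\subseteq\maxactset\big(x(t)\big)$, hence $F(y)\subseteq F\big(x(t)\big)$. Taking $t$ to be the first time after $s$ at which the drift changes, the \emph{old} drift $\dot\x(s)$ therefore still belongs to the convex set $F\big(x(t)\big)$, alongside the new drift $\dot\x(t)$, which is the unique minimal-norm element of that set. Strict convexity of the Euclidean norm applied to the midpoint of these two distinct points then forces $\Ltwo{\dot\x(t)}<\Ltwo{\dot\x(s)}$. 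The combinatorial bound follows not from a decreasing chain of active sets but from counting values: $\dot\x$ is determined by $\maxactset(x)$, so it takes at most $2^m-1$ distinct values, and since its norm strictly decreases at every change no value can recur; hence at most $2^m-1$ constant-drift pieces, i.e., at most $2^m-2$ line segments possibly followed by a half-line. You should rebuild part (b) around the semicontinuity inclusion and this midpoint argument rather than around monotonicity of the active set.
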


\begin{proof}
The first part of the lemma is an immediate consequence of  Part 3 of Theorem 4.4 in \cite{Stew11}. 
For Part (b), we invoke 
Part 4 of Theorem 4.4 in \cite{Stew11} which states that $\Ltwo{\dot\x(t)}$ is a non-increasing function of time. Since for any $x$, $F(x)$ is the convex hull of $\maxactset(x)$,  it follows from (\ref{eq:d+ is unique min of Ltwo}) that  $\dot{x}(t)$ is uniquely determined by $\maxactset\big(x(t)\big)$. There are at most $2^m-1$ non-empty subsets $\maxactset(x)$ of $\actionset$. Hence, $\dot\x(t)$ can take at most $2^m-1$ different values. 

Fix a time $s\ge0$ and let $t$ be the infimum of the times $\tau>s$ for which $\dot\x(\tau)\ne \dot\x(s)$. 
The time function $\dot\x(\cdot)$ is piecewise constant and right-continuous (Part 4 of Theorem 4.4 in \cite{Stew11}). This implies that $t>s$ and 
$\dot\x(t)\ne \dot\x(s)$. Furthermore, from the strict convexity of the Euclidean norm we obtain
\begin{equation}\label{eq:middle point has less norm}
\Ltwo{\big(\dot\x(s)+\dot\x(t)\big)/2}< \max{}\big(\Ltwo{\dot\x(s)}\,,\, \Ltwo{\dot\x(t)}\big).
\end{equation} 
Since every region $\region_\mu$ is closed, there exists a sufficiently small neighbourhood $\ball$ of $x(t)$ such that if $x(t)\not\in\region_\mu$, then $\ball$ does not intersect $\region_\mu$.  
Equivalently, 
for any $y\in\ball$, we have $\maxactset(y)\subseteq\maxactset\big(x(t)\big)$, and $F(y) \subseteq F\big(x(t)\big)$. In particular, 
 consider a $\tau\in\big[s,t\big)$ such that $x(\tau)\in\ball$. Then, $\dot{x}(s)=\dot{x}(\tau)\in F\big(x(\tau)\big)\subseteq F\big(x(t)\big)$. 
 Since $F\big(x(t)\big)$ is convex, $\big(\dot\x(s)+\dot\x(t)\big)/2\in F\big(\x(t)\big)$.  Therefore,  (\ref{eq:d+ is unique min of Ltwo}) implies that $\Ltwo{\dot\x(t)} \le \Ltwo{\big(\dot\x(s)+\dot\x(t)\big)/2}$. 
 Together with (\ref{eq:middle point has less norm}), this shows that $\Ltwo{\dot\x(t)}<\Ltwo{\dot\x(s)}$. 
 
For the last statement in Part (b) of the lemma, note that there are 
at most $2^m-1$ number of different possible sets  $\maxactset(x)$,
and therefore as many choices for $F(x)$. Using \eqref{eq:d+ is unique min of Ltwo}, there are at most $2^m-1$ possible values for $\dot\x(t)$. 
As we have already shown that  $\Ltwo{\dot\x(t)}$ decreases strictly each time that it changes,  an unperturbed trajectory consists of at most $2^m-1$ pieces, with a constant derivative on each piece.  
This implies that the trajectory traverses a connected sequence of at most $2^m-2$ line segments, possibly followed by a half-line. 
\end{proof}

For any $x\in\R^n$, consider the unperturbed trajectory $z(\cdot)$ initialized with $z(0)=x$. We define the \emph{actual drift} at 
$x$ as 
$\dir(x)\triangleq \dot z(0)$,
where we continue using the convention that $\dot z$ stands for the right derivative.
According to Lemma \ref{lem:min norm d+}, the actual drift always exists and is uniquely determined by $x$.

We now proceed to define \emph{critical points}, 
which will play a central role in the sequel. 

\begin{definition}[Critical Points]\label{def:cp}
A point $p\in\R^n$ is called a \emph{critical point} if $\mathrm{span}\big(\big\{ \mu-\mu'  \,\big|\, \mu,\mu' \in\maxactset(p)  \big\}\big)= \R^n$. The set of critical points is denoted by $\CP$.
\end{definition}
An equivalent condition is that for a critical point $p$, the affine span of $\maxactset(p)$, i.e., the smallest affine space that contains $\maxactset(p)$, is equal to the entire set $\R^n$. For this to happen, $\maxactset(p)$ must have at least $n+1$ elements, and therefore $p$ must lie at the intersection of at least $n+1$ regions (although the converse is not always true). For the example in Fig.~\ref{fig:application finite part is subdiff}, $p=0$ is the only candidate and is in fact a critical point because the affine span condition is satisfied.
Furthermore, it will be shown in Lemma \ref{lem-cp: finite number} that the critical points are the extreme points of the regions $\region_{\mu}$. 

\begin{definition}[Basin of a Critical Point] \label{def:basin of a cp}
Consider some  $\basin\in\R_+\cup\{\infty\}$ and a critical point $p$, with actual drift $\dir(p)$ equal to $\dir$.
 The closed ball $\ball$ of radius $\basin$ centered at $p$ is called a basin of $p$ (and $\basin$ is called a basin radius for $p$)  if for every $x\in \ball$ and every $y\in F(x)$, we have $\dir^T y\ge \lVert \dir \rVert^2$.
\end{definition}
Note that the inequality $\dir^T y\ge \lVert \dir \rVert^2$ implies that $\| \xi\| \leq \| y\|$. 
As a result, $\dir(p)$ has the minimum norm among all possible drifts within the basin of a critical point $p$. Also note that basins of a critical point $p$ are not necessarily unique: if the radius $\basin$ is positive, another basin is obtained by reducing the radius. Figure \ref{fig:critical points} shows an example of a 
two-dimensional  system with three critical points and some associated basins. Basins of critical points will appear frequently in the sequel. 

Before moving to study the properties of critical points, we introduce one last definition.

\begin{figure} 
\begin{center}
{
\includegraphics[width = .25\textwidth]{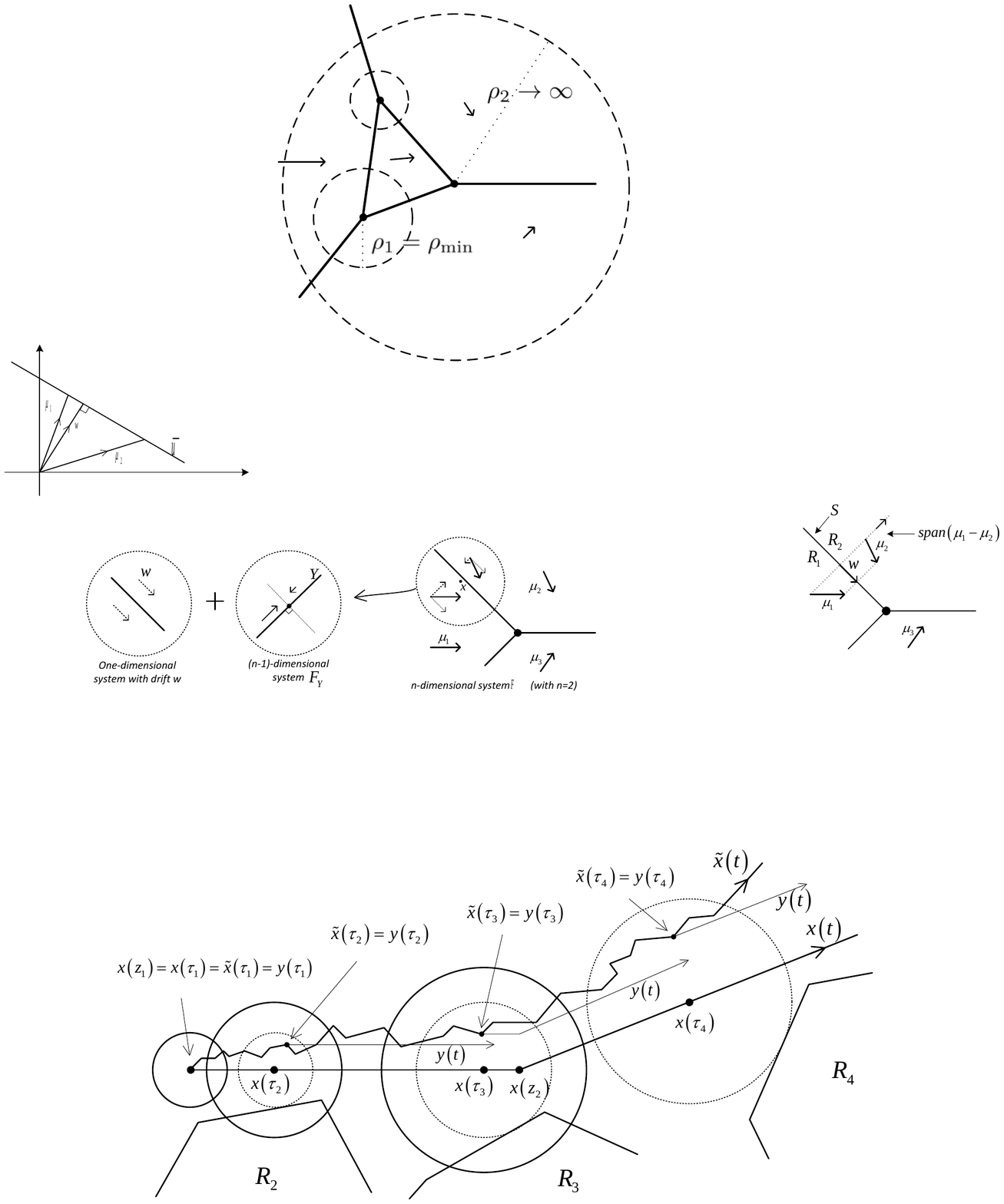}}
\end{center}
\caption{A two-dimensional FPCS system with four regions and three critical points. The balls around the critical points show examples of associated basins. Here, the basin of the rightmost critical point can be taken equal to $\R^n$, thus containing the basins of all other critical points; cf.~Lemma~\ref{lem:cp}(c).
The CNC $\basinmin$ (cf.~Definition \ref{def:def basinmin}) is also shown.
}
\label{fig:critical points}
\end{figure}

\begin{definition}[Conic Neighbourhood Constant] \label{def:def basinmin}
We define the \emph{Conic Neighbourhood Constant} (CNC), denoted by $\basinmin$, as 
\begin{equation}\label{eq:def basinmin}
\basinmin \triangleq \frac{1}{2} 
\min \left\{d\big(p,\region \big) \,\Big|\, p\in\CP,\, \region\in\regionset,\, p\notin\region   \right\}, 
\end{equation}
i.e., $\basinmin$ is half of the minimum over all critical points, of the distance of a critical point from the regions that do not contain it. We use the convention that the minimum of an empty set is infinite.
\end{definition}
Note that  $\basinmin$ is always positive (and possibly infinite). 
We say that a dynamical system $F$  is \emph{conic} if $F=-\partial \Phi$, where $\Phi$ is  of the form $\Phi(x)=\max_i \big\{ -\mu_i^T\big(x-p\big)\big\}$, for some $p\in\R^n$. It is not hard to see that for such a conic system, either  $p$ is the only critical point or no critical points exist. 
It turns out that the ``local'' dynamics in the CNC-neighbourhood of a critical point of a general system are conic, hence the name CNC. 

The   lemma that follows lists a number of useful properties of critical points.

\begin{lemma}[Properties of Critical Points] \label{lem:cp}
Consider an FPCS  system $F$, with an associated set of critical points $\CP$. 

\begin{enumerate}[label={(\alph*)}, ref={\ref{lem:cp}(\alph*)}]
\item \label{lem-cp: finite number}
A point in a region $\region_\nu$ is a critical point if and only if it 
is an extreme point of  $\region_\nu$.
In particular, 
there are finitely many critical points. \medskip
\item \label{lem-cp: basin straight movement}
Consider a critical point $p\in\CP$ and a basin radius $\basin$ for $p$. 
Let $z({\cdot})$ be the unperturbed trajectory with initial point $z(0)=p$, and let $\dir=\dot{z}(0)$ be the actual drift at $p$. Then, 
before the time that $z(\cdot)$ exits the basin, 
$\dot{z}(t)$ is constant,
 and $z(t)=p+t\dir$, for all $t\in\big[0,\,{\basin}/{\lVert \dir\rVert}\big]$.
\hide{\green{[The word "until" could make one think that it stays constant until then, and then changes. Hence the rewording.]}}

\medskip
\item \label{lem-cp: basin infinite rad}
If $\CP$ is non-empty, then there exists a critical point $p\in\CP$ such that the entire set $\R^n$ is a basin of $p$. In the special case where $F$ is conic with a unique critical point $p$, the entire set $\R^n$ is a basin of $p$.

\medskip
\item \label{lem-cp: basin cnc}
The CNC, $\basinmin$, defined in (\ref{eq:def basinmin}), is a basin radius for every critical point. 

\medskip
\item \label{lem-cp: no revisit}
Consider a basin radius $\basin$ of a critical point $p\in\CP$,   an unperturbed trajectory $x(\cdot)$, and times $t_1<t_2$. Suppose that $\Ltwo{x(t_1)-p}\le{\basin}/{3}$ and $\Ltwo{x(t_2)-p}>\basin$. Then, for any $t\ge t_2$, $\Ltwo{x(t)-p}>{\basin}/{3}$.
\hide{
\medskip
\item \label{lem-cp: close if ball bad}
For the CNC, $\gamma$, and for any $r\ge0$ and $x\in\R^n$ for which
\begin{equation}\label{eq:dim span FBrx less than n}
\mathrm{dim }\bigg( \mathrm{span }\bigg\{ \bigcup_{y\in B_r(x)}  F(y)    \bigg\} \bigg)<n,
\end{equation}
we have $d\big(x,\CP\big)\le \gamma r$, where in the above equation $B_r(x)$ is the closed $r$-neighbourhood of $x$.
}

\medskip
\item \label{lem-cp: F' vs F}
Fix some $\lambda\in\R^n$ and consider   $F'(\cdot)\triangleq F(\cdot)+\lambda$, which is also an FPCS system. Then, 
$F$ and $F'$ have
the same set of regions  $\regionset$,  the same set of critical points  $\CP$, and the same CNC  $\basinmin$. 

\end{enumerate}
\end{lemma}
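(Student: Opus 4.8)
I would prove the six parts in the order \textup{(a)}, \textup{(f)}, \textup{(b)}, \textup{(d)}, \textup{(e)}, \textup{(c)}, since the last is by far the most delicate.

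\emph{Parts \textup{(a)} and \textup{(f)}: polyhedral bookkeeping.} For \textup{(a)} I would write $\region_\nu=\{x:\ (\mu-\nu)^Tx\ge b_\mu-b_\nu\ \text{ for all }\mu\in\actionset\}$. The constraints active at a point $x\in\region_\nu$ are exactly those with $\mu\in\maxactset(x)$, with normals $\mu-\nu$, so $x$ is an extreme point of $\region_\nu$ iff $\Span\{\mu-\nu:\mu\in\maxactset(x)\}=\R^n$; since $\nu\in\maxactset(x)$, this span equals $\Span\{\mu-\mu':\mu,\mu'\in\maxactset(x)\}$, i.e.\ $x$ is critical. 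Finiteness of $\CP$ then follows from there being finitely many regions, each with finitely many extreme points. For \textup{(f)}, $F(\cdot)+\lambda=-\partial\Phi_\lambda$ with $\Phi_\lambda(x)=\Phi(x)-\lambda^Tx=\max_\mu\big(-(\mu+\lambda)^Tx+b_\mu\big)$, so passing to $F+\lambda$ replaces each drift $\mu$ by $\mu+\lambda$ and keeps the $b_\mu$; region membership, criticality, and $\basinmin$ depend on the data only through the $b_\mu$ and the differences $\mu-\mu'$, which are invariant under $\mu\mapsto\mu+\lambda$, so all three objects are unchanged.

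\emph{Parts \textup{(b)} and \textup{(d)}: the norm--sandwich inside a basin.} In a basin of $p$ with actual drift $\dir=\dir(p)$, the defining inequality $\dir^Ty\ge\Ltwo{\dir}^2$ forces $\Ltwo{y}\ge\Ltwo{\dir}$ for every $y\in F(x)$ with $x$ in the ball. For \textup{(b)}, let $z$ be the unperturbed trajectory with $z(0)=p$; by Lemma~\ref{lem:decreasing drift size}, $\Ltwo{\dot z(\cdot)}$ is non-increasing and equals $\Ltwo{\dir}$ at time $0$, while it is $\ge\Ltwo{\dir}$ as long as $z$ stays in the ball, hence equals $\Ltwo{\dir}$ there; since the drift norm decreases strictly whenever the drift changes, $\dot z(\cdot)$ is constant equal to $\dir$, so $z(t)=p+t\dir$, which stays in the ball precisely for $t\le\basin/\Ltwo{\dir}$ (everything being trivial when $\dir=0$). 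For \textup{(d)}: if $\Ltwo{x-p}\le\basinmin$ and $\mu\in\maxactset(x)$ then $x\in\region_\mu$, so $d(p,\region_\mu)\le\basinmin<2\basinmin$, which by the definition of $\basinmin$ forces $p\in\region_\mu$, i.e.\ $\mu\in\maxactset(p)$; hence $F(x)\subseteq F(p)=\conv\big(\maxactset(p)\big)$, and since $\dir(p)$ is the minimum-norm element of $\conv\big(\maxactset(p)\big)$ (Lemma~\ref{lem:min norm d+}), the variational inequality for the projection of the origin onto this set gives $\dir(p)^Ty\ge\Ltwo{\dir(p)}^2$ for all $y\in\conv\big(\maxactset(p)\big)\supseteq F(x)$ --- exactly the basin condition for radius $\basinmin$.

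\emph{Part \textup{(e)}: comparison with the ray from $p$.} If $\dir(p)=0$ then $p$ is an equilibrium and $z(t)\equiv p$, so non-expansiveness makes $\Ltwo{x(t)-p}$ non-increasing; the hypotheses $\Ltwo{x(t_1)-p}\le\basin/3<\basin<\Ltwo{x(t_2)-p}$ with $t_1<t_2$ are then impossible and the claim is vacuous. If $\dir:=\dir(p)\ne0$, I would first upgrade \textup{(b)}: whenever $z(t)$ lies in the closed ball of radius $\basin$ around $p$, the same norm-sandwich forces $\dot z(t)=\dir$; together with the strict decrease of the drift norm and the finite-piece structure of $z$, this shows $z$ lies in that ball only for $t\in[0,T]$, $T:=\basin/\Ltwo{\dir}$ (a re-entry at some $a>T$ would force $\dot z\equiv\dir$ on $[0,a]$, hence $a=T$, a contradiction), so $\Ltwo{z(t)-p}>2\basin/3$ for $t>2\basin/(3\Ltwo{\dir})$ and $\Ltwo{z(t)-p}\le 2\basin/3$ only for $t\le 2\basin/(3\Ltwo{\dir})$. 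Finally, non-expansiveness applied to the unperturbed trajectories $x(t_1+\cdot)$ and $z(\cdot)$ gives $\Ltwo{x(t_1+s)-z(s)}\le\Ltwo{x(t_1)-p}\le\basin/3$ for all $s\ge0$; if $\Ltwo{x(t_3)-p}\le\basin/3$ for some $t_3\ge t_2$, then $\Ltwo{z(t_3-t_1)-p}\le 2\basin/3$ while $\Ltwo{z(t_2-t_1)-p}>2\basin/3$ with $t_2-t_1<t_3-t_1$, a contradiction; hence $\Ltwo{x(t)-p}>\basin/3$ for all $t\ge t_2$.

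\emph{Part \textup{(c)}: realizing the global minimum-norm drift (the main obstacle).} Since $\dir(p)\in\conv(\actionset)$, the projection variational inequality shows that ``$\R^n$ is a basin of $p$'' is equivalent to $\dir(p)$ being the minimum-norm element $\xi^*$ of $\conv(\actionset)$; it therefore suffices to exhibit a critical point $p$ with $\xi^*\in\conv\big(\maxactset(p)\big)$, which forces $\dir(p)=\xi^*$. The plan is to fix a minimal $N\subseteq\actionset$ with $\xi^*\in\conv(N)$ --- so that $\mu^T\xi^*=\Ltwo{\xi^*}^2$ for every $\mu\in N$ --- and consider the polyhedron $\bigcap_{\mu\in N}\region_\mu=\{x:\ N\subseteq\maxactset(x)\}$. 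If this polyhedron contained a line of direction $d$, each $\region_\mu$ ($\mu\in N$) would be invariant under translation by $d$, which forces $\mu^Td$ to be the same for all $\mu\in\actionset$ and hence makes \emph{every} region translation-invariant along $d$, so by part \textup{(a)} $\CP=\emptyset$, contradicting the hypothesis; thus the polyhedron is pointed. Granting in addition that it is nonempty, it has an extreme point $p$, whose active constraints make it an extreme point of $\region_\mu$ for each $\mu\in N$, so $p\in\CP$ by \textup{(a)}; and $\maxactset(p)\supseteq N$ gives $\xi^*\in\conv(N)\subseteq\conv\big(\maxactset(p)\big)$, whence $\dir(p)=\xi^*$ and $\R^n$ is a basin of $p$. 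For a conic system the unique critical point satisfies $\maxactset(p)=\actionset$, so $\dir(p)=\xi^*$ automatically. The step I expect to be genuinely hard is verifying that $\bigcap_{\mu\in N}\region_\mu$ is nonempty --- i.e.\ certifying that the globally minimum-norm drift is attained at some critical point --- which seems to require a careful argument from the optimality conditions defining $\xi^*$ together with the finite-piece structure of $\Phi$.
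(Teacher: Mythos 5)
Your treatments of parts (a), (b), (e) and (f) match the paper's arguments in all essentials. Your part (d) is a genuine (and correct) shortcut: the paper proves (d) by showing that inside the $\basinmin$-ball the dynamics coincide with those of the conic system built from $\maxactset(p)$ and then invoking part (c) for that conic system, whereas you observe directly that $\maxactset(x)\subseteq\maxactset(p)$ forces $F(x)\subseteq F(p)$ and apply the projection variational inequality for the minimum-norm element of $F(p)$; this removes the dependence of (d) on (c).

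Part (c) is where there is a real gap, and it sits exactly at the step you flagged. Worse than being merely unproven, the claim that $\bigcap_{\mu\in N}\region_\mu\ne\emptyset$ is \emph{false} for an arbitrary minimal $N$ with $\xi^*\in\conv(N)$. A one-dimensional counterexample: $\Phi(x)=\max(x,\,-x,\,2x-10,\,-2x-10)$, so $\actionset=\{-1,1,-2,2\}$, $\xi^*=0$, and $N=\{-1,2\}$ is minimal with $0\in\conv(N)$; but $\region_{-1}=[0,10]$ and $\region_{2}=(-\infty,-10]$ are disjoint, even though $\CP=\{-10,0,10\}$ is nonempty (the correct witness here is $p=0$). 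So you must pick $N$ inside $\maxactset(x^*)$ for a point $x^*$ at which $\xi^*$ is actually realized as an element of $F(x^*)$, and the existence of such a point is the entire content of the step: it amounts to showing that the tilted piecewise-linear function $\Phi(x)+{\xi^*}^Tx$, which is bounded below precisely because $\xi^*\in\conv(\actionset)$, attains its minimum. The paper takes a different route that avoids this issue altogether: it minimizes $\Ltwo{\dir(x)}$ over $x\in\R^n$, shows by a linear-programming argument (sliding any $x$ to an extreme point $x'$ of its region with $\maxactset(x')\supseteq\maxactset(x)$, hence $\Ltwo{\dir(x')}\le\Ltwo{\dir(x)}$) that the minimum is attained at some critical point $p$, and then verifies the basin property by noting that $z(t)=p+t\dir^*$ for all $t\ge 0$, so that $p+t\dir^*$ minimizes $\Phi(\cdot)+{\dir^*}^T(\cdot-p)$, and letting $t\to\infty$ in the subgradient inequality to get ${\dir^*}^Ty\ge\Ltwo{\dir^*}^2$ for every $y\in F(x)$. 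If you want to keep your structure, the cleanest repair is to prove attainment of the minimum of the tilted potential (a standard fact for finite maxima of affine functions that are bounded below) and then take $N\subseteq\maxactset(x^*)$ at a minimizer $x^*$; your pointedness argument and the identification $\dir(p)=\xi^*$ then go through as you wrote them.
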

In words, part (e) states that an unperturbed trajectory that starts near a critical point $p$ and later goes sufficiently far from $p$, will never come back close to $p$.
The proof of Lemma \ref{lem:cp} is given in Appendix \ref{app:proof cp}.

\medskip
\subsection{Bounding the Deviation when the Trajectories are Far from the Set of Critical Points}
\label{sub:b}
In this subsection we bound the distance between perturbed and unperturbed trajectories,  
for the case where the perturbed trajectory stays
far from the set of critical points. To do this, we will show that when far from the set of critical points, the local dynamics are similar to those of a 
lower-dimensional system, and then use the induction hypothesis (\ref{induc:induction on dimension}).

We start with some definitions.
For any $x\in\R^n$ and $r>0$, let
\begin{equation}\label{eq:prop delta def U}
\ldset_r(x) \triangleq \bigcup_{y:\, \lVert{y-x}\rVert_2\le r}\, \maxactset(y),
\end{equation}
which is the set of  possible 
drifts in the $r$-neighbourhood of~$x$. 

\begin{definition}[Low-Dimensional Sets] \label{def:ess low dim}
We call a subset $\ldset \subseteq \R^n$ \emph{low-dimensional} if $\mathrm{span }\left\{ x-y\,\big|\, x,y\in \ldset    \right\} \ne \R^n$. 
\end{definition}
Equivalently, a set is low-dimensional if its affine span is not the entire space.  
If $x$ is a critical point, then, by definition, the vectors in $\left\{\mu_i-\mu_j \mid \mu_i,\mu_j\in \maxactset(x)\right\}$ span $\R^n$ and the set $\ldset_r(x)$ is \emph{not} low-dimensional, for any $r>0$. On the other hand, as asserted by the next lemma, which is
proved in Appendix \ref{app:proof gamma} (available in supplementary materials),
$\ldset_r(x)$ is 
low-dimensional when $x$ is sufficiently far from critical points.

\begin{lemma}\label{lem:distance from cp}
Consider an FPCS system with an associated set of critical points $\CP$. There exists $\gamma\ge 1$ 
 such that if $r>0$ and  $d\big(x,\CP\big)>\oneoverdel r$, then $\ldset_r(x)$ is low-dimensional.
\end{lemma}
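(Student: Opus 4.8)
The plan is to show that the failure set---the set of points $x$ such that $\ldset_r(x)$ is \emph{not} low-dimensional for some $r>0$---is contained in a bounded neighbourhood of $\CP$, with the neighbourhood radius controlled by $r$. Since $\ldset_r(x)$ depends on $x$ only through which regions $\region_\mu$ come within distance $r$ of $x$, the first step is to observe that $\ldset_r(x) = \bigcup\{\maxactset(y) : \Ltwo{y-x}\le r\}$ is a union of the drift sets over exactly those regions $\region_\mu$ that intersect the closed ball $B_r(x)$. So $\ldset_r(x)$ is not low-dimensional iff there is a collection of regions, all meeting $B_r(x)$, whose combined drift vectors have affine span equal to $\R^n$; equivalently (since the drifts $\mu$ of regions meeting a common point $y\in B_r(x)$ are exactly $\maxactset(y)\supseteq$\,the relevant differences), there exist $n+1$ affinely independent drifts $\mu_0,\dots,\mu_n$ with each $\region_{\mu_j}$ meeting $B_r(x)$.

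The key step is a \emph{compactness / Helly-type} argument on this finite combinatorial structure. For each subset $S\subseteq\actionset$ whose drifts affinely span $\R^n$, consider the quantity $d_S(x) \triangleq \max_{\mu\in S} d(x,\region_\mu)$; then $\ldset_r(x)$ fails to be low-dimensional only if $d_S(x)\le r$ for some such $S$. Now I claim that for any such spanning $S$, if $d_S(x)$ is small then $x$ must be close to a critical point. Indeed, consider the convex function $\psi_S(x) \triangleq \max_{\mu\in S} d(x,\region_\mu)$ (a max of convex functions, hence convex and piecewise-smooth, and coercive because the $\region_\mu$ with $\mu\in S$ cannot all contain a common ray---their recession cones are proper since $S$ affinely spans $\R^n$, actually one should check $\bigcap_{\mu\in S}\region_\mu$ is a single point or empty, being the solution set of $n$ linearly independent equality constraints plus inequalities). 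The point is: the sublevel set $\{\psi_S \le r\}$ is compact, and it shrinks to $\{\psi_S = 0\} = \bigcap_{\mu\in S}\region_\mu$ as $r\to 0$. When this intersection is nonempty it is a single point $p$ lying in $n+1$ regions whose drifts affinely span $\R^n$, i.e. a critical point; when it is empty, $\psi_S$ is bounded below by some $\epsilon_S>0$ and that subset of $S$ is simply irrelevant for small $r$. Then, since there are only finitely many subsets $S$ and finitely many critical points, one sets $\gamma$ large enough (and $\ge 1$) that $\{\psi_S\le r\}\subseteq \{d(\cdot,\CP)\le \gamma r\}$ for every relevant $S$ and every $r>0$; a scaling/homogeneity observation ($\psi_S$ is positively homogeneous after translating so that $p$ is the origin, since the $\region_\mu$ for $\mu\in S$ become cones with apex $p$ near $p$ by Lemma~\ref{lem-cp: basin cnc}-type reasoning) lets one make the bound linear in $r$ uniformly, rather than just for small $r$---or alternatively one handles small $r$ by homogeneity and large $r$ trivially since any fixed constant works once $r\ge\basinmin$, using that $\gamma r$ then exceeds the whole picture's scale.

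The main obstacle I expect is making the dependence on $r$ genuinely \emph{linear} and \emph{uniform}, i.e. getting a single constant $\gamma$ that works for \emph{all} $r>0$, not just asymptotically as $r\to 0$. Near a critical point $p$ the regions look like cones with apex $p$ (this is essentially the content of the CNC, Definition~\ref{def:def basinmin}, and the conic local structure mentioned after it), so there $\psi_S$ is positively homogeneous in $x-p$ and the bound $d_S(x)\le r \Rightarrow \Ltwo{x-p}\le \gamma_S r$ holds with a scale-free constant $\gamma_S$ for all $r$ up to the CNC scale. For $r$ larger than a fixed multiple of $\basinmin$, any ball $B_r(x)$ that meets a spanning family of regions forces $x$ within $O(r)$ of some critical point just by a diameter bound (the critical point lies in that family of regions, hence within $r+\mathrm{diam}$ of $x$, and the relevant diameters are bounded by $O(\basinmin)=O(r)$). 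Patching the small-$r$ conic estimate with the large-$r$ crude estimate, and taking $\gamma$ to be the max of the finitely many $\gamma_S$ together with $1$, yields the claimed uniform linear bound. I would carry out the steps in this order: (1) characterize $\ldset_r(x)$ via regions meeting $B_r(x)$; (2) reduce ``not low-dimensional'' to ``$d_S(x)\le r$ for some affinely-spanning $S$''; (3) for each such $S$, analyze $\psi_S$, its zero set (a critical point or empty), and conic homogeneity near that zero; (4) patch small-$r$ and large-$r$ regimes and take $\gamma$ to be the finite max.
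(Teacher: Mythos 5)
Your reduction is the same as the paper's: a drift $\mu$ belongs to $\ldset_r(x)$ iff $d(x,\region_\mu)\le r$, so failure of low-dimensionality means $\psi_S(x)\triangleq\max_{\mu\in S}d(x,\region_\mu)\le r$ for one of finitely many affinely spanning subsets $S\subseteq\actionset$, and it suffices to get a linear estimate $\psi_S(x)\ge \delta_S\, d(x,\CP)-O(1)$ for each such $S$. Where you diverge is the quantitative step. The paper replaces each $\region_{\mu_i}$ by the half-spaces $W_{i,j}=\{x: -(\mu_i-\mu_j)^Tx+b_i-b_j\ge 0\}$ containing it and invokes a Hoffman-type error bound (Lemma 5.1 of [MannT05]) to get, in one shot and globally in $x$, $\psi_S(x)\ge \max_{i,j}d(x,W_{i,j})\ge \delta\, d(x,p)$, where $p$ is the unique common solution of the associated equality system; a two-case analysis (whether or not $p$ is itself critical) then finishes. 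Your route --- compactness of sublevel sets via trivial common recession cones, identification of $\{\psi_S=0\}$ with a critical point when nonempty, and positive homogeneity of $\psi_S(p+v)$ in $v$ inside the conic (CNC) neighbourhood --- is a genuinely different and viable way to obtain the same linear estimate near $p$, and those individual claims all check out.

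The gaps are in your patching, and they are real. First, your large-$r$ argument (``the critical point lies in that family of regions, hence within $r+\mathrm{diam}$ of $x$'') does not work: the regions $\region_\mu$ are unbounded polyhedra, so a point $y_\mu\in\region_\mu\cap B_r(x)$ can be arbitrarily far from the critical point $p\in\region_\mu$, and ``the relevant diameters are $O(\basinmin)$'' has no meaning here. Second, dismissing the spanning sets $S$ with $\bigcap_{\mu\in S}\region_\mu=\emptyset$ as ``irrelevant for small $r$'' leaves exactly the regime $r\ge\epsilon_S$ unhandled, where $\{\psi_S\le r\}$ is nonempty yet contains no critical point of its own; you still must show every such $x$ satisfies $d(x,\CP)\le\gamma r$. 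Both gaps can be closed by exploiting convexity of $\psi_S$ rather than ad hoc diameter bounds: when $\bigcap_{\mu\in S}\region_\mu=\{p\}$, convexity with $\psi_S(p)=0$ propagates your local estimate $\|v\|\le\gamma_S\psi_S(p+v)$ from the conic neighbourhood to all of $\R^n$ (restrict to the segment from $p$ to $x$ and rescale), making the small-$r$/large-$r$ split unnecessary; when the intersection is empty, a convex function with compact sublevel sets grows at least linearly, which bounds $\{\psi_S\le r\}$ within an $O(r)$ ball of a fixed point, and one then uses that $\CP\ne\emptyset$ whenever a spanning $S$ exists (every nonempty region is line-free, hence has an extreme point, which is critical by Lemma 3(a)) to convert this into $d(x,\CP)\le\gamma r$. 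The paper's Hoffman bound sidesteps all of this because the estimate $\max_{i,j}d(x,W_{i,j})\ge\delta\, d(x,p)$ is already global and does not care whether $p$ lies in the regions.
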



In the sequel, it will be convenient to compare the perturbed trajectory with an unperturbed trajectory that starts at the same state at some intermediate time. This motivates the following terminology.
\begin{definition}[Coupled Trajectories]
Let  $\maxpert,\,T\geq 0$  be some constants. 
Let $x(\cdot)$ be an unperturbed trajectory. Let $\ptraj(\cdot)$ be a perturbed trajectory with a perturbation $\Prt(\cdot)$ that satisfies 
 (\ref{eq:bounded pert at all times}). If in addition we have $\ptraj(T)=x(T)$, we then say that 
$x(\cdot)$ and $\ptraj(\cdot)$ are
\emph{ $\maxpert$-coupled at time}
 $T$. 
 \end{definition}

\begin{figure*} 
\begin{center}

\subfigure[]{\includegraphics[width = .4\textwidth]{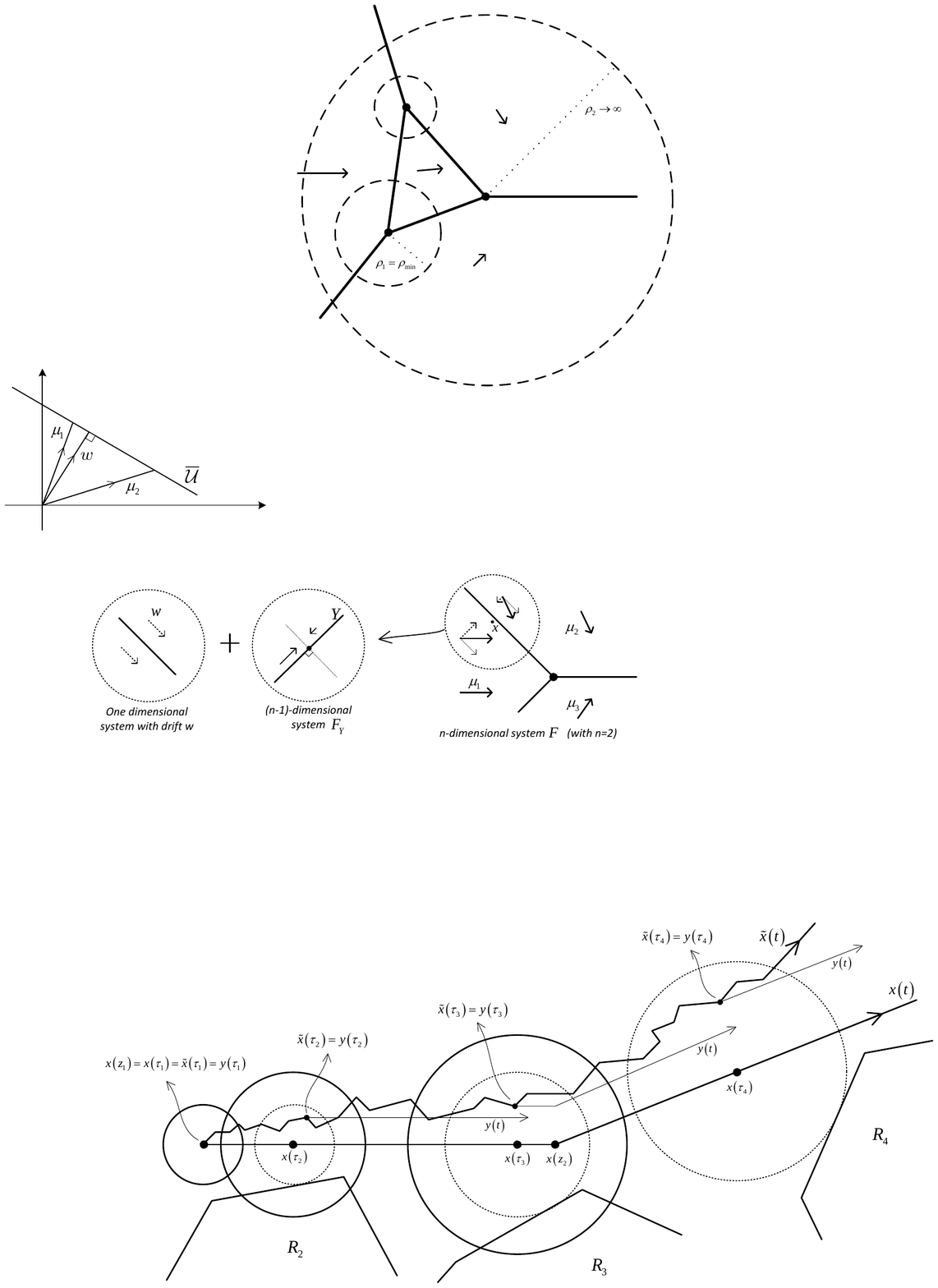}}\qquad
\subfigure[]{\includegraphics[width = .35\textwidth]{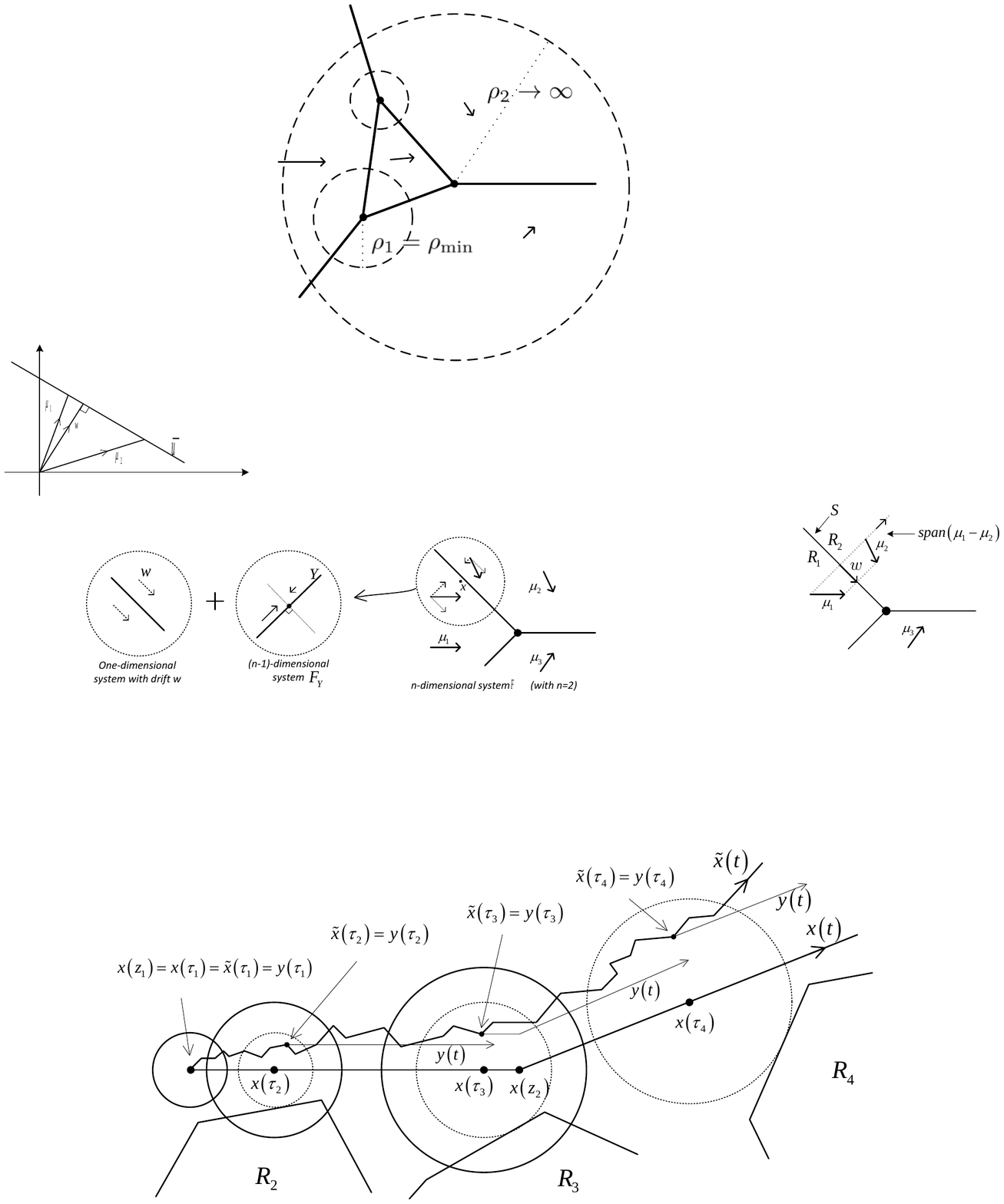}}
\vspace{.3cm}
\subfigure[]{\includegraphics[width = .8\textwidth]{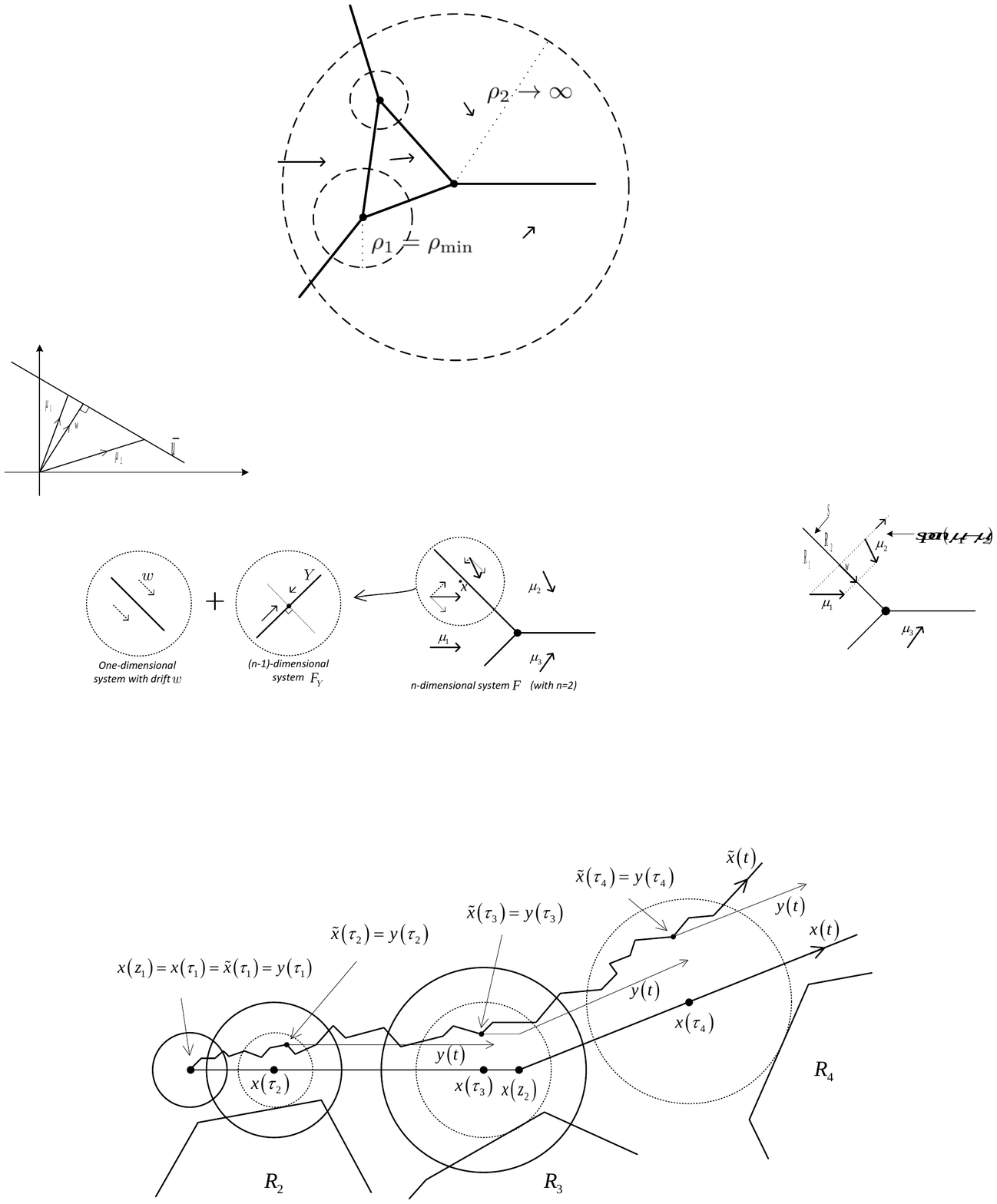}}
\vspace{-.5cm}
\end{center}
\caption{(a) Illustration of the affine space $\bar{\ldset}$ and its minimum norm element $w$,  when $\ldset$ consists of two elements $\mu_1$ and $\mu_2$.  (b) Consider the intersection $S=\{x \mid -\mu_1^Tx+b_{\mu_1}= -\mu_2^Tx+b_{\mu_2}\}$ of the two regions $\region_{\mu_1}$ and $\region_{\mu_2}$, and note that $S$ is orthogonal to the span of $\mu_1-\mu_2$. 
In this case, $w$ is  a direction of motion along $S$. (c) The dynamics can be decomposed as the superposition of a movement along $S$, in the direction of  $w$,  together with 
lower-dimensional hybrid dynamics in directions orthogonal to $S$.
}
\label{fig:2d example local projection}
\end{figure*}

The  proof will now continue along the  following lines. When far enough from the set of critical points, the set $\ldset_r(x)$ is  low-dimensional 
(Lemma \ref{lem:distance from cp}). This yields a description  of 
the dynamics  
as the superposition of an essentially $(n-1)$-dimensional FPCS system  and a constant drift.


\begin{lemma}\label{lem:locally close}
Consider an FPCS system. There exists a constant $\sigma\ge 1$ such that the following statement holds for all $T,\maxpert > 0$.  
Let $x({\cdot})$ and $\ptraj({\cdot})$ be a pair of $\maxpert$-coupled trajectories at time $0$. 
Suppose that  $\ldset\subseteq \actionset$ is low-dimensional,  and that $\ldset_{ \sigma \maxpert}\big(x(t)\big)\subseteq \ldset$, for all $t\in[0,T]$. Then, 
\begin{equation}\label{eq:desired bound!}
\Ltwo{\ptraj(t)-x(t)}\le \sigma\maxpert,\qquad \forall \ t\in[0,T].
\end{equation}
Moreover, for any $\lambda\in\R^n$, the same constant $\sigma$ also applies to the FPCS system $F(\cdot)+\lambda$.
\end{lemma}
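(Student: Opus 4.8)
The plan is to decompose the dynamics, in a neighbourhood of the unperturbed trajectory, into a constant "trend" component along an affine subspace together with a genuinely lower-dimensional FPCS system, and then invoke the induction hypothesis \eqref{induc:induction on dimension}. First I would work on the low-dimensional set $\ldset\subseteq\actionset$: since $\ldset$ is low-dimensional, its affine span $\bar{\ldset}$ lies in a translate of a proper subspace $V=\Span\{\mu-\mu' \mid \mu,\mu'\in\ldset\}$; let $w$ be the minimum-norm element of $\bar{\ldset}$ (this is the "trend", cf.\ Fig.~\ref{fig:2d example local projection}(a)), so that $w\perp V$ and every $\mu\in\ldset$ decomposes as $\mu = w + (\mu-w)$ with $\mu-w\in V$. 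The idea is that, as long as both trajectories only see drifts in $\ldset$, the motion is $z(t)=z(0)+tw+(\text{component in }V)$, and after subtracting the common rigid drift $tw$ and projecting onto $V$, what remains is the trajectory of an FPCS system on the $(\dim V)$-dimensional space $V$, with potential $\Phi_V(v)=\max_{\mu\in\ldset}\big(-(\mu-w)^T v + b_\mu'\big)$ for suitable constants. Since $\dim V \le n-1$, the induction hypothesis applies to this projected system (and, crucially, its second part lets us absorb the constant field $-w$, giving the \emph{same} constant $\divconst$ for the projected system regardless of $w$).

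Next I would choose $\sigma$. Set $\sigma = \divconst_{n-1}$, the constant from the induction hypothesis for $(n-1)$-dimensional FPCS systems (or a fixed upper bound over the finitely many possible projected subsystems, taking the max over the finitely many candidate low-dimensional $\ldset$ and choices of $V$); then $\sigma\ge 1$ since the trivial one-dimensional constant is already $\ge 1$. The key verification is that the hypothesis $\ldset_{\sigma\maxpert}(x(t))\subseteq\ldset$ for all $t\in[0,T]$ forces the \emph{perturbed} trajectory $\ptraj(t)$ to also see only drifts in $\ldset$ on $[0,T]$: I would run a continuity/bootstrap argument. As long as $\Ltwo{\ptraj(t)-x(t)}\le\sigma\maxpert$, the point $\ptraj(t)$ lies in the $\sigma\maxpert$-ball around $x(t)$, so $\maxactset(\ptraj(t))\subseteq\ldset_{\sigma\maxpert}(x(t))\subseteq\ldset$, hence $F(\ptraj(t))\subseteq\Span$-translate governed by $\ldset$; then the projected perturbed trajectory $\Pi_V(\ptraj(t)-tw)$ is a genuine perturbed trajectory of the $(\dim V)$-dimensional FPCS system with the \emph{same} perturbation $\Pi_V\Prt(\cdot)$ (whose sup-norm is still $\le\maxpert$), coupled to $\Pi_V(x(t)-tw)$ at time $0$. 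Applying the induction hypothesis (plus its $\lambda$-invariance to kill the drift $-w$) gives $\Ltwo{\Pi_V(\ptraj(t)-x(t))}\le\sigma\maxpert$; but the component of $\ptraj(t)-x(t)$ orthogonal to $V$ is exactly $\Prt(t)$ projected onto $V^\perp$ (since the drift difference lies in $V$, only the perturbation moves the trajectories apart in the $V^\perp$ direction — here is where one must be slightly careful, comparing to the displacement formula in Definition~\ref{def:integral pert traj}), which has norm $\le\maxpert\le\sigma\maxpert$. Combining, $\Ltwo{\ptraj(t)-x(t)}\le\sigma\maxpert$, so the bootstrap never breaks and \eqref{eq:desired bound!} holds on all of $[0,T]$.

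The main obstacle I anticipate is the bootstrap/continuity step: one must rule out that $\ptraj$ escapes the $\sigma\maxpert$-tube through a region whose drift is not in $\ldset$ before the induction hypothesis can be applied — i.e., the argument is mildly circular and needs to be run as a "largest $t^*$ such that the bound holds on $[0,t^*)$" argument using right-continuity of $\ptraj$ and closedness of the regions $\region_\mu$, exactly the kind of care Lemma~\ref{lem:unperturbed norm and decreasing drift} needed. The other delicate point is bookkeeping the orthogonal split: verifying that the displacement $\ptraj(t)-x(t)$ genuinely decomposes (not just bounds) as $\Pi_V(\ptraj(t)-x(t))$ plus $\Pi_{V^\perp}\Prt(t)$, using that $\ff(\tau)-\dot x(\tau)\in V$ a.e.\ whenever both drifts are governed by $\ldset$. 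The final sentence (same $\sigma$ for $F(\cdot)+\lambda$) is then immediate: adding $\lambda$ shifts every $\mu_i$ by $\lambda$, leaves all the difference vectors $\mu_i-\mu_j$ — hence $V$, hence the low-dimensionality of $\ldset$ and the sets $\ldset_r(x)$ — unchanged, and only changes the trend $w$ to $w+\lambda$, which is again absorbed by the $\lambda$-invariance built into the induction hypothesis.
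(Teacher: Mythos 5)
Your overall strategy coincides with the paper's: take $w$ to be the minimum-norm element of the affine span of $\ldset$, split the dynamics into the rigid drift $tw$ plus an FPCS system on a complementary subspace, invoke the induction hypothesis (using its $\lambda$-invariance to absorb the constant field), and close with a first-violation-time argument. There is, however, a genuine quantitative gap in how you choose $\sigma$ and close the bootstrap. You set $\sigma$ equal to (the maximum over $\tilde{\ldset}$ of) the induction constants $\divconst_{\tilde{\ldset}}$, and your combination step then reads: the $V$-component of $\ptraj(t)-x(t)$ is at most $\sigma\maxpert$, the $V^\perp$-component is at most $\maxpert$, hence $\Ltwo{\ptraj(t)-x(t)}\le \sigma\maxpert$. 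That conclusion does not follow: the orthogonal (Pythagorean) combination gives $\sqrt{\sigma^2+1}\,\maxpert>\sigma\maxpert$, so the invariant you are trying to propagate is not re-established and the bootstrap fails at the first step. A first-violation argument needs strict slack---assuming the bound on $[0,\tilde T)$ you must derive something strictly smaller---and you must also absorb a possible jump of $\Prt$ at $\tilde T$ itself, which right-continuity allows to contribute up to an extra $2\maxpert$; your bookkeeping omits this. The paper repairs both points by setting $\sigma=\max_{\tilde{\ldset}}\divconst_{\tilde{\ldset}}+4$, so the interior estimate is $(\sigma-3)\maxpert$ and the jump brings it only to $(\sigma-1)\maxpert<\sigma\maxpert$.

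A second, minor issue: your subspace $V=\Span\{\mu-\mu'\mid \mu,\mu'\in\ldset\}$ may have dimension strictly less than $n-1$, whereas the induction hypothesis \eqref{induc:induction on dimension} is stated only for $(n-1)$-dimensional systems. The paper sidesteps this by enlarging $V$ to an $(n-1)$-dimensional subspace $Y$ orthogonal to $w$ and defining the reduced FPCS system on $Y$. Both defects are easily fixable, but as written the argument does not close.
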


\begin{proof} 
Let us fix some $\mu\in\ldset$. Let $\bar{\ldset}$ be the affine span of $\ldset$:
\begin{equation}
\bar{\ldset} \triangleq \mu +\Span\big\{\nu-\mu\,\big|\, \nu\in\ldset\big\}.
\end{equation}
Note that any choice of $\mu\in\ldset$ leads to the same set $\bar{\ldset}$. Let $w$ be the projection of $0$ onto $\bar\ldset$, i.e., the smallest norm element of $\bar\ldset$; cf.~Fig.~\ref{fig:2d example local projection}(a). 
Since $w\in \bar{\ldset}$, we have 
\begin{equation}\label{eq:mu-w in U}
w-\mu\in \Span\big\{\nu-\mu\,\big|\, \nu\in\ldset\big\}.
\end{equation}
Furthermore, by the orthogonality 
of projections,  $w$  is orthogonal to the difference of any two elements of $\bar{\ldset}$. In particular, 
\begin{equation}\label{eq:w perp differences}
\begin{split}
w^T\big( \mu-\nu \big)&=0, \qquad \forall\ \nu\in \bar{\ldset},\\
w^T\big( \mu -w \big)&=0.
\end{split}
\end{equation}

Since $\ldset$ is low-dimensional, $\Span\big\{\nu-\mu\,\big|\, \nu\in\ldset\big\}$ is a proper subset of $\R^n$. 
Let $Y$ be a subspace of dimension $n-1$ that contains $\Span\big\{\nu-\mu\,\big|\, \nu\in\ldset\big\}$ and is orthogonal to  $w$.\footnote{If $w\ne 0$, then $Y$ must be the orthogonal complement of the one-dimensional space spanned by $w$. If $w=0$, then
any $(n-1)$-dimensional subspace that contains $\Span\big\{\nu-\mu\,\big|\, \nu\in\ldset\big\}$ will do, 
and the choice of $Y$ need not be unique.}
Note that by the definition of $\bar{\ldset}$,
\ifOneCol
\begin{equation}\label{eq:U - w in Y}
\bar{\ldset} -w = \mu-w +\Span\big\{\nu-\mu\,\big|\, \nu\in\ldset\big\} = \Span\big\{\nu-\mu\,\big|\, \nu\in\ldset\big\} \subseteq Y,
\end{equation}
\else
\begin{equation}\label{eq:U - w in Y}
\begin{split}
\bar{\ldset} -w &= \mu-w +\Span\big\{\nu-\mu\,\big|\, \nu\in\ldset\big\} \\
&= \Span\big\{\nu-\mu\,\big|\, \nu\in\ldset\big\} \\
&\subseteq Y,
\end{split}
\end{equation}
\fi
where the second equality is due to (\ref{eq:mu-w in U}).

Any vector has an orthogonal decomposition as the sum of its projections on $Y^\perp$ (the orthogonal complement of $Y$) and $Y$; we use the subscripts $w$ and $Y$ to indicate the corresponding components, e.g.,
\begin{equation}
\begin{split}
x &= x_w+x_Y,\\
\ptraj &= \ptraj_w+\ptraj_Y,\\
\Prt &= \Prt_w +\Prt_Y.
\end{split}
\end{equation}
We will now show that the $w$ and $Y$ components of a trajectory evolve without interacting, according to a one-dimensional system with drift $w$, and an $(n-1)$-dimensional system $F_Y$, respectively; see Fig.~\ref{fig:2d example local projection}(c)  for an illustration. 


\begin{claim}\label{claim:FY}
Consider some $x\in\R^n$, and suppose that $\actionset(x)\subseteq \ldset$. 
Then, $F(x)=w+F_Y(x_Y)$, where $F_Y: Y\to 2^Y$  is an FPCS system on the $(n-1)$-dimensional subspace $Y$.
\end{claim}
\begin{proof}[Proof of Claim]
Let $\bar{F}_Y(x) = F(x)-w$.
Since $F(x)$ is contained in the convex hull of $\ldset$,  it is also in the affine span of $\ldset$, i.e., $F(x)\subseteq\bar{\ldset}$.   Then, (\ref{eq:U - w in Y}) implies that
\begin{equation}
\bar{F}_Y(x) = F(x)-w \subseteq \bar{\ldset}-w \subseteq Y.
\end{equation}
On the other hand, $\bar{F}_Y(x)$, being equal to $F(x)-w$, is the negative of the subdifferential of
\begin{equation}\label{eq:def phi U}
\bar\Phi_Y(x) \triangleq  \max_{\mu\in \ldset} \Big[\big(-(\mu-\trend\big)^T x + b_\mu\Big];
\end{equation}
we 
used here the assumption $\actionset(x)\subseteq \ldset$.
For any $x\in\R^n$,
\begin{equation}\label{eq:phi bar = phi}
\begin{split}
\bar\Phi_Y(x) &= \bar\Phi_Y(x_w+x_Y) \\
&=\max_{\mu\in \ldset} \Big[-\big(\mu-\trend\big)^T \big(x_w+x_Y \big) + b_\mu\Big]\\
&= \max_{\mu\in \ldset} \Big[-\big(\mu-\trend\big)^T x_Y + b_\mu\Big] \\
&\triangleq \Phi_Y(x_Y),
\end{split}
\end{equation}
where the third equality is due to (\ref{eq:w perp differences}). Let $F_Y:Y\to2^Y$
 be equal to $-\Phi_Y({\cdot})$. Then, $F_Y$ is 	 an FPCS  system on the $(n-1)$-dimensional subspace $Y$.
It follows from (\ref{eq:phi bar = phi}) that $\bar\Phi_Y(x)$ only depends on $x_Y$. Therefore, its negative subdifferential $\bar{F}_Y(x)$ also only depends on $x_Y$, and $\bar{F}_Y(x) = F_Y(x_Y)$. Hence, the definition $\bar{F}_Y(x) = F(x)-w$ implies that $ F(x) = w + \bar{F}_Y(x) = w + F_Y(x_Y)$, which establishes the claim.
\end{proof}

We now appeal to the induction hypothesis (\ref{induc:induction on dimension}), 
and let $\divconst_\ldset$ be equal to the constant $\divconst_Y$
 of Theorem \ref{th:main cont} for the $(n-1)$-dimensional FPCS system $F_Y$. 
Let $\sigma$ be the maximum of all such constants $\divconst_{\tilde{\ldset}}$ plus 4, over all low-dimensional subsets $\tilde{\ldset}\subseteq\actionset$:
\begin{equation}\label{eq:def sigma}
\sigma \triangleq \max\left\{ \divconst_{\tilde{\ldset}} \,\big|\,\, {\tilde{\ldset}}\subseteq\actionset \,\textrm{ and }\, \tilde{\ldset} \textrm{ is  low-dimensional}    \right\} +4.
\end{equation}
Suppose now that we add 
a constant drift $\lambda$ to $F$.
We observe that for any given low-dimensional $\ldset$, the resulting set-valued mapping
$F_{Y}$ only changes through the addition of a constant drift $\lambda_Y$; its structure remains otherwise the same.  Hence, according to the induction hypothesis (\ref{induc:induction on dimension}), $\divconst_\ldset$ is not affected when we add a constant drift $\lambda\in\R^n$ to the dynamics. 
As a consequence, the value of $\sigma$ associated with a system $F(\cdot)$ remains the same when we consider the system 
$F(\cdot)+\lambda$.

We now return to the main part of the proof of the lemma. We argue by contradiction, and
assume that (\ref{eq:desired bound!}) fails to hold. Then, from the right-continuity of $x(t)$ and $\ptraj(t)$, there exists a time $\tilde{T}\le T$ such that
\begin{equation}\label{eq:at Tr the distance iis equal to t}
\begin{split}
&\Ltwo{\ptraj\big( \tilde{T}\big)-x\big(\tilde{T}\big)}\ge \sigma\maxpert,\\
&\Ltwo{\ptraj\big( t\big)-x\big(t\big)} <\sigma\maxpert , \quad \forall\ t<\tilde{T}.
\end{split}
\end{equation} 
It follows from (\ref{eq:at Tr the distance iis equal to t}) and the assumption $\ldset_{\sigma\maxpert}\big(x(t)\big)\subseteq \ldset$  that,  for any $t<\tilde{T}$, $\maxactset\big(x(t)\big) = \ldset_0\big(x(t)\big)\subseteq \ldset_{\sigma\maxpert}\big(x(t)\big) \subseteq\ldset$.
Furthermore,
 $\maxactset\big(\ptraj(t)\big) \subseteq \ldset_{\sigma\theta}\big(x(t)\big)\subseteq  \ldset$.

Consider some $t<\tilde{T}$ and let $\tilde{\ff}(t)\in F\big( \ptraj(t) \big)$ be a perturbed drift associated with the perturbed trajectory $\ptraj(\cdot)$ (cf. Definition \ref{def:integral pert traj}).
It follows from Claim \ref{claim:FY} that $\tilde{\ff}(t)-w\in F\big( \ptraj(t) \big)-w=F_Y\big( \ptraj(t) \big)
\subseteq Y$. 
Thus, the orthogonal decomposition of $\tilde{\ff}(t)$ yields $\tilde{\ff}_Y(t)=\tilde{\ff}(t)-w$. This allows us to develop an orthogonal decomposition of $\ptraj(t)$, as follows: 
\ifOneCol
\begin{equation}\label{eq:decompose xtilde into orthogonal directions 1} 
\begin{split}
\ptraj(t) &=  \ptraj(0) + \int_0^t \tilde{\ff}(\tau)\,d\tau   + \Prt(t)\\
&= \ptraj(0) + \int_0^t \big(w+ \tilde{\ff}_Y(\tau)\big)\,d\tau   + \Prt(t)\\
&= \Big[\ptraj_w(0) + wt   + \Prt_w(t)\Big]+ \left[\ptraj_Y(0) + \int_0^t \tilde{\ff}_Y(\tau)\,d\tau   + \Prt_Y(t)\right]\\
& = \ptraj_w(t) + \ptraj_Y(t), 
\end{split}
\end{equation}
\else
\begin{equation}\label{eq:decompose xtilde into orthogonal directions 1} 
\begin{split}
\ptraj(t) &=  \ptraj(0) + \int_0^t \tilde{\ff}(\tau)\,d\tau   + \Prt(t)\\
&= \ptraj(0) + \int_0^t \big(w+ \tilde{\ff}_Y(\tau)\big)\,d\tau   + \Prt(t)\\
&= \Big[\ptraj_w(0) + wt   + \Prt_w(t)\Big]\\
&\quad\,+ \left[\ptraj_Y(0) + \int_0^t \tilde{\ff}_Y(\tau)\,d\tau   + \Prt_Y(t)\right]\\
& = \ptraj_w(t) + \ptraj_Y(t), 
\end{split}
\end{equation}
\fi
where the last equality follows because the two terms inside brackets belong to $Y^{\perp}$ and $Y$, respectively, and therefore provide the orthogonal decomposition of $\ptraj(t)$.

 Similarly, using also the assumption $x(0)=\ptraj(0)$, 
\begin{equation}\label{eq:decompose xtilde into orthogonal directions 2} 
\begin{split}
x(t) &= \Big[\ptraj_w(0) + wt \Big]+ \left[\ptraj_Y(0) + \int_0^t \ff_Y(\tau)\,d\tau \right]\\
&= x_w(t) + x_Y(t), \
\end{split}
\end{equation}
with $\ff_Y(t)\in F_Y\big(x(t)  \big)$. 

Note that $x_Y$ is an unperturbed trajectory of the  FPCS system $F_Y$, on the $(n-1)$-dimensional subspace $Y$. Moreover, since $\tilde{\ff}_Y(t)\in F_Y\big( \ptraj(t) \big)$, $\ptraj_Y$ is a perturbed trajectory of the same system, associated with the perturbation $\Prt_Y(t)$. 
Since, $\Ltwo{\Prt_Y(\tau)}\le \Ltwo{\Prt(\tau)}\le \maxpert$, for all $\tau\ge 0$, it follows from the induction hypothesis that $\Ltwo{\ptraj_Y(t)-x_Y(t)}\le \divconst_\ldset\maxpert$ for $t<\tilde{T}$. Then, for $t< \tilde{T}$,
\begin{equation} \label{eq:the semi-final eq of local bound on x and xtilde!}
\begin{split}
\Ltwo{\ptraj(t)-x(t)} &\le \Ltwo{\ptraj_w(t) -x_w(t)} +  \Ltwo{\ptraj_Y(t) -x_Y(t)}\\
&\le \Ltwo{\Prt_w(t)}+\divconst_{\ldset} \maxpert\\
&\leq \maxpert + \big(\sigma-4 \big)\maxpert\\
&= \big(\sigma-3\big)\maxpert.
\end{split}
\end{equation}

The proof at this point would have been complete, except that 
in order to bound $\Ltwo{\ptraj(\tilde{T})-x(\tilde{T})}$, 
we need to account for the possibility that $U(t)$ has a jump at time $\tilde T$. We have
\ifOneCol
\begin{eqnarray}
\Ltwo{\ptraj(\tilde{T})-x(\tilde{T})} &\le& \limsup_{t\uparrow \tilde{T}} \Big( \Ltwo{\ptraj(\tilde{T})-\ptraj(t)} +\Ltwo{\ptraj(t)-x(t)} +\Ltwo{x(t)-x(\tilde{T})} \Big)\nonumber\\
&\le& \limsup_{t\uparrow \tilde{T}} \Ltwo{\ptraj(\tilde{T})-\ptraj(t)} \,+\, \limsup_{t\uparrow \tilde{T}}\Ltwo{\ptraj(t)-x(t)} \,+\, \limsup_{t\uparrow \tilde{T}}\Ltwo{x(t)-x(\tilde{T})}\nonumber\\
&\le& \limsup_{t\uparrow \tilde{T}}\left(\int_{t}^{\tilde{T}}\tilde{\ff}(\tau)\,d\tau + \Prt(\tilde{T}) - \Prt(t) \right) \,+\, \big(\sigma-3 \big)\maxpert \,+\, 0\nonumber\\
&\le& 2\maxpert + \big(\sigma-3\big)\maxpert \nonumber\\
&=&\sigma\maxpert -\maxpert, \label{eq:final eq of local bound on x and xtilde}
\end{eqnarray}
\else
\begin{eqnarray}
\Ltwo{\ptraj(\tilde{T})-x(\tilde{T})}& \le& \limsup_{t\uparrow \tilde{T}} \Big( \Ltwo{\ptraj(\tilde{T})-\ptraj(t)}\nonumber \\
&&+\,\Ltwo{\ptraj(t)-x(t)} +\Ltwo{x(t)-x(\tilde{T})} \Big)\nonumber\\
&\le& \limsup_{t\uparrow \tilde{T}} \Ltwo{\ptraj(\tilde{T})-\ptraj(t)} \, \nonumber\\
&&+\, \limsup_{t\uparrow \tilde{T}}\Ltwo{\ptraj(t)-x(t)} \, \nonumber\\
&&+\, \limsup_{t\uparrow \tilde{T}}\Ltwo{x(t)-x(\tilde{T})}\nonumber\\
&\le& \limsup_{t\uparrow \tilde{T}}\left(\int_{t}^{\tilde{T}}\tilde{\ff}(\tau)\,d\tau + \Prt(\tilde{T}) - \Prt(t) \right) \, \nonumber\\
&&+\, \big(\sigma-3 \big)\maxpert \,+\, 0\nonumber\\
&\le& 2\maxpert + \big(\sigma-3\big)\maxpert \nonumber\\
&=&\sigma\maxpert -\maxpert, \label{eq:final eq of local bound on x and xtilde}
\end{eqnarray}
\fi
where the third inequality is due to  (\ref{eq:the semi-final eq of local bound on x and xtilde!}) and the continuity of $x(t)$, and the last inequality is due to (\ref{eq:bounded pert at all times}) and the integrability of $\tilde{\ff}(\tau)$. 
Equation (\ref{eq:final eq of local bound on x and xtilde}) contradicts (\ref{eq:at Tr the distance iis equal to t}),
 and the lemma follows.
\end{proof}
\medskip

Suppose now that the perturbed trajectory stays far from the set of critical points throughout the time interval $[0,T]$. In light of Lemma \ref{lem:decreasing drift size}, we can divide $[0,T]$ into a finite number of subintervals during which the unperturbed system has a constant drift, use Lemma \ref{lem:locally close}  to obtain bounds on the distance of the perturbed and unperturbed trajectories during each subinterval, and then combine them to obtain a bound over the entire interval $[0,T]$. 

\begin{proposition} \label{prop:close to fluids at far}
Fix an FPCS system $F$.  Consider the constant  $\gamma$ in Lemma \ref{lem:distance from cp}, the constant $\sigma$ in Lemma \ref{lem:locally close}, and let $\saferad=m2^{m+1}\sigma$, where $m$ is the number of elements of the set $\actionset$ of drifts.  Let $x(\cdot)$ and $\ptraj(\cdot)$ be a pair of $\maxpert$-coupled trajectories at time $0$. If $d\big( \ptraj(t) , \CP \big)\ge \gamma\saferad\maxpert$ for all $t\in [0,T]$, then $\Ltwo{\ptraj(t)-x(t)}\le \saferad\maxpert$ for all $t\in[0,T]$.
\end{proposition}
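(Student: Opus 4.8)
\emph{Proof plan.} We may assume $\maxpert>0$; otherwise $\Prt\equiv 0$, so $\ptraj$ is an unperturbed trajectory with $\ptraj(0)=x(0)$, hence $\ptraj=x$ by Lemma~\ref{lem:uniqueness of trajectory of mws} and the bound is trivial. Write $e(t)\triangleq\Ltwo{\ptraj(t)-x(t)}$, so $e(0)=0$. The idea is to partition $[0,T]$ into finitely many subintervals, bound $e$ on each one by comparing $\ptraj$ with a \emph{freshly re-coupled} unperturbed reference trajectory via Lemma~\ref{lem:locally close}, and then telescope the per-subinterval bounds. Re-coupling with a genuine unperturbed trajectory (rather than, say, absorbing the accumulated discrepancy into the perturbation) is what keeps the error growth \emph{additive}, instead of exponential, across subintervals.

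Construct the partition adaptively: put $s_0=0$, $x^{(0)}\triangleq x$, and for $j\ge 1$ let $x^{(j)}$ be the unperturbed trajectory (Lemma~\ref{lem:uniqueness of trajectory of mws}) with $x^{(j)}(s_j)=\ptraj(s_j)$; let $s_{j+1}$ be the first time after $s_j$ at which $\ldset_{2\sigma\maxpert}\big(x^{(j)}(\cdot)\big)$ changes value, capped at $T$. By Lemma~\ref{lem:decreasing drift size} each $x^{(j)}$ traverses at most $2^m-1$ pieces of constant drift, and along each such piece $t\mapsto d\big(x^{(j)}(t),\region_\mu\big)$ is convex for every $\mu\in\actionset$ (distance to a convex set composed with an affine map), so $\{t:\mu\in\ldset_{2\sigma\maxpert}(x^{(j)}(t))\}$ is an interval; hence $\ldset_{2\sigma\maxpert}(x^{(j)}(\cdot))$ changes only finitely often along $x^{(j)}$. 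This makes the partition well defined, with $\ldset^{(j)}\triangleq\ldset_{2\sigma\maxpert}(x^{(j)}(t))$ constant on $[s_j,s_{j+1}]$, and — after a careful count (see below) — bounds the total number $N$ of subintervals by $m2^{m}$.

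On each subinterval we apply Lemma~\ref{lem:locally close}. Since $x^{(j)}(s_j)=\ptraj(s_j)$ and $d\big(\ptraj(s_j),\CP\big)\ge\gamma\saferad\maxpert>\gamma\cdot 2\sigma\maxpert$ (using $\saferad=m2^{m+1}\sigma>2\sigma$), Lemma~\ref{lem:distance from cp} shows $\ldset^{(j)}=\ldset_{2\sigma\maxpert}(x^{(j)}(s_j))$ is low-dimensional, hence low-dimensional throughout $[s_j,s_{j+1}]$. Shifting time so the subinterval starts at $0$, the pair $(x^{(j)},\ptraj)$ is $2\maxpert$-coupled at $0$ (the shifted perturbation $\Prt(\cdot+s_j)-\Prt(s_j)$ has size at most $2\maxpert$), and $\ldset_{\sigma\cdot 2\maxpert}(x^{(j)}(t))=\ldset^{(j)}$ on the subinterval; Lemma~\ref{lem:locally close} therefore gives $\Ltwo{\ptraj(t)-x^{(j)}(t)}\le 2\sigma\maxpert$ for $t\in[s_j,s_{j+1}]$. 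Now telescope: $x^{(j)}$ and $x$ are both unperturbed, so by non-expansiveness $\Ltwo{x^{(j)}(t)-x(t)}\le\Ltwo{x^{(j)}(s_j)-x(s_j)}=e(s_j)$ for $t\ge s_j$, whence $e(t)\le 2\sigma\maxpert+e(s_j)$ on $[s_j,s_{j+1}]$. In particular $e(s_{j+1})\le 2\sigma\maxpert+e(s_j)$, and since $e(s_0)=0$, induction yields $e(s_j)\le 2j\sigma\maxpert$ and hence $e(t)\le 2(j+1)\sigma\maxpert\le 2N\sigma\maxpert\le 2\,m2^{m}\sigma\maxpert=\saferad\maxpert$ on all of $[0,T]$, as claimed.

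The step I expect to be the main obstacle is the count $N\le m2^{m}$. Because each re-coupling launches a brand-new unperturbed trajectory, one cannot simply invoke Lemma~\ref{lem:decreasing drift size} for a single trajectory over the whole horizon; instead one must show that the per-trajectory piecewise structure does not ``refresh'' uncontrollably across re-couplings — concretely, that the minimum-norm drift (the quantity $\Ltwo{\dot x}$ of Lemma~\ref{lem:decreasing drift size}) is non-increasing across re-coupling points, and that along a constant-drift segment the monotonic character of the subgradient flow forces each drift $\mu$ to enter $\ldset_{2\sigma\maxpert}(x^{(j)}(\cdot))$ at most once (rather than enter and then leave), capping the changes per segment at $m$. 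A minor additional point, handled exactly as in the proof of Lemma~\ref{lem:locally close}, is that $\Prt$ may jump at the $s_j$: such a jump costs at most $2\maxpert$, which is already absorbed by the $2\sigma\maxpert$ estimate, since $\sigma$ was defined with an additive slack of $4$.
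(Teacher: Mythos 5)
Your overall architecture (subdivide $[0,T]$, re-couple a fresh unperturbed trajectory at the start of each subinterval, apply Lemma \ref{lem:locally close} on each one, and telescope via non-expansiveness) is the same as the paper's, and the telescoping arithmetic is fine. But the step you yourself flag as the main obstacle --- the count $N\le m2^m$ --- is a genuine gap, and the route you sketch for closing it does not work. First, the claim that the minimum-norm drift is non-increasing across re-coupling points is false in general: the re-coupling jumps the reference from $x^{(j)}(s_{j+1})$ to $\ptraj(s_{j+1})$, a \emph{different} point, and $\Ltwo{\dir(\cdot)}$ is neither continuous nor monotone under such jumps (e.g., $x^{(j)}(s_{j+1})$ may lie on a face where several drifts are active and the minimum-norm element of their convex hull is small, while $\ptraj(s_{j+1})$ sits in the interior of a single region with a large drift). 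Lemma \ref{lem:decreasing drift size} only controls a single unperturbed trajectory, so the concatenation of your $x^{(j)}$'s can have an uncontrolled number of affine segments. Second, even along one affine segment, with a \emph{fixed} radius $2\sigma\maxpert$ a drift $\mu$ can enter the neighbourhood, leave it, and re-enter on a later segment or a later $x^{(j)}$; nothing in your construction prevents the same region from triggering arbitrarily many subinterval boundaries.

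The paper's proof avoids both problems with two devices you are missing. (i) All phase times are defined along the \emph{single original} unperturbed trajectory $x(\cdot)$ (not along the re-coupled references), so Lemma \ref{lem:decreasing drift size} caps the number of affine segments at $2^m-1$ once and for all; the re-coupled trajectories $y$ are used only for the per-phase estimate, and the inclusion $\ldset_{\sigma\maxpert}(y(t))\subseteq\ldset$ needed for Lemma \ref{lem:locally close} is recovered from $\Ltwo{x(t)-y(t)}\le(k-1)\sigma\maxpert$ via non-expansiveness. (ii) The neighbourhood radius \emph{grows} with the phase index, $\tau_{k+1}=\inf\{t:\ldset_{k\sigma\maxpert}(x(t))\not\subseteq\ldset_{k\sigma\maxpert}(x(\tau_k))\}$, so a region $\region_\mu$ that triggers phase $k$ does so exactly when $d(x(\tau_k),\region_\mu)=(k-1)\sigma\maxpert$; if the same region triggered two phases $k_1<k_2$ within one affine segment, the convex function $t\mapsto d(x(t),\region_\mu)$ would have to satisfy $f(\tau_{k_1})<f(\tau_{k_2})<f(\tau_{k_2-1})$ with $\tau_{k_1}\le\tau_{k_2-1}\le\tau_{k_2}$, contradicting convexity. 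This gives at most $m$ triggers per segment and hence $\Kmax<m2^m$. Without some analogue of the growing-radius device (or another mechanism that makes repeat triggers impossible), your fixed-radius, re-coupled construction does not yield a finite bound on $N$, and the proposition does not follow.
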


\begin{proof}
As already mentioned, we will
 divide the  interval $[0,T]$  into at most $m{2^m}$ subintervals. We will then  use Lemma \ref{lem:locally close} to show that the distance between the two trajectories can only increase by an additive factor of  $\sigma\maxpert$ in each subinterval.  

We define a sequence of times $\tau_k$ by letting $\tau_1=0$ and
\begin{equation}\label{eq:def tau k}
\tau_{k+1} \triangleq \inf\left\{ t\in(\tau_k,T]\,\,\Big|\,\, \ldset_{k\sigma\maxpert}\big(x(t)\big)\not\subseteq  \ldset_{k\sigma\maxpert}\big(x(\tau_k)\big) \right\},
\end{equation}
for $k\ge 1$, 
with the convention that $\tau_{k+1}=T$ if the set on the right-hand side 
of \eqref{eq:def tau k} is empty. 
In words, $\tau_{k+1}$ is the time that the $(k\sigma\maxpert)$-neighbourhood of the unperturbed trajectory touches a new region, which does not intersect with the $(k\sigma\maxpert)$-neighbourhood of $x(\tau_k)$. Let $\Kmax$ be the maximum  $k$ such that $\tau_k<T$, so that
 $\tau_{\Kmax+1}=T$. For $k\le\Kmax$, we refer to the interval $[\tau_k, \tau_{k+1}]$ as phase $k$; see Fig. \ref{fig:phases far from critical}  for an illustration. 

\begin{figure*} 
\begin{center}
{
\ifOneCol
\includegraphics[width = 1\textwidth]{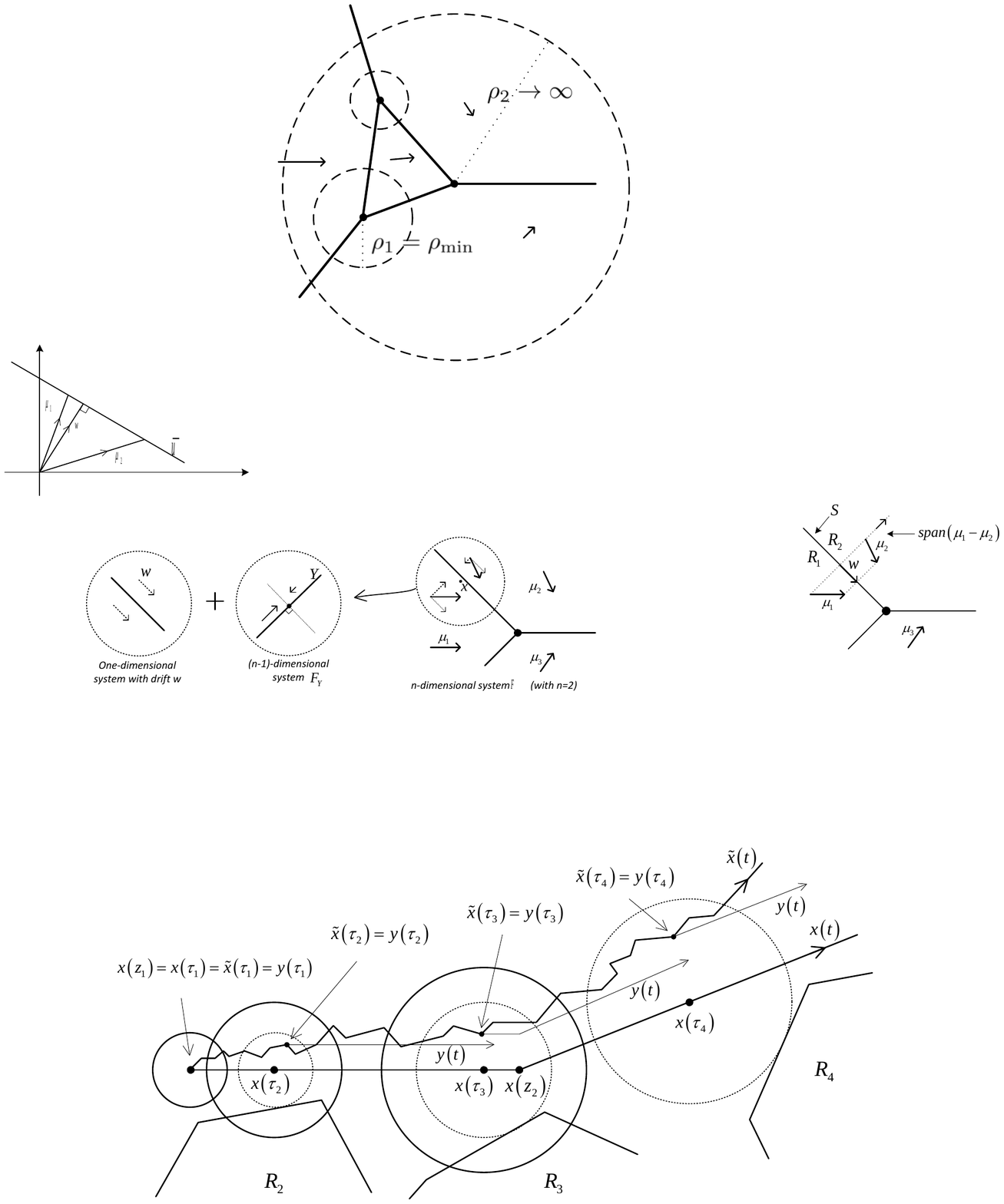}}
\vspace{-1cm}
\else
\includegraphics[width = .8\textwidth]{fig_phases_far_from_critical.pdf}}
	\vspace{-.2cm}
\fi
\end{center}
\caption{An illustration of the different phases and variables used in the proof of Proposition \ref{prop:close to fluids at far}
 (not all regions are shown). A new phase begins at time $\tau_{k+1}$  when the $(k\sigma\theta)$-neighbourhood of the unperturbed trajectory, $x(\cdot)$, touches a new region, $\region_{k+1}$. 
At the beginning of a phase, an auxiliary unperturbed trajectory $y(\cdot)$ is coupled with the perturbed trajectory $\ptraj(\cdot)$. The dotted circle is a translation of a solid circle,  centered at $x(t)$.  As soon as the dotted circle touches a new region boundary, a new solid circle, with larger radius is created.
 }
\label{fig:phases far from critical}
\end{figure*}

First, we show that the number of phases, $\Kmax$, is less than $m2^m$. 
According to Lemma \ref{lem:decreasing drift size}, the time interval $[0,T]$ can be partitioned into at most $2^{m}-1$ subintervals $[z_j,z_{j+1}),\,1\le j\le 2^m-1$, during each of which the unperturbed trajectory $x(t)$ is a line segment; that is, there exists a sequence of vectors $\dir_j,\,1\le j\le {2^m}-1$ such that
\begin{equation}\label{eq:linearly traverse a line segment}
 x(t)=x(z_j) + (t-z_j)\dir_j, \qquad \forall\ t\in [z_j, z_{j+1}].
\end{equation}
We argue that at most $m$ phase changes are possible during a  subinterval $[z_j,z_{j+1}]$, i.e., at most $m$ of the times $\tau_k$s lie in the interval $[z_j,z_{j+1}]$. Suppose that there are $l$ phase changes (for some $l\ge 0$), at times $\tau_{k_j+1}, \ldots, \tau_{k_j+l}\in(z_j,z_{j+1}]$. For each $k\in\big\{k_j+1,\ldots, k_j+l\big\}$, let $\mu_k\in\actionset$ be a drift that caused the phase change at time $\tau_{k}$, i.e.,
\begin{equation}\label{eq:the action that caused phase trans}
\mu_k\,\in\, \ldset_{(k-1)\sigma\maxpert}\big(x(\tau_k)\big)\, \backslash\, \ldset_{(k-1)\sigma\maxpert}\big(x(\tau_{k-1})\big).
\end{equation}
Equivalently,
\begin{equation}\label{eq:the action that caused phase trans regions}
d\big(x(\tau_{k-1}),\, \region_k\big) \,>\, d\big(x(\tau_k),\, \region_k\big) \,=\, (k-1)\sigma\maxpert,
\end{equation}
where $\region_k {\triangleq} \region_{\mu_k}$ is the effective region of $\mu_k$. We will now show that these regions  $\region_k$, for $k\in\big\{k_j+1,\ldots, k_j+l\big\}$, are distinct for different $k$. In order to draw a contradiction, suppose that there are  $k_1,k_2\in\big\{k_j+1,\ldots, k_j+l\big\}$ with $k_1<k_2$ such that $\mu_{k_1}=\mu_{k_2}$, or equivalently $\region\triangleq \region_{k_1}=\region_{k_2}$. 
Let  $f:[z_j,z_{j+1}]\to \R_+$ be the distance between $x(t)$ and the region $\region$:
\begin{equation}
f(t) \triangleq d\big( x(t)\,,\,  \region \big), \quad \forall\, t\in [z_j,z_{j+1}].
\end{equation}
The region $\region$ is a convex set. Therefore $f(\cdot)$  is the composition of a convex function (the distance from $\region$) and an affine function $x(t):[z_j,z_{j+1}]\to\R^n$ (see (\ref{eq:linearly traverse a line segment})). Hence, $f$ is also convex. Moreover, it follows from (\ref{eq:the action that caused phase trans regions}) and the assumption $k_1<k_2$ that
\begin{equation} \label{eq:fineq}
f(\tau_{k_2-1}) \,>\, f(\tau_{k_2}) \,=\, (k_2-1)\sigma\maxpert \,>\, (k_1-1)\sigma\maxpert \,=\, f(\tau_{k_1}).
\end{equation}
However, since $f$ is convex and $\tau_{k_1}\le\tau_{k_2-1}\le\tau_{k_2}$, we must have $f(\tau_{k_2-1})\le \max\big(f(\tau_{k_1})\, ,\, f(\tau_{k_2}) \big)$, which contradicts  \eqref{eq:fineq}. 
Hence,  
each distinct $k_i \in\big\{k_j+1,\ldots, k_j+l\big\}$ is associated with a disticnt region $R_{k_i}$.
On the other hand, since the number of different regions is at most $m$, there are at most $m$ phase changes during each of the at most $2^m-1$ line segments in the trajectory of $x(\cdot)$, and the total number of phases, $\Kmax$, is smaller than $m2^m$.

In our next step, we use induction on the phases to show that if $k\le \Kmax$, then
\begin{equation}\label{eq:induction on phases tau k}
\Ltwo{\ptraj(t)-x(t)} \le k\sigma\maxpert, \quad \forall\, t\in [\tau_k,\tau_{k+1}].
\end{equation}
For any $k\geq 1$, we consider the induction hypothesis
\begin{equation}\label{eq:induction hypo of induction on phases tau k}
\Ltwo{\ptraj(\tau_k)-x(\tau_k)} \le (k-1)\sigma\maxpert.
\end{equation}
Note that \eqref{eq:induction hypo of induction on phases tau k} is automatically true for $k=1$, because $\tau_1=0$ and $\ptraj(0)-x(0)$ has been assumed to be zero. This provides the basis of the induction.
Using the triangle inequality  and the inequalities $\gamma\ge1$ and $\eta=m2^{m+1}\sigma \geq 2\Kmax\sigma
\geq 2k\sigma$, we obtain 
\begin{equation}\label{eq:x is far far from CP}
\begin{split}
d\big(x(\tau_k),\CP\big) &\,\ge\, d\big(\ptraj(\tau_k),\CP\big) - 
\Ltwo{x(\tau_k) - \ptraj(\tau_k)}\\
&\,\ge\,\oneoverdel\saferad\maxpert  - (k-1)\sigma\maxpert\\
&\,\ge\, 2\gamma {k} \sigma\maxpert - (k-1)\sigma\maxpert\\
&\,\ge\, \gamma k \sigma\maxpert.
\end{split}
\end{equation}
Let  ${\ldset}=\ldset_{k\sigma\maxpert}\big(x(\tau_k)\big)$. It follows from (\ref{eq:x is far far from CP}) and Lemma \ref{lem:distance from cp}, with $r=k\sigma\theta$, that $\ldset$ is low-dimensional.
Furthermore, the definition of $\tau_{k+1}$ in (\ref{eq:def tau k}) implies that
\begin{equation} \label{eq:Uksigma x remains in subset of tauk}
\ldset_{k\sigma\maxpert}\big(x(t)\big) \subseteq {\ldset}, \qquad \forall\ t\in[\tau_k,\tau_{k+1}).
\end{equation}

Let $y(\cdot)$ be an unperturbed trajectory with initial condition $y(\tau_k)=\ptraj(\tau_k)$. Since the unperturbed dynamics are non-expansive, for any $t\ge \tau_k$, we have 
\ifOneCol
\begin{equation}\label{eq:y stays in k-1sigma neighbourhood of x}
\Ltwo{x(t)-y(t)}\le \Ltwo{x(\tau_k)-y(\tau_k)} = \Ltwo{x(\tau_k) - \ptraj(\tau_k)} \le (k-1)\sigma\maxpert.
\end{equation} 
\else
\begin{equation}\label{eq:y stays in k-1sigma neighbourhood of x}
\begin{split}
\Ltwo{x(t)-y(t)} &\,\le\, \Ltwo{x(\tau_k)-y(\tau_k)} \\
&\,=\, \Ltwo{x(\tau_k) - \ptraj(\tau_k)}\\
& \,\le\, (k-1)\sigma\maxpert.
\end{split}
\end{equation}
\fi
It is not hard to see that 
 (\ref{eq:y stays in k-1sigma neighbourhood of x}) and (\ref{eq:Uksigma x remains in subset of tauk}) imply that
\begin{equation}\label{eq:ball y inside ball x}
\ldset_{\sigma\maxpert}\big(y(t)\big) \subseteq \ldset,  \qquad \forall\ t\in[\tau_k,\tau_{k+1}),
\end{equation}
Hence, the conditions of Lemma \ref{lem:locally close} hold, with the initial time being $\tau_k$ instead of zero. Therefore, $\Ltwo{\ptraj(t)-y(t)}\le \sigma\maxpert$, for all $t\in[\tau_k,\tau_{k+1}]$. As a result, for $t\in [\tau_k,\tau_{k+1}]$,
\ifOneCol
\begin{equation}
\Ltwo{\ptraj(t)-x(t)}\le \Ltwo{\ptraj(t)-y(t)} + \Ltwo{y(t)-x(t)}\le \sigma\maxpert + (k-1)\sigma\maxpert = k\sigma \maxpert,
\end{equation}
\else
\begin{equation}
\begin{split}
\Ltwo{\ptraj(t)-x(t)}&\,\le\, \Ltwo{\ptraj(t)-y(t)} + \Ltwo{y(t)-x(t)}\\
 &\,\le\, \sigma\maxpert + (k-1)\sigma\maxpert \\
&\,=\, k\sigma \maxpert,
\end{split}
\end{equation}
\fi
where the second inequality is due to (\ref{eq:y stays in k-1sigma neighbourhood of x}). 
This establishes \eqref{eq:induction on phases tau k} and, in particular, that \eqref{eq:induction hypo of induction on phases tau k} holds with $k$ replaced by $k+1$ (the induction step). 
Finally,  the proposition follows from (\ref{eq:induction on phases tau k}) and the fact that $k\le\Kmax<m2^m$.
\end{proof}
\medskip

\subsection{Proof of the Bound when Close to a Critical Point}
\label{sub:c}
 In Proposition \ref{prop:close to fluids at far}, we presented a bound on the distance between  the trajectories 
when there are no nearby critical points. The next proposition deals with the other extreme, where the trajectories are in a basin of a critical point.

\begin{proposition}\label{prop:close in the basin}
Consider two constants $\maxpert,T>0$, a critical point $p\in\CP$ and a basin  $\ball_{\Rtwo}$ of radius $\Rtwo$ for $p$.
Let $x(\cdot)$ and $\ptraj(\cdot)$ be a pair of $\maxpert$-coupled trajectories at time $0$, with $\ptraj(t)\in \ball_{\Rtwo}$,  for all $t\in[0,T]$. Suppose that $0<\Rone<\Rtwo$, 
with $\Rtwo$ possibly infinite, 
and that for the ball $\ball_{\Rone}$ of radius $\Rone$ centered at $p$, 
\begin{equation}\label{eq:ring far from cp}
d\big(\ball_{\Rtwo}\backslash\ball_{\Rone}\,,\,\CP\big)\,  \ge \, \big(\gamma+1 \big) \saferad  \maxpert,
\end{equation}
where  $\saferad=m2^{m+1}\sigma$ and $\gamma$ are the constants defined in Proposition \ref{prop:close to fluids at far} and Lemma \ref{lem:distance from cp}, respectively. 
Then,  $\Ltwo{\ptraj(t)-x(t)}\le 4\Rone$, for all $t\in[0,T]$.
\end{proposition}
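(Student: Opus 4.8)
\medskip
\noindent\textbf{Proof plan.}
Write $\xi=\dir(p)$. The first thing I would extract is that \eqref{eq:ring far from cp} forces the perturbation scale $\maxpert$ to be tiny next to $\Rone$: a point at distance just above $\Rone$ from $p$ lies in the annulus $\ball_\Rtwo\setminus\ball_\Rone$ and is within (just over) $\Rone$ of $p\in\CP$, so \eqref{eq:ring far from cp} gives $\Rone\ge(\gamma+1)\saferad\maxpert$; in particular $\saferad\maxpert\le\Rone/2$ and $\gamma\saferad\maxpert\le\Rone$. The second basic ingredient is that, since $\ball_\Rtwo$ is a basin of $p$ and $\ptraj$ never leaves $\ball_\Rtwo$, every perturbed drift $\ff(t)\in F(\ptraj(t))$ obeys $\xi^{T}\ff(t)\ge\|\xi\|^{2}$ (Definition \ref{def:basin of a cp}); integrating and using $\|\Prt\|\le\maxpert$, the scalar $\xi^{T}\big(\ptraj(t)-p\big)$ is essentially nondecreasing, with $\xi^{T}\big(\ptraj(t)-p\big)\ge\xi^{T}\big(\ptraj(s)-p\big)+(t-s)\|\xi\|^{2}-2\|\xi\|\maxpert$ for $s\le t$.

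Next I would dispose of the time stretches during which $\ptraj$ stays in the open annulus $\ball_\Rtwo\setminus\ball_\Rone$. On any such maximal interval, $\ptraj$ is at distance $\ge(\gamma+1)\saferad\maxpert\ge\gamma\saferad\maxpert$ from $\CP$, so coupling $\ptraj$ with the unperturbed trajectory $y$ started at $\ptraj$'s position at the left endpoint of the interval, Proposition \ref{prop:close to fluids at far} yields $\|\ptraj-y\|\le\saferad\maxpert$ throughout, while non-expansiveness of the unperturbed dynamics gives $\|y-x\|\le\|\ptraj-x\|$ evaluated at the left endpoint. Hence over such a stretch $\|\ptraj(t)-x(t)\|$ grows by at most $\saferad\maxpert\le\Rone$ compared with its value when the stretch began. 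In particular, if $\ptraj$ never enters $\ball_\Rone$, then applying this with the whole interval $[0,T]$ and using $\ptraj(0)=x(0)$ already gives $\|\ptraj(t)-x(t)\|\le\saferad\maxpert\le\Rone$, and we are done.

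The substantive case is when $\ptraj$ does enter $\ball_\Rone$; let $t_1$ be the first such time. By the previous step $\|\ptraj(t_1)-x(t_1)\|\le\saferad\maxpert\le\Rone/2$, and since $\ptraj(t_1)\in\ball_\Rone$ also $\|x(t_1)-p\|\le\tfrac32\Rone$. I would then compare $x$ with the central unperturbed trajectory $z$ with $z(t_1)=p$: by Lemma \ref{lem-cp: basin straight movement}, $z(t)=p+(t-t_1)\xi$ for as long as $z$ remains in $\ball_\Rtwo$, and non-expansiveness gives $\|x(t)-z(t)\|\le\|x(t_1)-p\|\le\tfrac32\Rone$ for all $t\ge t_1$. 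The decisive point is that the ``near-$p$'' episode has bounded duration: if $\ptraj(t)\in\ball_\Rone$ for some $t\ge t_1$, then $\|\xi\|\Rone\ge\xi^{T}\big(\ptraj(t)-p\big)\ge-\|\xi\|\Rone+(t-t_1)\|\xi\|^{2}-2\|\xi\|\maxpert$, hence $(t-t_1)\|\xi\|\le2\Rone+2\maxpert$. Consequently $z$ (and therefore $x$, which trails $z$ by at most $\tfrac32\Rone$) stays within $O(\Rone)$ of $p$, and $z$ stays in $\ball_\Rtwo$, throughout the entire window in which $\ptraj$ can still be in $\ball_\Rone$, so $\|\ptraj(t)-x(t)\|=O(\Rone)$ whenever $\ptraj(t)\in\ball_\Rone$. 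Feeding this back through the annulus-excursion estimate of the previous paragraph (each excursion adds at most $\saferad\maxpert\le\Rone$), and choosing with some care the intermediate radii (e.g.\ $\Rone$, $2\Rone$, $3\Rone$) at which one switches between the ``annulus'' and ``near-$p$'' estimates, yields $\|\ptraj(t)-x(t)\|\le4\Rone$ on all of $[0,T]$. The degenerate case $\xi=0$ is actually easier: then $0\in-\partial\Phi(p)$, so $p$ minimizes $\Phi$, which makes $\|x(t)-p\|$ nonincreasing along every unperturbed trajectory (since $(x-p)^{T}g\ge\Phi(x)-\Phi(p)\ge0$ for $g\in\partial\Phi(x)$); thus $\|x(t)-p\|\le\|x(t_1)-p\|\le\tfrac32\Rone$ for $t\ge t_1$, and the same bookkeeping closes the argument.

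I expect the main obstacle to be precisely this last bookkeeping: $\ptraj$ may oscillate many times between $\ball_\Rone$ and the annulus, and one must simultaneously keep (i) $x$ close to $p$ while $\ptraj$ is near $p$ --- via the comparison with the straight trajectory $z$ and the duration bound above --- and (ii) $x$ (and $z$) inside the basin $\ball_\Rtwo$, so that the inequality $\xi^{T}v\ge\|\xi\|^{2}$ remains usable; pinning down the specific constant $4$ then hinges on the careful choice of the radii at which the two regimes are patched together.
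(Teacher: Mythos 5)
Your plan follows the paper's proof essentially step for step: the same nested balls around $p$, the same alternation between annulus excursions (handled by Proposition~\ref{prop:close to fluids at far} applied to a fresh unperturbed trajectory re-coupled to $\ptraj$ at the start of each excursion) and near-$p$ episodes (handled by comparing with the central unperturbed trajectory $z$ started at $p$ at the first entry time), the same integrated basin inequality $\dir^T\ff(t)\ge\Ltwo{\dir}^2$ to bound the duration of the window during which $\ptraj$ can be in $\ball_{\Rone}$, and the same separate, easier treatment of the equilibrium case $\dir(p)=0$.

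The one step you flag as "bookkeeping" is, as literally phrased, the step that would fail: "each excursion adds at most $\saferad\maxpert$" cannot be chained additively, because the number of oscillations between $\ball_{\Rone}$ and the annulus is not bounded (this is exactly the chattering issue the paper highlights in Section~\ref{sec:main}). The resolution — which your own ingredients already support — is that the bound during each excursion is not incremental but re-anchored at $p$: writing $x^i$ for the unperturbed trajectory coupled to $\ptraj$ at the start of excursion $i$, one bounds $\Ltwo{\ptraj(t)-x(t)}\le\Ltwo{\ptraj(t)-x^i(t)}+\Ltwo{x^i(t)-z(t)}+\Ltwo{z(t)-x(t)}$, where the first term is $\le\saferad\maxpert$ by Proposition~\ref{prop:close to fluids at far}, the second is $\le 3\Rone$ by non-expansiveness because at the coupling time both $x^i$ and $z$ are within $O(\Rone)$ of $p$ (for $z$ this uses your escape-time bound, via Lemma~\ref{lem:decreasing drift size} which gives $\Ltwo{\dot z(t)}\le\Ltwo{\dir}$ globally), and the third is $\le\Rone$ by non-expansiveness from the first entry time. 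Nothing accumulates. Two smaller remarks: the basin inequality is only ever needed for the perturbed trajectory, which is in $\ball_{\Rtwo}$ by hypothesis, so your worry about keeping $x$ and $z$ inside the basin is moot — for $z$ the monotone-drift-norm property suffices and holds everywhere; and to land exactly on the constant $4$ the paper takes the inner radii to be $r_1=\Rone-\saferad\maxpert$ and $r_2=r_1+3\maxpert$ (both below $\Rone$), rather than multiples of $\Rone$, which keeps every term in the triangle inequality at most $3r_2+\saferad\maxpert+\Rone<4\Rone$.
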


\begin{proof}
Since the critical point $p$ belongs to $\ball_{\Rtwo}$, \eqref{eq:ring far from cp}, implies that $p$ must belong to $\ball_{\Rone}$, and its distance from $\ball_{\Rtwo}\backslash\ball_{\Rone}$ is therefore at most $\Rone$. Hence, $d\big(\ball_{\Rtwo}\backslash\ball_{\Rone}\,,\,\CP\big)\leq \Rone$ and, in particular, $\Rone>\eta\theta$. 
Let
\begin{equation}
r_1 \triangleq \Rone-\saferad\maxpert ,\quad r_2\triangleq r_1+3\maxpert,
\end{equation}
and consider two balls $\ball_{r_1}$ and $\ball_{r_2}$ centered at $p$, with radii $r_1$ and $r_2$, respectively.  
Since $p$ is a critical point, it is in the intersection of at least two regions. Therefore,  the number  $m$ of elements of the set $\actionset$ of drifts is at least two, and 
\begin{equation} \label{eq:eta16}
\saferad =m2^{m+1}\sigma\ge m2^{m+1} \ge 16.
\end{equation}
As a result, 
 $r_2\le \Rone $ and $\ball_{r_1}\subset \ball_{r_2}  {\subset}\, \ball_{\Rone}\subset\ball_{\Rtwo}$;
see Fig.~\ref{fig:proof of the dragging lemma}.

\begin{figure*} 
\begin{center}
{
\ifOneCol
\includegraphics[width = 1\textwidth]{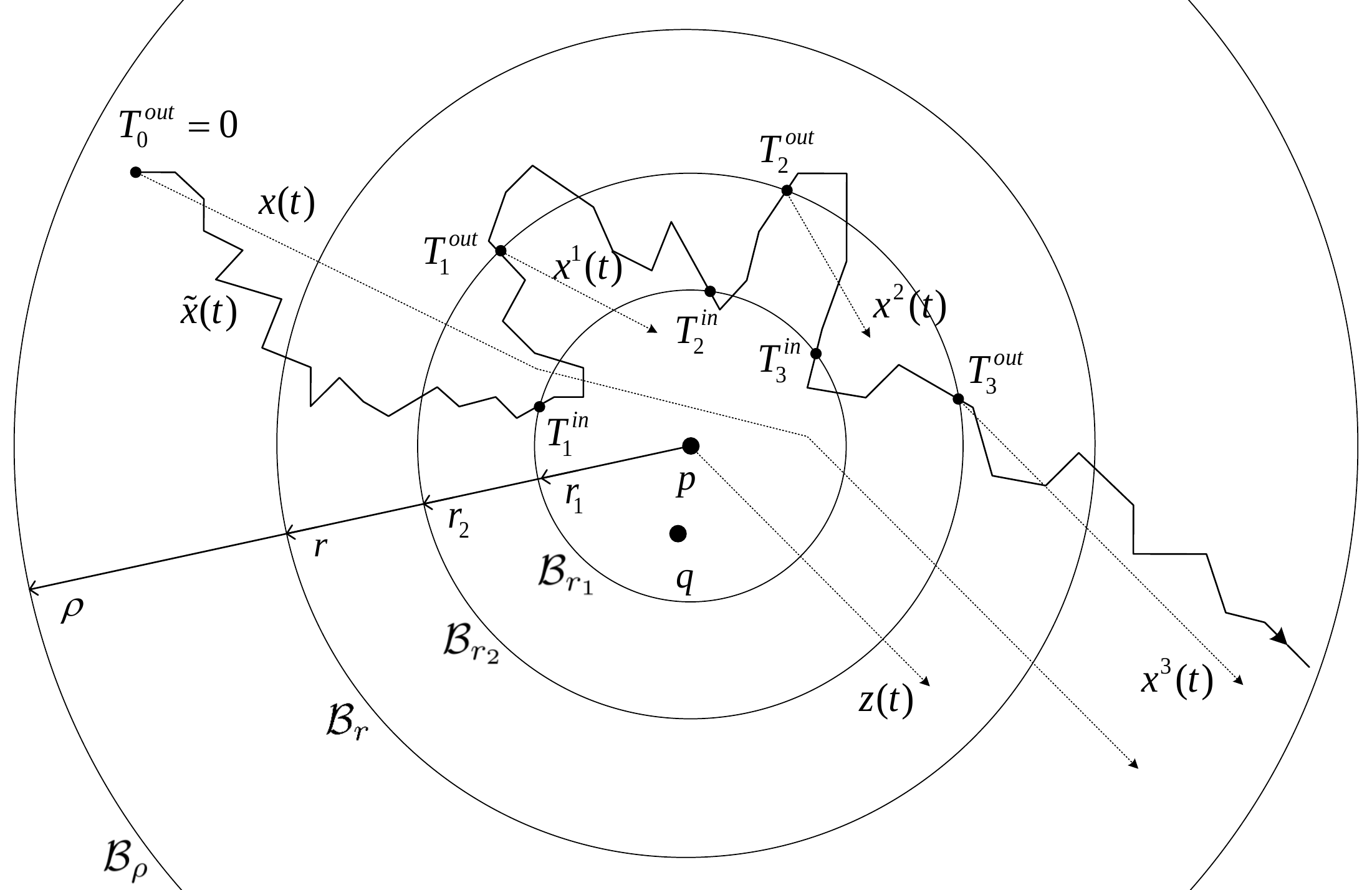}}
\vspace{-1cm}
\else
\includegraphics[width = .65\textwidth]{fig_phases_of_close_to_basin.pdf}}
\vspace{-.2cm}
\fi
\end{center}
\caption{An illustration of the balls, the different rounds, and the variables used in the proof of Proposition \ref{prop:close in the basin}. Here, $p$ and $q$ are two critical points, and $T_i^{in}$ and $T_i^{out}$ are defined in (\ref{eq:def Tin}) and (\ref{eq:def Tout}), respectively. There are four balls $\ball_{r_1}$, $\ball_{r_2}$, $\ball_{\Rone}$, and $\ball_{\Rtwo}$ all centered at $p$, with radii $r_1$, $r_2$, $\Rone$, and $\Rtwo$, respectively. Also, $x(\cdot)$ is an unperturbed trajectory that is coupled with the perturbed trajectory $\ptraj(\cdot)$ at time $0$, $z(\cdot)$ is an unperturbed trajectory that starts at $p$ at time $T_1^{in}$, and each $x^i(\cdot)$ is an unperturbed trajectory that is coupled with $\ptraj(\cdot)$ at time $T_i^{out}$.}
\label{fig:proof of the dragging lemma}
\end{figure*}

Consider the region $\ball_{\Rtwo}\backslash \ball_{r_1}$
 between $\ball_{r_1}$ and $\ball_{\Rtwo}$. 
It follows from (\ref{eq:ring far from cp}) that
\ifOneCol
\begin{equation}\label{eq:r ring far from cp}
d\big(\ball_{\Rtwo}\backslash \ball_{r_1}\,,\,\CP\big)\,  \ge \, d\big(\ball_{\Rtwo}\backslash \ball_{\Rone}\,,\,\CP\big) - \big(\Rone-r_1\big)\, \ge\, \big(\gamma  + 1 \big) \saferad\maxpert- \saferad\maxpert \,=\, \gamma  \saferad \maxpert.
\end{equation}
\else
\begin{equation}\label{eq:r ring far from cp}
\begin{split}
d\big(\ball_{\Rtwo}\backslash \ball_{r_1}\,,\,\CP\big)&\,  \ge \, d\big(\ball_{\Rtwo}\backslash \ball_{\Rone}\,,\,\CP\big) - \big(\Rone-r_1\big)\\
&\, \ge\, \big(\gamma  + 1 \big) \saferad\maxpert- \saferad\maxpert \\
&\,=\, \gamma  \saferad \maxpert.
\end{split}
\end{equation}
\fi

The high level idea is that for any $t\in[0,T]$, the perturbed solution is either in $\ball_{r_2}$ or in $\ball_{\Rtwo}\backslash \ball_{r_1}$ (or possibly in both). When $\ptraj(t)\in \ball_{r_2}$ we will show that the unperturbed trajectory is also close to the critical point $p$. But the more interesting case is when $\ptraj(t)\in \ball_{\Rtwo}\backslash \ball_{r_1}$. The idea here is to look at the perturbed solution, and 
at certain times that 
 it hits the boundary of $\ball_{r_2}$,  
 consider an auxiliary unperturbed trajectory that is coupled with $\ptraj(t)$ at that time. Using Proposition \ref{prop:close to fluids at far}, we can then show that these coupled trajectories stay close to each other, as long as the perturbed trajectory stays in $\ball_{\Rtwo}\backslash \ball_{r_1}$. As a result, and using also the fact that the dynamics are non-expansive, the auxiliary trajectory  $\ptraj(t)$ will remain close to $x(t)$, and the distance $\Ltwo{\ptraj(t)-x(t)}$ stays bounded. 
The various parameters and trajectories are illustrated in Fig. \ref{fig:proof of the dragging lemma}.

Let $T_0^{out}=0$, and for any $i\ge1$ let 
\begin{equation}\label{eq:def Tin}
T_i^{in} \triangleq \inf\Big\{t\in ( T_{i-1}^{out},T]\,\,\Big|\,\, \ptraj(t)\in \ball_{r_1}\Big\},
\end{equation}
\begin{equation}\label{eq:def Tout}
T_i^{out} \triangleq \inf\Big\{t\in ( T_i^{in}, T]\,\,\Big|\,\, \ptraj(t)\notin \ball_{r_2}\Big\}.
\end{equation} 
If either set is empty, we let the left hand side be equal to $T$. We consider a number of rounds. Round $i$ starts at time $T_i^{out}$ and ends at time $T_{i+1}^{in}$.  
 Note that the union of these rounds does not necessarily cover $[0,T]$. Also, note that since there is a gap of size $3\maxpert>  2\maxpert$  
between the boundaries of $\ball_{r_1}$ and $\ball_{r_2}$, it takes some lower-bounded positive time for the perturbed trajectory to travel from one boundary to the other, and hence the length of each round is lower bounded by a positive constant. 
So, the number of rounds during $[0,T]$ is finite. 

To each round $i$ we associate an unperturbed trajectory, denoted by $x^i(t),\,t\in[T_i^{out},T_{i+1}^{in}]$, with initial point $x^i(T_i^{out})=\ptraj(T_i^{out})$, i.e., $x^i(\cdot)$ is coupled with the perturbed trajectory at time $T_i^{out}$.
For any $t\in[T_i^{out},T_{i+1}^{in}]$, since $\ptraj(t)\in\ball_{\Rtwo}\backslash\ball_{r_1}$, (\ref{eq:r ring far from cp}) asserts that  $d\big(\ptraj(t),\,\CP\big)\ge \gamma\saferad\maxpert$. Therefore,  it follows from Proposition \ref{prop:close to fluids at far} that  for any $t\in[T_i^{out},T_{i+1}^{in}]$, 
\begin{equation}\label{eq:ptraj near xi in the phases}
\Ltwo{\ptraj(t)-\x^i(t)}\le \saferad\maxpert.
\end{equation}

Note that $x^0(t)=x(t)$, for all $t\ge0$. Thus, if $T_1^{in}=T$, 
the inequality (\ref{eq:ptraj near xi in the phases}) together with the fact $\eta\theta <\Rone<4\Rone$ imply that $\Ltwo{\ptraj(t)-x(t)}< 4\Rone$, for all $t\in[0,T]$, as desired. 
 So, in the following we assume that $T_1^{in}<T$. 
Note that the right-continuity of $\ptraj(\cdot)$ implies that $\Ltwo{\ptraj(T_1^{in})-p}\le r_1$.  
 Let $z(\cdot)$ be an unperturbed trajectory that starts at the critical point $p$ at time $T_1^{in}$, i.e., $z\big(T_1^{in}\big)=p$. 
It follows from (\ref{eq:ptraj near xi in the phases}) and the non-expansive property of the dynamics that for any $t\ge T_1^{in}$,
\begin{equation}\label{eq:bound of the original fluid and the centered fluid}
\begin{split}
\Ltwo{\x(t)-z(t)} &\,\le\, \Ltwo{\x(T_1^{in})-z(T_1^{in})}\\
&\,=\, \Ltwo{\x(T_1^{in})-p}\\
&\,\le\, \Ltwo{p-\ptraj(T_1^{in})} + \Ltwo{\ptraj(T_1^{in})-\x(T_1^{in})}\\
&\,\le\, r_1 + \saferad\maxpert\\
&\,=\,\Rone,
\end{split}
\end{equation}
where in the last inequality we used  \eqref{eq:ptraj near xi in the phases} with $i=0$ and  $t=T_1^{in}$. We now proceed  
to derive a bound on $\Ltwo{\ptraj(t)-x(t)}$, for $t\ge T_1^{in}$, by developing a bound on $\Ltwo{\ptraj(t)-z(t)}$.
Let $\dir=\dir(p)$ be the actual drift at $p$. We consider two cases.

{\bf Case 1.} ($p$ is an equilibrium point, i.e.,  $\dir=0$){\bf .}\,  
In this case,  $z(t)=p$, for all $t\geq T_{1}^{in}$. Comparing with the unperturbed trajectory $\x^i(\cdot)$ and using
the non-expansive property and the definition of $T_i^{out}$, it follows for any round $i\ge1$ and any $ t\in[T_i^{out} ,T_{i+1}^{in}]$, we have
\ifOneCol
\begin{equation}\label{eq:local fluid bound when near zero case 1}
\Ltwo{\x^i(t)-z(t)} \,\le\, \Ltwo{\x^i(T_i^{out})-z(T_i^{out})} \,=\, \Ltwo{\ptraj(T_i^{out})-p} \,=\, r_2\,\le \, 3r_2.
\end{equation}
\else
\begin{equation}\label{eq:local fluid bound when near zero case 1}
\begin{split}
\Ltwo{\x^i(t)-z(t)} &\,\le\, \Ltwo{\x^i(T_i^{out})-z(T_i^{out})}\\
& \,=\, \Ltwo{\ptraj(T_i^{out})-p}\\
& \,=\, r_2\,\le \, 3r_2.
\end{split}
\end{equation}
\fi
Combining this with (\ref{eq:ptraj near xi in the phases}) and (\ref{eq:bound of the original fluid and the centered fluid}), we get the following bound for all  $i\ge 1$ and for all $t\in[T_i^{out} ,T_{i+1}^{in}]$,
\begin{equation}\label{eq:case 1 temp1}
\begin{split}
\Ltwo{\ptraj(t)-\x(t)} &\,\le\, \Ltwo{\ptraj(t)-x^i(t)} + \Ltwo{x^i(t)-z(t)} 
\ifOneCol 
+ \Ltwo{z(t)-\x(t)}\\
\else
\\&\quad\,\, + \Ltwo{z(t)-\x(t)}\\
\fi
&\,\le\, \saferad\maxpert + 3r_2 + \Rone\\
&\, =\, \saferad\maxpert \,+\, 3\big( \Rone - \saferad\maxpert +3\maxpert  \big) \,+\, \Rone\\
&\,=\, 4\Rone +2\big(4.5 -\saferad\big)\maxpert\\
&\,<\, 4\Rone,
\end{split}
\end{equation}
where the second inequality is due to (\ref{eq:ptraj near xi in the phases}), (\ref{eq:local fluid bound when near zero case 1}), and (\ref{eq:bound of the original fluid and the centered fluid}), and the last inequality is due to \eqref{eq:eta16}. 

Furthermore, for any $t\in[T_i^{in},T_i^{out})$, our definitions imply that $\Ltwo{\ptraj(t)-z(t)} =\Ltwo{\ptraj(t)-p}<r_2$. Hence, for such $t$,
\begin{equation*}
\Ltwo{\ptraj(t)-\x(t)} \le \Ltwo{\ptraj(t)-z(t)} + \Ltwo{z(t)-\x(t)} \le r_2+\Rone<4\Rone,
\end{equation*}
where the second inequality is due to (\ref{eq:bound of the original fluid and the centered fluid}). Thus, the proposition holds in this case.

{\bf Case 2.} ($p$ is not an equilibrium point, i.e.,  $\dir\ne 0$){\bf .}\,
 The dynamics in this case are illustrated in Fig. \ref{fig:proof of the dragging lemma}.  
Here, we need to find an alternative  
derivation of (\ref{eq:local fluid bound when near zero case 1}), and also derive a new bound for $\Ltwo{\ptraj(t)-z(t)}$ when $t\in[T_i^{in},T_i^{out})$. 

Let $\ff(\cdot)$ be a perturbed drift associated with the perturbed trajectory $\ptraj(\cdot)$. Since $\ball_{\Rtwo}$ is a basin and $\ff(t)\in F\big(\ptraj(t)\big)$, it follows from the definition of basins that for any $t\in [0,T]$,
\begin{equation}\label{eq:derivative of xit ptraj}
\dir^T\ff(t)  \ge \Ltwo{\dir}^2.
\end{equation}
Hence, for any $t\in [T_1^{in},T]$,
\begin{equation}\label{eq:lower bound on distance of ptraj and p to be used for tesc}
\begin{split}
\Ltwo{\ptraj(t)-p} &\ge \frac{1}{\Ltwo{\dir}}\dir^T\big(\ptraj(t)-p\big) \\
& \,=\,  \frac1{\Ltwo{\dir}}\Bigg( \dir^T\Big(\ptraj\big(T_1^{in}\big)-p\Big)\, +\,  \int_{T_1^{in}}^t \dir^T\ff(\tau)\,d\tau 
\ifOneCol
\,+\, \dir^T\big(\Prt(t)-\Prt(T_1^{in})  \big) \Bigg)\\
\else
\\&\qquad\qquad  \,+\, \dir^T\big(\Prt(t)-\Prt(T_1^{in})  \big) \Bigg)\\
\fi
&\,\ge\, -\Ltwo{\ptraj\big(T_1^{in}\big)-p} + \frac1{\Ltwo{\dir}} \int_{T_1^{in}}^t \Ltwo{\dir}^2 \,d\tau \,-\, 2\maxpert \\
&\,\ge\,  -r_1+ (t-T_1^{in})\Ltwo{\dir}  - 2\maxpert .
\end{split}
\end{equation}
The first inequality above is the Cauchy-Schwarz inequality; the first equality  follows from the definition of perturbed trajectories (cf.~Definition \ref{def:integral pert traj}); the next inequality uses the Cauchy-Schwarz inequality for the first term,  (\ref{eq:derivative of xit ptraj}) for the second, and the bounds on $U(\cdot)$ for the third; the last inequality uses the defining property
$\Ltwo{\ptraj\big(T_1^{in}\big)-p}=r_1$ of $T_1^{in}$.
 
We define an escape time $T^{esc} = T_1^{in} + \big(r_1+r_2+3\maxpert\big)/{\lVert{\dir}\rVert}$. The following claim suggests that if $\ptraj(t)$ ever escapes $\ball_{r_2}$, it happens before time $T^{esc}$.
\begin{claim}\label{claim:tesc}
If $T_i^{in}<T$ for some $i\ge1$, then $T_i^{out}\le T^{esc}$.
\end{claim} 
\begin{proof}[Proof of Claim]
If $T^{esc}\ge T$, then $T_{i}^{out}\le T  \le T^{esc}$. Suppose now that $T^{esc}< T$. It follows from  (\ref{eq:lower bound on distance of ptraj and p to be used for tesc}) and the  definition of $T^{esc}$ that $\Ltwo{\ptraj\big(T^{esc}\big)-p}\ge r_2+\theta> r_2$.
Hence $\ptraj(t)$ is outside of $\ball_{r_2}$ at time $T^{esc}$, and by definition, $T_i^{out}\le T^{esc}$.
\end{proof}

Since $z(T_1^{in})=p$, we have $\Ltwo{\dot{z}(0)}=\dir$.
According to Lemma \ref{lem:decreasing drift size}, $\Ltwo{\dot{z}(t)}$ is a non-increasing function of time, and therefore 
 $\Ltwo{\dot{z}(t)}\le\Ltwo{\dir}$, for $t\geq T_1^{in}$.
Hence, 
if $t\in\big[T_{1}^{in},T^{esc}\big]$, then 
\ifOneCol
\begin{equation}\label{eq:dist z bound for t small}
\Ltwo{z(t)-p} \,\le \, \big(t - T_1^{in}\big) \Ltwo{\dir} \,\le\, \big(T^{esc} - T_1^{in}\big) \Ltwo{\dir}\,=\, r_2+r_1+3\maxpert\, =\, 2r_2.
\end{equation}
\else
\begin{equation}\label{eq:dist z bound for t small}
\begin{split}
\Ltwo{z(t)-p} &\,\le \, \big(t - T_1^{in}\big) \Ltwo{\dir}\\
&\,\le\, \big(T^{esc} - T_1^{in}\big) \Ltwo{\dir}\\
&\,=\, r_2+r_1+3\maxpert\\
&\, =\, 2r_2.
\end{split}
\end{equation}
\fi

We now proceed by considering two cases: $t\in\big[T_{i}^{out},T_{i+1}^{in}\big)$ and $t\in\big[T_{i}^{in},T_{i}^{out}\big)$.
We first suppose that $T_i^{out}<T$, and consider a $t\in \big[T_{i}^{out},T_{i+1}^{in}\big)$. Since $T_i^{in}\le T_i^{out} < T$,  Claim~\ref{claim:tesc} implies that $T_i^{out}\le T^{esc}$. It then follows from \eqref{eq:dist z bound for t small} that $\Ltwo{z\big(T_i^{out})-p}\le 2r_2$. Then,
\begin{equation}\label{eq:dist xi and z for i le i*}
\begin{split}
\Ltwo{x^{i}\big(t\big)-z\big(t\big)} &\,\le\, \Ltwo{x^{i}\big(T_{i}^{out}\big)-z\big(T_{i}^{out}\big)}\\
 &\,\le\, \Ltwo{x^{i}\big(T_{i}^{out}\big)-p}  \,+\,  \Ltwo{p-z\big(T_{i}^{out}\big)} \\
&\,\le\, r_2\,+\, 2r_2\\
& \,=\, 3r_2.
\end{split}
\end{equation}
where the first inequality is due to the non-expansive property, and the last inequality is due to the definition of $T_{i}^{out}$. 
Thus, the bound (\ref{eq:local fluid bound when near zero case 1})  and the subsequent derivation of (\ref{eq:case 1 temp1}) remain valid for this case as well,
so that for every round $i$, 
\begin{equation}\label{eq:case 2 temp1}
\begin{split}
\Ltwo{\ptraj(t)-\x(t)} < 4\Rone,\qquad \forall\ t\in\big[T_{i}^{out},T_{i+1}^{in}\big).
\end{split}
\end{equation}

We now discuss the case where $t$ does not belong to a round, i.e., $t\in\big[T_i^{in}, T_{i}^{out}\big)$, for some $i$ such that $T_i^{in}<T$. It follows from Claim~\ref{claim:tesc} that $t<T_i^{out}\le T^{esc}$. Therefore, (\ref{eq:dist z bound for t small}) implies that if 
$t\in\big[T_i^{in}, T_{i}^{out}\big)$, then 
$t\Ltwo{p-z(t)}\le 2r_2$.
Therefore,
\begin{equation*}
\begin{split}
\Ltwo{\ptraj(t)-\x(t)} &\le \Ltwo{\ptraj(t)-p} + \Ltwo{p-z(t)} + \Ltwo{z(t)-\x(t)}\\
&\le r_2\,+\, 2r_2 \,+\, \Rone\\
&< 4\Rone,
\end{split}
\end{equation*}
where the second inequality is due to $t\in\big[T_{i}^{in},T_i^{out}\big)$
and (\ref{eq:bound of the original fluid and the centered fluid}). 
Together with (\ref{eq:case 2 temp1}), this completes 
the argument for Case 2, and the proof of the proposition. 
\end{proof}

\subsection{Completing the Proof of the Theorem} \label{subsec:Completing the Proof of the Theorem}
\label{sub:d}
We now use the machinery developed in this section and combine the results for the various cases to complete the proof of  Theorem \ref{th:main cont}.

If there are no critical points, then the perturbed trajectory  never gets close to a critical point, and the theorem follows from Proposition  \ref{prop:close to fluids at far}. In the following, we assume that the set of critical points is non-empty.
 Let $M$ be the number of critical points, 
let $\basinmin$ be the CNC, 
and let $D^{\CP}$ be the diameter of the set of critical points:
\begin{equation}
D^{\CP}\triangleq \max_{p,q\in\CP} \Ltwo{p-q}.
\end{equation}
According to Lemma \ref{lem-cp: finite number}, there are finitely many critical points, so that $D^{\CP}$ is well-defined and finite.
We define a threshold  parameter $\ctresh$ 
as follows:
\begin{equation}\label{eq:ctresh}
\ctresh\triangleq \frac{\basinmin}{40\big(M+2\big)\big(\gamma+1\big)\saferad},
\end{equation}
where $\gamma$ and $\saferad$ are the constants defined in Lemma \ref{lem:distance from cp} and Proposition \ref{prop:close to fluids at far}, respectively.
In what follows, we use Proposition \ref{prop:close in the basin} to prove that the following constant satisfies Theorem~\ref{th:main cont}, 
\begin{equation}\label{eq:def divconst in the proof of cont th}
\divconst \,=\, \begin{cases} {4D^\CP}/{\ctresh}\,+\,5\big(M+2\big)\big(\gamma+1\big)\saferad, & \textrm{if } \ctresh\ne 0, \\
4\big(\gamma+1\big)\saferad+1,  & \textrm{if } \ctresh = 0.\end{cases} 
\end{equation}
As  an example, for the system illustrated in Fig. \ref{fig:application finite part is subdiff}, it can be checked that the above constants are as follows: $D^\CP=\ctresh=0$, $M=1$, $\gamma=1$, $\sigma=5$,  $\saferad=240$, and  $\divconst=1921$. 

We consider two cases, depending on whether the perturbation bound $\maxpert$ is larger or smaller than the threshold $\ctresh$.

{\bf Case 1} ($\maxpert\ge \ctresh$){\bf.}\, 
According to Lemma \ref{lem-cp: basin infinite rad}, there exists a critical point $p^*$, for which the entire set $\R^n$ is a basin.
We let
$\Rone=D^{\CP}+\big(\gamma+1\big)\saferad\maxpert$ and $\Rtwo=\infty$. 
This choice of $p^*$, $\Rone$, and $\Rtwo$ observes  the conditions of Proposition \ref{prop:close in the basin}. 
Note that if $\ctresh=0$, then $\basinmin=0$, in which case there is at most one critical point. Then, $\ctresh=0$ implies $D^\CP=0$.  Therefore, $4\Rone < \divconst\maxpert$, for all values of $\ctresh$.
It then follows from Proposition \ref{prop:close in the basin} that 
\begin{equation}
\Ltwo{\ptraj(t)-
{x}(t)} \le 4\Rone < \divconst\maxpert, \qquad \forall t\ge0,
\end{equation}
which establishes the desired result.

\medskip

{\bf Case 2} ($\maxpert< \ctresh$){\bf.}\, 
Once more, we rely on Proposition \ref{prop:close in the basin}, but in a local manner. 
We consider a ``small'' basin of size $\basinmin/2$ for each critical point, and define a number of phases $\big[T_i^{in},T_i^{out}  \big]$ so that 
throughout any particular  phase, the perturbed trajectory lies in one of these basins.  We then use Proposition \ref{prop:close in the basin} to bound the distance between the two trajectories in each phase, and use Proposition \ref{prop:close to fluids at far} to bound their distance while outside the basins.  In the end, we use Lemma \ref{lem-cp: no revisit} to show that each  basin is visited at most once, in a certain  sense, and finally put everything together to prove the desired bound on the distance of the two trajectories. Figure \ref{fig:proof of th continuous for case two jumping between critical points} shows an illustration of the different trajectories and  variables that we use in the argument that follows.

\begin{figure*} 
\begin{center}
{
\ifOneCol
\includegraphics[width = 1\textwidth]{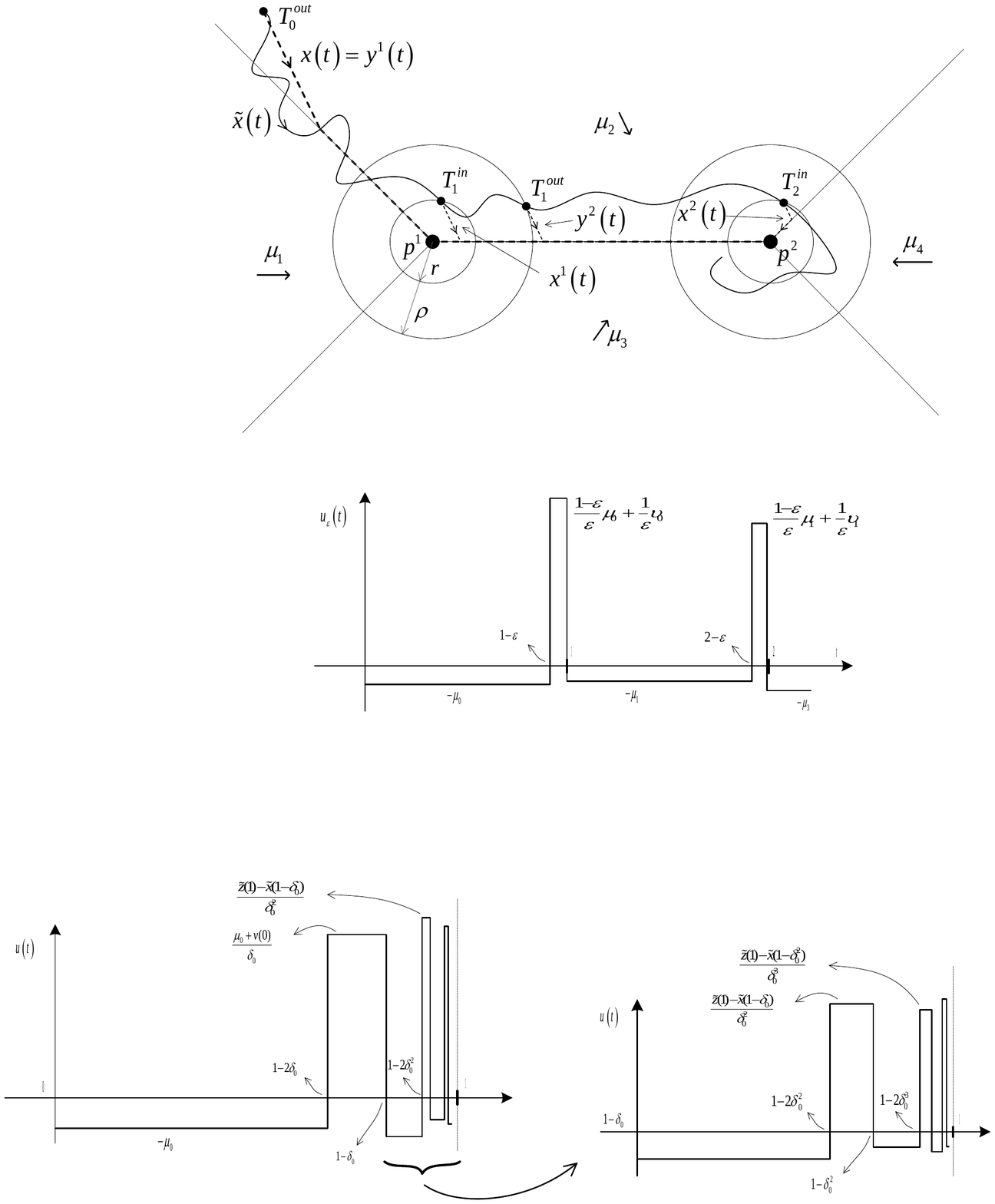}}
\vspace{-1.5cm}
\else
\includegraphics[width = .7\textwidth]{fig_proof_th_cont_case2.pdf}}
\vspace{-.3cm}
\fi
\end{center}
\caption{Illustration of the different trajectories and variables in Case 2 of the proof of Theorem \ref{th:main cont}. The figure shows a two-dimensional FPCS system with four regions and two critical points, $p^1$ and $p^2$. There are two balls of radii $\Rone$ and $\Rtwo$, defined in (\ref{eq:def of R1 and R2 for case 2 proof th cont}), centered at each critical point. The solid curved line $\ptraj(\cdot)$ is a perturbed trajectory, coupled with an unperturbed trajectory $x(\cdot)$ at time $0$. Times $T_i^{in}$ and $T_i^{out}$, defined in (\ref{eq:def Tin of pi})  and (\ref{eq:def Tout of pi}), are the first times that the perturbed trajectory hits the 
ball of radius $\Rone$ or leaves the larger ball of radius $\Rtwo$ around $p^i$, respectively. 
In this example, $T_2^{out}=\infty$,  
because the perturbed trajectory never leaves the $\Rtwo$-neighbourhood of $p^2$. For each $i$, $x^{i}(\cdot)$ and $y^i(\cdot)$ are unperturbed trajectories coupled with $\ptraj(\cdot)$ at times $T_i^{in}$ and $T_{i-1}^{out}$, respectively. These unperturbed trajectories are shown by dashed lines.
}
\label{fig:proof of th continuous for case two jumping between critical points}
\end{figure*}

Let 
\begin{equation}\label{eq:def of R1 and R2 for case 2 proof th cont}
\Rone=\big(\gamma+1\big)\saferad\maxpert,\qquad \Rtwo={\basinmin}/{2}.
\end{equation}
It follows from  (\ref{eq:ctresh}) and the assumption $\maxpert< \ctresh$ that $\Rone<\Rtwo$. Moreover, based on Lemma \ref{lem-cp: basin cnc}, $\Rtwo$ is a basin radius for each one of the critical points. For any critical point $p\in\CP$, let  $\ball_{\Rone}(p)$  and $\ball_{\Rtwo}(p)$ be the balls of radii $\Rone$ and $\Rtwo$, respectively, centered at $p$.  We define two sequences of times $T_i^{in}$ and $T_i^{out}$ as follows.
Let $T_0^{out}=0$, and for any $i\ge1$, let
\begin{equation}\label{eq:def Tin of pi}
T_i^{in} \triangleq \inf\Big\{t>T_{i-1}^{out}\,\,\Big|\,\, \exists\, 
p\in \CP:\,\, \ptraj(t)\in \ball_{\Rone}({p})\Big\}.
\end{equation}
We denote by $p^i$ the
critical point $p$  in the right-hand side of \eqref{eq:def Tin of pi}, so that 
$\ptraj(T_i^{in})\in \ball_{\Rone}(p^i)$. Note that  the different balls $\ball_{\Rone}(p)$ do not intersect and therefore $p^i$ is uniquely defined; we refer to it as the \emph{effective critical point} at phase $i$. We then define
\begin{equation}\label{eq:def Tout of pi}
T_i^{out} \triangleq \inf\Big\{t>T_i^{in}\,\,\Big|\,\, \ptraj(t)\notin \ball_{\Rtwo}(p^i) \Big\}.
\end{equation}
In the above, we let $T_i^{in}$ or $T_i^{out}$ be infinite in case the set on the right-hand side of \eqref{eq:def Tin of pi} or \eqref{eq:def Tout of pi} is empty.


Fix an $i\ge1$. We first derive a bound on $\Ltwo{\ptraj(t)-x(t)}$ for $t\in \big[T_{i-1}^{out},T_{i}^{in}\big]$.
Let $y^i(\cdot)$ be an unperturbed trajectory with $y^i\big(T_{i-1}^{out}\big) = \ptraj\big(T_{i-1}^{out}\big)$. 
By definition, for any $t\in \big[T_{i-1}^{out},T_{i}^{in}\big]$, $d\big(\ptraj(t),\CP\big)\ge \Rone= \big(\gamma+1 \big) \saferad  \maxpert $. Therefore, it follows from Proposition \ref{prop:close to fluids at far} that
\begin{equation}
\Ltwo{\ptraj(t) - y^i(t)} \le \saferad  \maxpert < \Rone, \quad \forall\, t\in\big[T_{i-1}^{out},T_{i}^{in}\big].
\end{equation}
Hence, from the non-expansive property of the unperturbed dynamics, for any $t\in \big[T_{i-1}^{out},T_{i}^{in}\big]$, we obtain
\begin{equation}\label{eq:dist increase throughout an out to in phase}
\begin{split}
\Ltwo{\ptraj(t) - x(t)} &\,\le\, \Ltwo{x(t) - y^i(t)} + \Ltwo{\ptraj(t) - y^i(t)} \\
&\,\le\, \Ltwo{x\big(T_{{i-1}}^{out}\big) - y^i\big(T_{{i-1}}^{out}\big)} +\Rone \\
&\,=\, \Ltwo{ x \big(T_{{i-1}}^{out}\big)- {\ptraj}\big(T_{{i-1}}^{out}\big) } +\Rone .
\end{split}
\end{equation}

On the other hand, if $t\in \big[T_i^{in},T_{i}^{out}\big]$, for some $i\geq 1$, we have for any critical point $p\in\CP$ other than $p^i$,
\ifOneCol
\begin{equation}
d\big(p,\,\ball_{\Rtwo}(p^i)  \big)
\,  \ge\,\Ltwo{ p - p^i  }- \Rtwo \,\ge\, \basinmin - \Rtwo \,=\, \frac{\basinmin}{2} \,\ge\, \big(\gamma+1\big)\saferad\ctresh \,>\, \Rone,
\end{equation}
\else
\begin{equation}
\begin{split}
d\big(p,\,\ball_{\Rtwo}(p^i)  \big)
&\,  \ge\,\Ltwo{ p - p^i  }- \Rtwo \\
&\,\ge\, \basinmin - \Rtwo\\ 
&\,=\, \frac{\basinmin}{2} \\
&\,\ge\, \big(\gamma+1\big)\saferad\ctresh \\
&\,>\, \Rone,
\end{split}
\end{equation}
\fi
where the second and third inequalities follow from the definitions of $\basinmin$ and $\ctresh$ in (\ref{eq:def basinmin}) and (\ref{eq:ctresh}), respectively.  
Using also the fact $d\big(p^i,\,\, \ball_{\Rtwo}(p^i)\backslash\ball_{\Rone}(p^i)\big)\geq \Rone$, we obtain
\begin{equation}
d\big(\ball_{\Rtwo}(p^i)\backslash\ball_{\Rone}(p^i)\,,\,\CP\big)\,  \ge\,   \Rone \,= \, \big(\gamma+1 \big) \saferad  \maxpert.
\end{equation}
Moreover, for any $t\in \big[T_i^{in},T_{i}^{out}\big]$, $\ptraj(t)\in \ball_{\Rtwo}(p^i)$.
Hence, the conditions of Proposition \ref{prop:close in the basin} are observed. 
Let $x^i(t)$ be an unperturbed trajectory with $x^i\big(T_i^{in}\big) = \ptraj\big(T_i^{in}\big)$.  Proposition~\ref{prop:close in the basin} implies that, for any $t\in \big[T_i^{in},T_{i}^{out}\big]$, 
$\Ltwo{\ptraj(t) - x^i(t)} \le 4\Rone$.
Hence, from the non-expansiveness of the dynamics, we have for any $t\in \big[T_i^{in},T_{i}^{out}\big]$, 
\begin{equation}\label{eq:dist increase throughout an in to out phase}
\begin{split}
\Ltwo{\ptraj(t) - x(t)} &\,\le\, \Ltwo{x(t) - x^i(t)} + \Ltwo{\ptraj(t) - x^i(t)} \\
&\,\le\, \Ltwo{x\big(T_{i}^{in}\big) - x^i\big(T_{i}^{in}\big)} +4\Rone \\
&\,=\, \Ltwo{x\big(T_{i}^{in}\big)- \ptraj\big(T_{i}^{in}\big) } +4\Rone .
\end{split}
\end{equation}

Combining (\ref{eq:dist increase throughout an in to out phase}) and (\ref{eq:dist increase throughout an out to in phase}) and a straightforward inductive argument, it follows that for any  $i\geq 0$, and for any $t\in[0,T_{i}^{out}]$ (where $T_{i}^{out}$ can be infinite), we have
\begin{equation} \label{eq:total dev at the end of round i}
\Ltwo{\ptraj(t) - x(t)} \le 5i\Rone.
\end{equation}


Let $\imax$ be  the maximum $i$ such that $T_i^{in}<\infty$, with the convention that $\imax=0$ if $T_1^{in}=\infty$. Equivalently, when $\imax$ is non-zero, it is the maximum $i$ such that the set on the right hand side of (\ref{eq:def Tin of pi}) is non-empty.  
We show that  $\imax$ is finite and in fact upper bounded by the number of critical points, $M$. In order to draw a contradiction, suppose that 
$\imax\geq M+1$. 
 In this case, there is a repeated critical point $p\in\CP$ that is effective in at least two phases; that is, there exist $i$, $j$, and $p$, such that 
$1\le i<j\le M+1$, $T^{in}_j<\infty$, and $p^i=p^j=p$.
Then,
\begin{equation}\label{eq:x , p less that 5m+2R1}
\begin{split}
\Ltwo{x\big(T_i^{in}\big) - p} &\,\le\, \Ltwo{x\big(T_i^{in}\big) - \ptraj\big(T_i^{in}\big) } + \Ltwo{\ptraj\big(T_i^{in}\big) - p^i} \\
&\,\le\, 5i\Rone + \Rone \\
&\,<\, 5(M+2)\Rone,
\end{split}
\end{equation}
where the second  inequality is due to (\ref{eq:total dev at the end of round i}) and the definition of $T_i^{in}$. 
The same bound also holds for $\Ltwo{x\big(T_j^{in}\big) - p}$. On the other hand, 
\begin{equation}\label{eq:3alpha bound}
\begin{split}
\Ltwo{x\big(T_i^{out}\big) - p}  &\,\ge\, \Ltwo{p^i - \ptraj\big(T_i^{out}\big)} - \Ltwo{x\big(T_i^{out}\big) - \ptraj\big(T_i^{out}\big)}\\
&\,\ge\,  \Rtwo- 5i\Rone\\
&\,\ge\,  \frac{\basinmin}{2}- 5(M+1)\Rone\\
&\,=\, 20(M+2)(\gamma+1)\saferad\ctresh -  5(M+1)\Rone\\
&\,\ge\, 20(M+2)(\gamma+1)\saferad\maxpert -  5(M+1)\Rone\\
&\,=\, 20(M+2)\Rone -  5(M+1)\Rone\\
&\,>\, 15(M+2)\Rone,
\end{split}
\end{equation}
where the second and third inequalities are due to  (\ref{eq:total dev at the end of round i}) and the definition of $\Rtwo$, respectively.
Let $\alpha= 15(M+2)\Rone$. 
Using the Definitions of $\ctresh$ and $\Rone$,  and the assumption $\theta<\theta^*$, we see that $\alpha<\basinmin$.
From Lemma \ref{lem-cp: basin cnc}, $\alpha$ is a basin radius for every  critical point. 
It follows from (\ref{eq:x , p less that 5m+2R1}), (\ref{eq:3alpha bound}), and again (\ref{eq:x , p less that 5m+2R1}) that
$\Ltwo{x\big(T_i^{in}\big) - p}<{\alpha}/{3}$,  $\Ltwo{x\big(T_i^{out}\big) - p}>\alpha$, and  $\Ltwo{x\big(T_j^{in}\big) - p}<{\alpha}/{3}$, respectively. Since $T_i^{in}\le T_i^{out} \le T_j^{in}$, this contradicts Lemma \ref{lem-cp: no revisit}. 
Therefore, the initial hypothesis $\imax\ge M+1$ cannot be true, and we conclude that $\imax\le M$. Hence, $T_{M+1}^{out}=\infty$. It follows from (\ref{eq:total dev at the end of round i}) and the definition of $\divconst$ in (\ref{eq:def divconst in the proof of cont th}) that for any $t\ge 0$,
\begin{equation}
\Ltwo{\ptraj(t) - x(t)} \,\le\, 5(M+1)\Rone \,\le\, \divconst \maxpert,
\end{equation}
which shows that the Theorem also holds for Case 2.

We now prove the last statement in Theorem \ref{th:main cont}, namely, that  the bound \eqref{eq:bounded pert cont} applies to the 
 system $\dot{x}\in F(x)+\lambda$,  with the same constant  $\divconst$.
Let $F'(\cdot) = F(\cdot)+\lambda$.
According to Lemma  \ref{lem-cp: F' vs F}, the systems $F$ and $F'$ have identical  effective regions and critical points. It follows that the constants $\basinmin$, $D^\CP$, $M$ (number of critical points), and $m$ (number of regions) 
are also the same. As a consequence,  the constants $\gamma$  and $\sigma$, defined in Lemmas \ref{lem:distance from cp} and \ref{lem:locally close}, respectively, are also identical for the two systems, 
and the same is true for the constant $\saferad = m2^{m+1}\sigma$, defined in Proposition \ref{prop:close to fluids at far}, 
the constant  $\ctresh$ defined in
(\ref{eq:ctresh}), and finally for the constant $\divconst$  in (\ref{eq:def divconst in the proof of cont th}). In other words,  the same constant $\divconst$ also works for the dynamical system $F'$. This completes the proof of Theorem \ref{th:main cont}.
\qed
\medskip



\medskip
\section{\bf Discussion} \label{sec:discussion}
In this section we review our main results and their implications, and also discuss the extent to which they can or cannot be generalized to broader classes of systems.

We have established a \emph{bounded input sensitivity} property of FPCS (finitely piecewise constant subgradient) systems, in a strong sense. In particular, we have shown that the increase in the distance between perturbed and unperturbed trajectories is upper-bounded 
by a constant multiple of the \emph{magnitude of the integral} 
 of the instantaneous perturbations; cf.~\eqref{eq:bounded pert cont}. 
As discussed in the introduction, this is much stronger than the elementary upper bounds which involve the integral of the magnitude of the instantaneous perturbations.
As an example, for the system illustrated in Fig. \ref{fig:application finite part is subdiff}, and with i.i.d.\  Bernoulli perturbations, the naive bound in \eqref{eq:dt2} grows at the rate of $t/2$, whereas the bound in \eqref{eq:bounded pert cont} only grows as $
(C/2)\sqrt{t}\log t$, for some $C<2000$, 
with high probability.
Thus, over short time scales, the naive bound is stronger, but in the regime of large $t$ (which is the relevant one for heavy-traffic asymptotic analysis) our bound is tighter. 
Furthermore, the best constant $C$ for that example is likely to be much smaller.\footnote{Using a variant of our proof, tailored to that example, it can be shown that $C$ can be set to 6.5.}
In any case, our work  carries out the important first step, that of showing that $C$ is in fact finite. We finally note that
 our definitions are broad enough to include as possible perturbations the sample paths of jump or Brownian motion processes.


\subsection{Implications}
FPCS  systems arise in many contexts. As discussed in Section~\ref{s:intro}, 
a prominent example is the celebrated Max-Weight policy for scheduling in queueing networks.  
Having made this connection, we can (cf.~\cite{AlTG18p2})
 apply a variant of our result to the Max-Weight policy, establish bounds on the distance between  the actual discrete-time stochastic system and its fluid approximation, and also obtain  state space collapse results 
 that are stronger
 than  available ones \cite{ShahW12,ShahTZ10}.

More broadly, flows or algorithms that evolve along the subgradient of a potential function are a fairly natural model, likely to arise in many other contexts. Recall also that, as mentioned ain Section \ref{s:intro}, the FPCS class has been shown \cite{AlTG18p4} to contain 
all non-expansive 
finite-partition hybrid systems that obey some minimal well-formedness and uniqueness properties.

\subsection{Generalizations}

Broad generalizations that assume only a subset of the properties of FPCS systems are not possible. In  
\cite{AlTG18p3}
we provide (counter)examples that show that a sensitivity bound of the form (6) does {\bf not} hold for various classes of systems. Our counterexamples include:

\begin{enumerate}
\item A non-expansive system; hence the non-expansiveness property is not sufficient by itself. 
\item A system that moves along the gradient of a twice continuously differentiable strictly convex function; hence the subgradient property is not sufficient by itself.
\item A system that moves along the subgradient of a piecewise linear convex function with infinitely many number of pieces; hence the finiteness of the number of pieces is essential. 
\end{enumerate}

Even though our main result cannot be extended by weakening its assumptions, it may still be possible to derive similar sensitivity bounds for other classes of systems. For example, 
\cite{AlTG18p3} provides necessary and sufficient conditions for linear systems
 $\dot x = Ax$, in terms of the spectrum of $A$. 
It will be interesting to explore whether there are some other natural classes of systems that do not have the non-expansiveness property but for which the conclusions in Theorem~\ref{th:main cont} are valid.

 \subsection{Some open problems}
 
 Besides attempts to obtain bounded sensitivity results for other types of systems, there are some interesting open problems for  FPCS systems specifically.
   
 \begin{enumerate}
 \item The bound in Theorem \ref{th:main cont} involves a constant $\divconst$
 which grows exponentially with the number of regions. It is not known whether this is unavoidable or whether a smaller (polynomial) constant is possible.
\item Theorem \ref{th:main cont} studies the distance between  
 a perturbed and an unperturbed trajectory, but this does not necessarily provide a strong bound on the distance between two perturbed trajectories.
Consider an FPCS system and two different perturbations $U_1(\cdot)$ and $U_2(\cdot)$ that are close at all times. We conjecture that in such a case the perturbed  trajectories are also close.
\end{enumerate}

\hide{
\subsection{Open Problems} \label{subsec:disc open}
Although non-expansiveness is  a necessary condition for the bound in (\ref{eq:diss pwc cont}), it is not necessary for the bound of type (\ref{eq:diss linear}), where the two trajectories have the same initial points. For instance, consider a two-dimensional linear dynamical system $\dot{x}=Ax$ with 
\begin{equation}
A=\left[  \begin{array}{lr}  -1/3 & 1\\ 0 & -1/3 \end{array}   \right].
\end{equation}
Since $A$ has two  negative real eigenvalues (equal to $-\frac13$), the system is SOF and Theorem \ref{th:linear} implies (\ref{eq:diss linear}). However, this system is not non-expansive, because if $x=0$ and $y=[1,1]^T$, then
\begin{equation}
(y-x)^T \big(Ay-Ax\big) \,=\, y^TAy \,=\,  \left[  \begin{array}{lr}   1 & 1 \end{array}   \right]  \left[  \begin{array}{lr}  -\frac13 & 1\\ 0 & -\frac13 \end{array}   \right] \left[  \begin{array}{c}   1 \\ 1 \end{array}   \right] \, =\, \frac13 \,>0.
\end{equation}
Therefore,  (\ref{eq:dir x (x-y) <0}) does not hold, and the system cannot be non-expansive.
 
In the following we discuss some open problems and future research directions.

\begin{openprob} [Deviation of Two Perturbed Trajectories in FPCS Systems]
In Theorem \ref{th:main cont}, we proved (\ref{eq:diss pwc cont}) for FPCS dynamical systems, which bounds the deviation of a perturbed trajectory from an unperturbed trajectory. The question is whether or not a bound of type (\ref{eq:diss linear}) holds for FPCS dynamical systems, bounding the deviation between two perturbed trajectories. 
As an application example, let $x(t)$ be a trajectory of a finite-partition control system with exogenous input $a(t)$, and let $\ptraj(t)$ be a perturbed trajectory with the same input $a(t)$ added by a small perturbation $\pert(t)$. Here, we are interested in a bound on $\Ltwo{\ptraj(t)-x(t)}$. Theorem \ref{th:main cont} provides a bound only for the case that $a(t)=\lambda$ is constant.
We conjecture that a bound of type  (\ref{eq:diss linear}) also holds for FPCS dynamical systems.
\end{openprob}

\begin{openprob} [Extension to More General Systems]
A major direction for future research is extending the input sensitivity bounds to more general classes of dynamical systems. Earlier discussions in this section, show that contractive property or subgradient fields are not sufficient conditions for (\ref{eq:diss pwc cont}) or (\ref{eq:diss linear}). In particular, we ask the question whether these bounds hold for  the class of piecewise linear dynamical systems (with all pieces being SOF), or the class of strongly contractive dynamical systems, i.e., a system where $\frac{d}{dt}\Ltwo{x(t)-y(t)}\le - \alpha \Ltwo{x(t)-y(t)}$, for all unperturbed trajectories $x(t)$ and $y(t)$, and for some uniform constant $\alpha$. 
\end{openprob} 
 
\begin{openprob}[Bounded Region]
In this paper, we dealt with dynamical systems defined over the entire $\R^n$. This is however not the case in many applications such as communication networks where the state space (i.e., queue lengths) is a subset of $\R^n$ with non-negative coordinates.  In this case, existence of boundary conditions further complicates the problem. Let $\Omega$ be the intersection of a finite number of half-spaces. For any point $x\in \Omega$ and any vector $v\in\R^n$, denote by $\pi^{\partial\Omega(x)}(v)$ the projection of $v$ on the conical hull of $\Omega-x$. Define the projected unperturbed and perturbed trajectories as:
\begin{equation}
\dot{x}(t) \in \pi^{\partial\Omega(x)}\big(F(x)\big),
\end{equation}
and \comment{see 'projected dynamical systems' in the literature}
\begin{equation}
\frac{d}{dt}\ptraj(t) \in \pi^{\partial\Omega(\ptraj)}\Big(F\big(\ptraj(t)\big) + \pert(t)\Big),
\end{equation}
respectively. We can ask the same question of  input sensitivity for these restricted dynamical systems, e.g., whether or not Theorem \ref{th:main cont}  holds for FPCS dynamical systems  restricted to $\Omega$. 
\end{openprob}

\begin{openprob} [Improved Bound]
Theorem \ref{th:main cont} establishes the existence of a constant $\divconst$ that satisfies (\ref{eq:diss pwc cont}). However, it does not provide a (polynomial) bound for $\divconst$. As such, although this theorem is a powerful tool for limit analyses, it fails to provide practical bounds for non-limit cases. 
 The question is whether there exist provable polynomial upper bounds for $\divconst$ or if one can prove exponential lower bounds.
\end{openprob}
}

\bibliography{schbib}
\ifIEEE
\bibliographystyle{ieeetr}
\else
\bibliographystyle{siamplain}
\fi

\ifOneCol
\else
\begin{IEEEbiography}[{\includegraphics[width=1in,height=1.25in,clip,keepaspectratio]{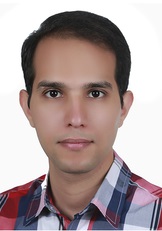}}]
{Arsalan Sharifnassab}
received a B.Sc. degree in Mathematics and also in Electrical Engineering, and an M.Sc. degree in Electrical Engineering, in 2011, 2011, and 2013 respectively, from the Sharif University of Technology, Iran, where he is currently a PhD student, again in EE.
He has been a visiting student in the Laboratory for Information and Decision Systems at MIT, over the academic year 2016-2017. His research interests include network scheduling and routing, optimization,  distributed algorithms, high-dimensional statistics, dynamic programming, and computational complexity.
\end{IEEEbiography}

\begin{IEEEbiography}[{\includegraphics[width=1in,height=1.25in,clip,keepaspectratio]{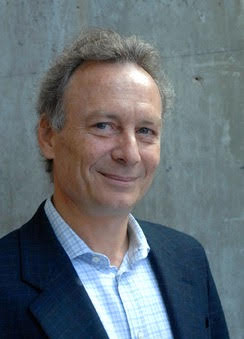}}]{John N. Tsitsiklis}
(S’80–M’81–SM’97–F’99) received the B.S. degree in mathematics and the B.S., M.S., and Ph.D. degrees in electrical engineering from the Massachusetts Institute of Technology (MIT), Cambridge, MA, USA, in 1980, 1980, 1981, and 1984, respectively. His research interests are in systems, optimization, communications, control, and operations research. He has coauthored four books and several journal papers in these areas.

He is currently a Clarence J. Lebel Professor with the Department of Electrical Engineering and Computer Science, MIT, where he serves as the director of the Laboratory for Information and Decision Systems. 
Among other distinctions, he is a member of the National Academy of Engineering, and a recipient of the ACM SIGMETRICS Achievement Award (2016), the IEEE Control Systems Award (2018), and the INFORMS von Neuman Theory Prize.
\end{IEEEbiography}

\begin{IEEEbiography}[{\includegraphics[width=1in,height=1.25in,clip,keepaspectratio]{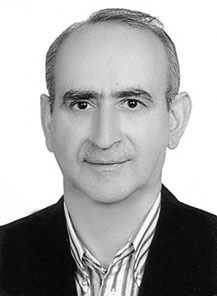}}]
{S. Jamaloddin Golestani} (M'85-SM'95-F'00) 
S. Jamaloddin Golestani (M’85-SM’95-F’00) was born in Iran in 1955. He completed his BSc in Electrical Engineering at the Sharif University of Technology in 1973, and his SM and PhD in Electrical Engineering and Computer Science at the Massachusetts Institute of Technology (MIT) in 1976 and 1979, respectively. 

Besides 21 years of academic work in Iran (1980 to 1988, and 2005 to present), he has spent 17 years (1988 to 2005) at Bell Communications Research, Morristown, NJ, and Bell Laboratories, Murray Hill, NJ. Since 2009, he has been with the Department of Electrical Engineering, Sharif University of Technology, Tehran, Iran. 

Dr. Golestani is the originator of the optimization approach to network flow control (1979) and the inventor of the self-clocked fair queueing (1994). He served on the editorial board of the IEEE/ACM Transactions on Networking from 1999 to 2001. He is the recipient of the IEEE Fellow award in 2000 with the citation ``for contributions to the theory of congestion control and provision of fairness and guaranteed services in packet networks''. His current research interests are network control and optimization, distributed computations, scheduling algorithms, and wireless networks. 
\end{IEEEbiography}
\fi


\ifOneCol
\newpage
\else
\medskip
\medskip
\medskip
\medskip
\medskip
\fi

\ifIEEE
\appendices
\section{\bf Proof of Lemma \ref{lem:cp}}\label{app:proof cp}
\else
\appendix
\section{\bf Proof of Lemma \ref{lem:cp}}\label{app:proof cp}
\fi
\begin{proof}[Proof of Lemma \ref{lem-cp: finite number}]
We fix some $\nu$ and  $p\in R_{\nu}=\{x \mid -\nu ^T x +b_{\nu} \geq -\mu ^T x+ b_{\mu},\ \forall\ \mu \}$. 
Suppose that $p$ is a critical point. 
By the definition of $\maxactset(p)$, we have $ -\nu ^T x+ b_{\nu}=-\mu ^T x +b_{\mu}$ for every $\mu\in \maxactset(p)$, i.e., these constraints are all active at $p$. 
Furthermore, by the definition of critical points, 
the vectors $\{\mu-\mu'\mid \mu,\mu'\in  \maxactset(p)\}$ span $\R^n$. It is not hard to see that this implies that the vectors
$\{\mu-\nu\mid \mu \in  \maxactset(p)\}$ also span $\R^n$, so that $n$ of them are linearly independent.
Using linear programming terminology, out of the constraints that define $R_{\nu}$, there are $n$ linearly independent active constraints at $p$, and 
$p$ is a ``basic feasible solution'' in $R_{\nu}$. This  is
equivalent to $p$ being an extreme point of $R_{\nu}$; cf.\ Theorem 2.3 in \cite{BertT97}.

For the converse implication, suppose that $p$ is an extreme point of $R_{\nu}$. Using again Theorem 2.3 in \cite{BertT97}, $n$ of the 
vectors $\mu-\nu$, associated with active constraints at $p$ (i.e., with $\mu\in 
\maxactset(p)$) are 
 linearly independent. It follows that the vectors $\mu-\mu'$, for $\mu,\mu'\in \maxactset(p)$ span $\R^n$, and (by the definition), $p$ is a critical point.
\end{proof}

\begin{proof}[Proof of Lemma \ref{lem-cp: basin straight movement}]
In order to draw a contradiction, consider a time $t>0$ where $z(t)$ is in the basin and $\dot{z}(t)\ne \dir$. It follows from Lemma \ref{lem:decreasing drift size} that $\Ltwo{\dot{z}(t)}<\Ltwo{\dir}$. Hence, $ \dir^T\dot{z}(t) \le \Ltwo{\dir} \cdot\Ltwo{\dot{z}(t)} < \Ltwo{\dir}^2$, which  contradicts  the definition of a basin.
\end{proof}

\medskip
\begin{proof}[Proof of Lemma \ref{lem-cp: basin infinite rad}]
We assume that the set $\CP$ of critical points is non-empty. 
We will first show that there exists a critical point $p$ such that $\Ltwo{\dir(p)}\le \Ltwo{\dir(x)}$, for all $x\in\R^n$. We will then show that $\R^n$ is a basin for 
this particular $p$.

Since there exists a critical point,  Part (a) implies that some $\region_{\nu}$ has an extreme point. 
Using linear programming theory (cf.~Theorem 2.6 in \cite{BertT97}) it follows that all of the non-empty regions $\region_{\mu}$ also have extreme points\footnote{This is because the regions are defined in terms of constraints $a^Tx\leq  b$ or $a^T x\geq b$, where each $a$ is of the form $a=\mu-\mu'$, for some $\mu,\mu'$; different regions correspond to different choices in the direction of the inequalities, but the vectors $a$ are the same or every region.}.

Consider some $x$ in some region $\region_{\nu}$. Let $x'$ be an extreme point of that region, chosen so that all constraints that were active at $x$ are also active at $x'$. (This can be done by moving inside $\region_{\nu}$ while respecting active constraints, until additional constraints are made active, exactly as in the proof of Theorem 2.6 in \cite{BertT97}.) The resulting extreme point $x'$ satisfies 
$\maxactset(x') \supseteq \maxactset(x)$. Since $F(x)$ is the convex hull of 
$\maxactset(x)$, it follows that $F(x') \supseteq F(x)$. From 
Lemma \ref{lem:min norm d+}, $\dir(x)$ is the minimum norm element of $F(x)$, which implies that $\Ltwo{\dir(x')}\le \Ltwo{\dir(x)}$. We conclude that when we minimize the function $\Ltwo{\dir(x)}$ over all $x\in\R^n$, it suffices to restrict to the (finite) set of extreme points of the different regions, or equivalently the set of critical points (cf. Part (a)). This concludes the proof  
that there exists a critical point, $p$, such that $\Ltwo{\dir(p)}\le \Ltwo{\dir(x)}$, 
for all $x\in\R^n$. Let $\dir^*=\dir(p)$.
 
We now proceed to show that $\R^n$ is a basin of $p$. Let $z(t)$ be an unperturbed trajectory with initial point $z(0)=p$. Similar to the proof of Part (b), if for some $t>0$, $\dot{z}(t) \ne \dir^*$, then it follows from Lemma \ref{lem:decreasing drift size} that $\Ltwo{\dir\big(z(t)\big)} = \Ltwo{\dot{z}(t)}<\Ltwo{\dot{z}(0)}=\Ltwo{\dir^*}$, which contradicts  the definition of $\dir^*$. Hence, for any $t\geq 0$, $z(t)=p+t\dir^*$,  
which implies that $-\dir^*\in  \partial \Phi \big(p+t\dir^*\big)$, where $\Phi$ is the convex function for which $F$ is the subdifferential.

For any $x\in\R^n$, 
let $\tilde{\Phi}(x) = \Phi(x)+t{\xi^*}^T(x-p)$, 
so that $\partial \tilde{\Phi}(x) = \partial \Phi(x) +\xi^*$.
Since $-\xi^*\in \partial {\Phi}(p+\dir^*)$, 
we have $0\in\partial \tilde{\Phi}\big(p+t{\xi^*}\big)$, which implies that $p+t{\xi^*}$ is a minimizer of $\tilde{\Phi}$. Consider an $x \in \R^n$ and a  $y \in F(x)$. Then, $y-\xi^*\in -\partial\tilde{\Phi}(x)$. It follows from the supporting hyperplane theorem that, for any $t\geq 0$, $(\xi^*-y)^T (p+t{\xi^*}-x) \le \tilde{\Phi}(p+t\xi^*) - \tilde{\Phi}(x)\le 0$, where the last inequality is because $p+t{\xi^*}$ is a minimizer of $\tilde{\Phi}$. 
Then, by letting $t$ go to infinity, we obtain
\begin{equation}
  \big(\dir^*-y \big)^T  {\dir^*}= \lim_{t\to\infty}  \frac{1}{t}\big(\dir^*-y \big)^T\big(p+t\dir^* -x\big) \le 0.
\end{equation}
Hence, $\Ltwo{{\dir^*}}^2\le {\dir^*}^T y$, 
which shows that $\R^n$ is a basin of $p$.
In the special case where $F$ is conic and has a critical point, then this is the only critical point and therefore has $R^n$ for a basin.
\end{proof}

\begin{proof}[Proof of Lemma \ref{lem-cp: basin cnc}]
Consider a critical point $p\in\CP$, and let
\begin{equation}\label{eq:def of phi tilde as the conic sys near cp}
\tilde{\Phi} (x) = \max_{\mu\in\maxactset(p)} \,\big\{-\mu^T\big(x-p\big)\big\},
\qquad \forall \ x\in\R^n.
\end{equation}
Hence, the dynamical system $\dot{x}\in \tilde{F}(x)\triangleq -\partial \tilde{\Phi}(x)$ is conic. 
Since the vectors $\big\{\mu-\mu' \mid \mu\in \maxactset(p)\big\}$ span $\R^n$, it follows that $p$ is also a critical point of the system $\dot{x}\in \tilde{F}(x)$. 
Lemma \ref{lem-cp: basin infinite rad} then implies that the entire set $\R^n$ is a basin for $p$, for the system $\dot{x}\in \tilde{F}(x)$. 

Let $\ball$ be  the ball of radius $\basinmin$ centred at $p$, where $\basinmin$ is the CNC. By the definition of the CNC, if $\ball\bigcap\region_\mu$ is non-empty for some $\mu\in\actionset$, then $p\in \region_\mu$. Hence, for any $x\in\ball$,  we must have  $\maxactset(x)\subseteq\maxactset(p)$. 
Therefore, for any $x\in\ball$,
\begin{equation}
\begin{split}
\Phi(x) &=  \max_{\mu\in\actionset} \big({-}\mu^Tx+b_\mu\big)\\
& =  \max_{\mu\in\maxactset(p)} \big({-}\mu^Tx+b_\mu\big)\\
& =  \max_{\mu\in\maxactset(p)} \Big({-}\mu^T\big(x-p\big) \,{-}\, \mu^Tp+b_\mu\Big)\\
&= \Phi(p)\,+\,\max_{\mu\in\maxactset(p)} \Big({-}\mu^T\big(x-p\big)\Big) \\
&= \Phi(p)\,+\, \tilde{\Phi} (x).
\end{split}
\end{equation}
where the second equality is because the set $\maxactset(x)$ of maximizers of $-\mu^Tx+b_\mu$  is a subset of $\maxactset(p)$.
Hence, for any $x\in\ball$, $F(x) = \tilde{F}(x)$. 
As a result,  for $x\in \ball$, $\dir(x)$ for the system $\dot{x}\in {F}(x)$ is equal to $\dir(x)$ for the system $\dot{x}\in \tilde{F}(x)$. 
Since $\R^n$ is a basin of $p$ for the system $\dot{x}\in \tilde{F}(x)$, it follows that for any $x\in \ball$ and any $y\in F(x) = \tilde{F}(x)$, we have $y^T\dir(p)\ge \Ltwo{\dir(p)}$. Hence, $\ball$ is a basin for the system $\dot{x}\in {F}(x)$.
\end{proof}

\begin{proof}[Proof of Lemma \ref{lem-cp: no revisit}]
The result will be derived by comparing the trajectory $x(t)$ of interest to another unperturbed trajectory, $z(t)$, initialized with $z(t_1)=p$. 
According to the non-expansive property of the dynamics, we have $\Ltwo{x(t)-z(t)}\le\Ltwo{x(t_1)-z(t_1)}\le {\basin}/{3}$, for every $t\ge t_1$. Hence,
\begin{equation*}
\Ltwo{z(t_2)-p} \,\ge\, \Ltwo{x(t_2)-p} - \Ltwo{x(t_2)-z(t_2)} \,>\, \basin - \frac{\basin}{3} \,=\, 2\basin/3.
\end{equation*}
In order to draw a contradiction, suppose that there is a time  $t_3>t_2$ such that $\Ltwo{x(t_3)-p}\le{\basin}/{3}$. In this case,
\begin{equation*} 
\Ltwo{z(t_3)-p} \,\le\, \Ltwo{z(t_3)-x(t_3)}  + \Ltwo{x(t_3)-p} \,\le\, \frac\basin3 + \frac{\basin}{3} \,=\, \frac23\basin.
\end{equation*}
Hence,  $z(t_3)$ is in the basin of $p$, which implies that $\dir(p)^T \dir\big(z(t_3)\big)\ge \Ltwo{\dir(p)}$ and
\begin{equation}\label{eq:speed of a return to basin is larger}
\Ltwo{\dir\big(z(t_3)\big)} \ge \Ltwo{\dir(p)}.
\end{equation}

The trajectory $z(t)$ starts inside the ${2\basin}/3$-neighbourhood of $p$ at time $t_1$, leaves this neighbourhood before time $t_2$, and returns back to it by time $t_3$. Since the ${2\basin}/3$-neighbourhood is convex, $z(t)$ must have changed its direction in the meanwhile, and there exists a time ${t'}\in(t_1,t_3)$ such that $\dot{z}({t'})\ne \dot{z}(t_1)=\dir(p)$.  Then, using Lemma \ref{lem:decreasing drift size},
\begin{equation} \label{eq:speed of a return to basin is smaller}
\Ltwo{\dir\big(z(t_3)\big)} \, \le \, \Ltwo{\dot{z}(t')} \, < \, \Ltwo{\dot{z}(t_1)} \, = \, \Ltwo{\dir(p)}.
\end{equation}
This contradicts (\ref{eq:speed of a return to basin is larger}) and concludes the proof.
\end{proof}

\begin{proof}[Proof of Lemma \ref{lem-cp: F' vs F}]   
For every drift $\mu\in \actionset$ of $F$, $\mu+\lambda$ is  a drift of $F'$. The associated effective region $\region'_{\mu+\lambda}$ of $F'$ is given by $\region'_{\mu+\lambda}=\big\{x\in\R^n\,\big| \, -(\mu+\lambda)^Tx + b_\mu\ge -(\nu+\lambda)^Tx + b_\nu,\, \forall \nu\in \actionset    \big\}=\big\{x\in\R^n\,\big| \, -\mu^Tx + b_\mu\ge -\nu^Tx + b_\nu,\, \forall \nu\in \actionset    \big\}=\region_\mu$.
Hence, the regions associated with $F$ and $F'$ are the same. Consider a point $p\in\R^n $ and let $\maxactset'(p)=\maxactset(p)+\lambda$ be the set of active drifts of $p$ in system $F'$. The affine span of $\maxactset'(p)$ is  $\R^n$ if and only if the affine span of $\maxactset(p)$ is $\R^n$. Hence, $p$ is a critical point for the system $F'$ if and only if it is a critical point for the system $F$. Finally, by the definition of the CNC, since $F$ and $F'$ have the same set of regions and the same set of critical points, they also have the same CNC.
\end{proof}


{

\newpage
\setcounter{page}{1}

\begin{center}
\Large \bf Supplementary Material
\end{center}

\ifIEEE
\section{\bf Proof of Lemma \ref{lem:distance from cp}} \label{app:proof gamma}
\else
\section{\bf Proof of Lemma \ref{lem:distance from cp}} \label{app:proof gamma}
\fi
We provide here the proof of Lemma \ref{lem:distance from cp}. We will make use of an auxiliary result,  proved in \cite{MannT05}, which states that if a point is close to each of several half-spaces, then that point is also close to the intersection of those half-spaces.
\begin{lemma}[\cite{MannT05}, Lemma 5.1]\label{lem:known lemma! distance from intersection of half-spaces}
Given a finite collection  of half-spaces 
$W_i\subset \R^n$, with non-empty intersection, there exists a finite constant $c>0$ such that 
\begin{equation}
d\Big(x\,,\,\bigcap_i W_i\Big) \leq c \cdot	 \max_i d\big(x\,,\,W_i\big),
\qquad \forall\ x\in \R^n.
\end{equation}
\end{lemma}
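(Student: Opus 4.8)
The plan is to recognize this as a form of Hoffman's error bound for systems of linear inequalities, and to prove it directly by combining the optimality condition for Euclidean projection with a Carath\'eodory/linear-algebra argument that renders the constant uniform in $x$. Write each half-space as $W_i = \{y \in \R^n \mid a_i^T y \le b_i\}$ and normalize so that $\Ltwo{a_i} = 1$; then $d(x, W_i) = (a_i^T x - b_i)^+$. Put $P = \bigcap_i W_i$ and $\delta(x) = \max_i d(x, W_i)$. The goal is a constant $c$, depending only on the $a_i$, with $d(x, P) \le c\,\delta(x)$ for all $x$. If $x \in P$ both sides vanish, so fix $x \notin P$ and let $\bar{x}$ be its unique Euclidean projection onto the nonempty closed convex set $P$.

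First I would use the variational characterization of the projection: $x - \bar{x}$ lies in the normal cone to $P$ at $\bar{x}$, which for a polyhedron is the conical hull of the normals of the constraints active at $\bar{x}$. Thus $x - \bar{x} = \sum_{i \in S} \lambda_i a_i$ with all $\lambda_i \ge 0$, where $S = \{i \mid a_i^T \bar{x} = b_i\}$. By Carath\'eodory's theorem for cones I may discard terms and assume that the normals $\{a_i\}_{i \in S'}$ appearing with $\lambda_i > 0$ are linearly independent, where $S' \subseteq S$; note that $S'$ ranges over the finite list of index subsets for which $\{a_i\}_{i \in S'}$ is linearly independent.

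Next comes the key estimate. Since $a_i^T \bar{x} = b_i$ for $i \in S'$, we have $a_i^T(x - \bar{x}) = a_i^T x - b_i \le \delta(x)$, and therefore
\[
\Ltwo{x - \bar{x}}^2 = \sum_{i \in S'} \lambda_i\, a_i^T (x - \bar{x}) \;\le\; \delta(x) \sum_{i \in S'} \lambda_i .
\]
To close the loop I bound $\sum_{i \in S'} \lambda_i$ by $\Ltwo{x - \bar{x}}$: letting $A_{S'}$ be the full-row-rank matrix with rows $a_i^T$, $i \in S'$, the identity $A_{S'}^T \lambda = x - \bar{x}$ gives $\lambda = (A_{S'} A_{S'}^T)^{-1} A_{S'} (x - \bar{x})$, whence $\sum_{i \in S'} \lambda_i \le \sqrt{|S'|}\,\big\lVert (A_{S'} A_{S'}^T)^{-1} A_{S'} \big\rVert\, \Ltwo{x - \bar{x}} =: c_{S'}\, \Ltwo{x - \bar{x}}$. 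Substituting and cancelling one factor of $\Ltwo{x - \bar{x}} > 0$ yields $d(x, P) = \Ltwo{x - \bar{x}} \le c_{S'}\, \delta(x)$. I then set $c = \max\big(1,\ \max_{S'} c_{S'}\big)$ over the finitely many eligible subsets $S'$; this $c$ is finite, positive, and independent of $x$, which is the claimed bound.

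The main obstacle is the step that passes from an arbitrary nonnegative representation of $x - \bar{x}$ in terms of active normals to one whose support is linearly independent — the Carath\'eodory reduction — since this is exactly what makes the constant depend only on a finite list of index subsets rather than on the particular point $x$. Everything else is routine: the projection optimality condition, one Cauchy--Schwarz-type inequality, and the invertibility of the Gram matrix of a set of linearly independent vectors.
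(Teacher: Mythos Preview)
Your proof is correct. It is the classical argument for Hoffman's error bound: projection onto the polyhedron, the KKT/normal-cone representation of $x-\bar{x}$ in terms of active constraint normals, Carath\'eodory reduction to a linearly independent subfamily, and then the Gram-matrix bound to control $\sum_i \lambda_i$ uniformly over the finitely many eligible index sets. Each step is sound; in particular, after the Carath\'eodory reduction you retain $S'\subseteq S$, so the identities $a_i^T\bar{x}=b_i$ still hold on $S'$, and the inequality $a_i^T(x-\bar{x})=a_i^Tx-b_i\le\delta(x)$ follows from $\Ltwo{a_i}=1$.

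As for comparison with the paper: there is nothing to compare. The paper does not prove this lemma at all; it merely quotes it from \cite{MannT05} (Lemma~5.1) as an auxiliary input to the proof of Lemma~\ref{lem:distance from cp}. Your write-up therefore supplies a self-contained proof where the paper outsources one, and the route you take is the standard one.
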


\begin{proof}[Proof of Lemma \ref{lem:distance from cp}]

For any $x\in \R^n$, let $r(x)=\sup\big\{r:\,\ldset_r\textrm{ is low-dimensional}\big\}$. By definition, if $x$ is not a critical point, then $r(x)>0$, and if $r\ge r(x)$, then $\ldset_r$ is not low-dimensional. We will show that
\begin{equation} \label{eq:gamma bar is positive}
\bar{\gamma}\triangleq \inf_{x\not\in\CP} \, \frac{r(x)}{d\big(x,\CP\big)}\,> \, 0.
\end{equation}
In order to draw a contradiction, suppose that there exists a sequence of points $y_k\in\R^n\backslash \CP$  such that 
\begin{equation}\label{eq:absurd hypothesis on yk go to 0}
\frac{r(y_k)}{d\big(y_k,\CP\big)}\xrightarrow[\,k\to\infty\,]{} 0.
\end{equation}
Since $\ldset_{r(y_k)}(y_k)$ is not low-dimensional, there exist $n+1$ drifts $\mu_1,\ldots,\mu_{n+1}\in \ldset_{r(y_k)}(y_k)$ such that 
\begin{equation}\label{eq:span of r neigh regions is n assumption}
 \mathrm{span }\big\{ \mu_i-\mu_j \, \big| \, i,j\le n+1 \big\} =\R^n.
\end{equation}
Because the set $\actionset$ of all drifts is finite, there exists an infinite subsequence $\big\{x_k\big\}$ of $\big\{y_k\big\}$ for which (\ref{eq:span of r neigh regions is n assumption}) holds for the same set of drifts. We fix this set of drifts $\big\{\mu_i\big\}_{i=1}^{n+1}$. Then, for any $k$, $\big\{\mu_i\big\}_{i=1}^{n+1}\subseteq \ldset_{r(x_k)}(x_k)$. It follows from the definition of $r(x)$ that for any $k$,
\begin{equation}\label{eq:rxk is the max of dist from Rj}
r(x_k)=\max_{i\le n+1 } d\big(x_k\,,\, \region_i\big),
\end{equation}
where $\region_i=\region_{\mu_i}$ is the effective region of $\mu_i$.
We define $n(n+1)$ half-spaces $W_{i,j}$ as follows. For any $i,j \le n+1$ with $i\ne j$, let
\begin{equation}
W_{i,j}\triangleq \left\{ x\in\R^n\,\,\Big|\,\, -(\mu_i-\mu_j)^Tx\,+\,b_i-b_j\,\ge\,0     \right\},
\end{equation}
where  $b_i$ is a shorthand for $b_{\mu_i}$.
Then, for any $i\le  n+1$,
\begin{equation}
\region_i \subseteq \bigcap_{j\ne i} W_{i,j}.
\end{equation}
Hence, for any $i\le n+1$ and any $x\in\R^n$, $d\big(x,\region_i\big)\ge \max_{j\le n+1} d\big(x,W_{i,j}\big)$. Then, it follows from (\ref{eq:rxk is the max of dist from Rj}) that for any $k\ge 1$,
\begin{equation}\label{eq:r(x) le max dist from wijs}
r(x_k)\,=\,\max_{i\le n+1 } d\big(x_k\,,\, \region_i\big)\, \ge \, \max_{\substack{i,j\le n+1\\ i\ne j}} d\big(x,W_{i,j}\big)
\end{equation}

It follows from (\ref{eq:span of r neigh regions is n assumption}) that the following system of $n$ linear equations is non-degenerate: 
\begin{equation}
-(\mu_i-\mu_{n+1})^Tx\,+\,b_i-b_{n+1}\,=\,0, \quad i=1,\ldots,n.
\end{equation}
Hence, it has a unique solution, which we denote by $p$. 
Note that $W_{i,j}$ and $W_{j,i}$ are different, and their intersection is $\big\{  x\,\,\big|\,\, -(\mu_i-\mu_j)^Tx\,+\,b_i-b_j\,=\,0\big\}$. Therefore,
\begin{equation}\label{eq:p is the intersec of wijs}
\{p\}=\bigcap_{\substack{i,j\le n+1\\ i\ne j}} W_{i,j}.
\end{equation}
It follows from Lemma \ref{lem:known lemma! distance from intersection of half-spaces}, with $\delta=1/c$, that there exists a constant $\delta>0$ such that for any $x\in\R^n$,
\begin{equation} \label{eq:max dist from wij is less than c times dist of xk and p}
\max_{\substack{i,j\le n+1\\ i\ne j}} d\big(x,W_{i,j}\big) \,\ge\,  \delta\, d\Big(x,\bigcap_{\substack{i,j\le n+1\\ i\ne j}} W_{i,j}\Big) \,=\, \delta\, d\big(x,p\big).
\end{equation}
Combining  (\ref{eq:r(x) le max dist from wijs})  and (\ref{eq:max dist from wij is less than c times dist of xk and p}), we have for any $k$,
\begin{equation}\label{eq:rxk less than c dist of xk and p}
r(x_k) \,\ge\, \max_{\substack{i,j\le n+1\\ i\ne j}} d\big(x_k,W_{i,j}\big) \,\ge\,  
\delta\, d\big(x_k,p\big).
\end{equation}

Back to the hypothesis (\ref{eq:absurd hypothesis on yk go to 0}), there are two possible cases: (a) $\big\{x_k\big\}$ has a subsequence $\big\{z_k\big\}$ with $d\big(z_k,\CP\big)\to \infty$, or (b) $x_k$ has a subsequence $z_k$ with $r(z_k)\to 0$.

In the first case, where $d\big(z_k,\CP\big)\to \infty$, it follows from (\ref{eq:rxk less than c dist of xk and p}) that
\begin{equation}
\begin{split}
\lim_{k\to\infty} \frac{r(z_k)}{d\big( z_k,\CP \big)}   &\ge \delta \lim_{k\to\infty} \frac{d\big( z_k,p \big)}{d\big( z_k,\CP \big)}\\
&\ge \delta  \lim_{k\to\infty} \frac{d\big( z_k,\CP \big) - d\big( p,\CP \big)}{d\big( z_k,\CP \big)}\\
&=\delta\,>\,0,
\end{split}
\end{equation}
which contradicts (\ref{eq:absurd hypothesis on yk go to 0}).

In the second case, where $r(z_k)\to 0$,  it follows from (\ref{eq:rxk less than c dist of xk and p}) and (\ref{eq:rxk is the max of dist from Rj}) that for any $i\le n+1$,
\begin{equation}
d\big(p,\region_i \big) \,\le\, d\big(p, z_k\big) + d\big(z_k,\region_i \big) \, \le\, \frac{r(z_k)}{\delta} + r(z_k)\, \xrightarrow[\,k\to\infty\,]{} 0.
\end{equation}
Then, since each $\region_i$ is a closed set, we must have $p\in\region_i$. Hence, $p\in\bigcap_{i\le n+1}\region_i$ which together with (\ref{eq:span of r neigh regions is n assumption}) implies that $p$ is a critical point. Using this fact, and then (\ref{eq:rxk less than c dist of xk and p}), we obtain
\begin{equation}
\lim_{k\to\infty} \frac{r(z_k)}{d\big( z_k,\CP \big)} \ge \lim_{k\to\infty} \frac{r(z_k)}{d\big( z_k,p \big)} \ge \delta > 0,
\end{equation}
which again contradicts (\ref{eq:absurd hypothesis on yk go to 0}). Hence,  (\ref{eq:absurd hypothesis on yk go to 0}) is contradicted in both cases, and (\ref{eq:gamma bar is positive}) follows.

Let $\gamma=\max\{1,\,{1}/{\bar{\gamma}}\}$. 
It follows from (\ref{eq:gamma bar is positive}) that $\gamma r(x)\ge d\big(x,\CP\big)$, for all $x\not\in\CP$. Hence, if $\gamma r < d\big(x,\CP\big)$, then $r<r(x)$, and by the definition of $r(x)$, $\ldset_r(x)$ is low-dimensional.
\end{proof}

}

\end{document}